	\providecommand\BibTeX{{%
			\normalfont B\kern-0.5em{\scshape i\kern-0.25em b}\kern-0.8em\TeX}}}
\renewcommand{\timeout}[2]{\lfloor #1 \rfloor #2  }
\renewcommand{\tick}{{\scriptstyle \mathsf{tick}}}
\newcommand{\maxa}{{\scriptstyle \mathsf{maxa}}}
\newcommand{\fineC}{{\scriptstyle \mathsf{end}}}
\newif\ifdraft\drafttrue
\newcommand{\ntrans}[1]{\mathrel{{\trans{#1}}\makebox[0em][r]{$\not$\hspace{2ex}}}{\!}}
\newcommand{\on}{\mathsf{on}}
\newcommand{\off}{\mathsf{off}}
\newcommand{\confCPS}[2]{#1 \, {\Join} \, #2}
\newcommand{\defn}{\triangleq}
\newcommand\restrict[1]{\raise-.5ex\hbox{\ensuremath|}_{#1}}
\renewcommand{\xRightarrow}[2][]{\ext@arrow 0359\Rightarrowfill@{#1}{#2}}
\newcommand{\xRightarrowBis}[2][]{\ext@arrow 0359\Rightarrowfill@{#1}{#2}}
\newcommand{\Edit}{\mathsf{E}}
\newcommand{\Fdit}{\mathsf{F}}
\definecolor{darkred}{RGB}{128,0,0}
\definecolor{darkgreen}{RGB}{0,128,0}
\definecolor{lightgreen}{RGB}{224,255,224}
\newcommand{\funEdit}[1]{ \langle \!| \ #1\ |\! \rangle}
\newcommand{\ActSet}{\mathsf{Act}}
\newcommand{\SensSet}{\mathsf{Sens}}
\newcommand{\ChanSet}{\mathsf{Chn}}
\newcommand{\insertE}[2]{\eact{#2}{#1}}
\newcommand{\allowE}[1]{#1}
\newcommand{\suppressE}[1]{{^{-}{#1}}}
\newcommand{\synthES}[2]{ \langle \! | #1 | \! \rangle_{#2}}
\newcommand{\synthESP}[3]{{ \langle \! | #1 | \! \rangle_{#2}^{#3}}}
\newcommand{\funEditP}[2]{{\langle \!| \!{#1}\! | \! \rangle^{#2} }}
\newcommand{\PSet}{\mathcal{P}}
\newcommand{\Times}[3]{\mathsf{Prod}_{#3}(#1,#2)}
\newcommand{\TimesP}[3]{\mathsf{Prod}^{\mathcal P}_{#3}(#1,#2)}
\newcommand{\regSemantics}[1]{ \big\llbracket#1\big\rrbracket}
\newcommand{\CND}[2]{\mathrm{Cnd}(\mathit{#1, #2})}
\newcommand{\PCND}[3]{\mathrm{PCnd}_{\mathit{#3}}(\mathit{#1, #2})}
\newcommand{\caseCND}[1]{\mathrm{Case}{#1}}
\newcommand{\tpl}[2]{\mathrel{(#1, #2)}}
\newcommand{\BE}[2]{\mathrm{BE}_{\mathit{#2}}(\mathit{#1})}
\newcommand{\BP}[2]{\mathrm{BP}_{\mathit{#2}}(\mathit{#1})}
\newcommand{\BA}[2]{\mathrm{BA}_{\mathit{#2}}(\mathit{#1})}
\newcommand{\CBE}[4]{\mathrm{CBE}_{[\mathit{#3, #4}]}(\mathit{#1, #2})}
\newcommand{\CBP}[4]{\mathrm{CBP}_{[\mathit{#3, #4}]}(\mathit{#1, #2})}
\newcommand{\CBA}[4]{\mathrm{CBA}_{[\mathit{#3, #4}]}(\mathit{#1, #2})}
\newcommand{\MinD}[4]{\mathrm{MinD}(\mathit{#1, #2,#3, #4 })}
\newcommand{\MaxD}[4]{\mathrm{MaxD}(\mathit{#1, #2, #3, #4})}
\newcommand{\BR}[5]{\mathrm{BR}(\mathit{#1, #2, #3, #4, #5})}
\newcommand{\BI}[5]{\mathrm{BI}(\mathit{#1, #2, #3, #4, #5})}
\newcommand{\BME}[2]{\mathrel{\mathrm{BME}_{\mathit{#2}}{#1}}}
\newcommand{\sysAct}[1]{\mathit{out}(#1)}
\newcommand{\trigger}[1]{\mathit{trigger}(#1)}
\tikzset{
	>=stealth',
	punkt/.style={
		rectangle,
		rounded corners,
		draw=black, very thick,
		text width=6.5em,
		minimum height=2em,
		text centered},
	pil/.style={
		->,
		thick,
		shorten <=2pt,
		shorten >=2pt,}
}
\tikzstyle{line} = [draw, -latex,thick,
\newcommand{\Xrec}{\mathsf{X}}
\newcommand{\Zrec}{\mathsf{Z}}
\newcommand{\Events}{\mathsf{Events}}
\newcommand{\PEvents}{\mathsf{PEvents}}
\newcommand{\PUEvents}{\mathsf{PUEvents}}
\newcommand{\UEvents}{\mathsf{PUEvents}}
\newcommand{\Sys}{N}
\renewcommand{\confCPS}[2]{#1 \! \bowtie \!  {\boldsymbol{ \{ }}#2{\boldsymbol{ \} }}}
\def\eact#1#2{#2 \prec #1}
\newcommand{\subalign}[1]{%
	\vcenter{%
		\Let@ \restore@math@cr \default@tag
		\baselineskip\fontdimen10 \scriptfont\tw@
		\advance\baselineskip\fontdimen12 \scriptfont\tw@
		\lineskip\thr@@\fontdimen8 \scriptfont\thr@@
		\lineskiplimit\lineskip
		\ialign{\hfil$\m@th\scriptstyle##$&$\m@th\scriptstyle{}##$\crcr
			#1\crcr
		}%
	}
}
\renewcommand{\on}[1]{ {\mathsf{\scriptstyle on_{#1}}}  }
\renewcommand{\off}[1]{ {\mathsf{\scriptstyle off_{#1}}}  }
\newcommand{\open}{ {\mathsf{\scriptstyle open}}  }
\newcommand{\close}{ {\mathsf{\scriptstyle close}}  }
\newcommand{\ask}[1]{ {\mathsf{\scriptstyle {#1}{\_}req}}  }
\newcommand{\Fill}{ \uparrow  }
\newcommand{\Empty}{ \downarrow}
\newcommand{\PropG}{\mathbbm{PropG}}
\newcommand{\PropL}{\mathbbm{PropL}}
\newcommand{\ctrlDim}[1]{ \mathsf{dim}(#1)  }
\newcommand{\propDim}[1]{ \mathsf{dim}(#1)  }
\newcommand{\events}[1]{ \mathsf{events}(#1)  }
\newtheorem{definition}{Definition}
\newtheorem{theorem}{Theorem}
\newtheorem{remark}{Remark}
\newtheorem{notation}{Notation}
\newtheorem{proposition}{Proposition}
\newtheorem{corollary}{Corollary}
\newtheorem{lemma}{Lemma}
\definecolor{BrickRed}{RGB}{204, 0, 0}
\definecolor{ForestGreen}{RGB}{0, 128, 0}
\begin{document}

	\title[Runtime  Enforcement of PLCs]{Runtime  Enforcement of Programmable Logic Controllers}
	
\author{Ruggero Lanotte}
\orcid{0000-0002-3335-234X}
\affiliation{%
	\institution{
		Universit\`a degli Studi dell'Insubria}
	\streetaddress{Dipartimento di Scienze Umane e dell'Innovazione per il Territorio, via Sant'Abbondio 12}
	\city{Como}
	\postcode{22100}
	\country{Italy}
}
\email{ruggero.lanotte@uninsubria.it}

\author{Massimo Merro}
\orcid{0000-0002-1712-7492}
\affiliation{%
	\institution{Universit\`a degli Studi di Verona}
	\streetaddress{Dipartimento di Informatica, strada Le Grazie 15}
	\city{Verona}
	\postcode{37134}
	\country{Italy}
}
\email{massimo.merro@univr.it}

\author{Andrei Munteanu}
\affiliation{%
	\institution{Universit\`a degli Studi di Verona}
	\streetaddress{Dipartimento di Informatica, strada Le Grazie 15}
	\city{Verona}
	\postcode{37134}
	\country{Italy}
}
\email{andrei.munteanu@univr.it}

\renewcommand{\shortauthors}{R. Lanotte, M. Merro, A. Munteanu}

\begin{abstract}
		 With the advent of \emph{Industry 4.0}, industrial facilities and critical infrastructures are transforming into an ecosystem of heterogeneous physical and cyber components, such as \emph{programmable logic controllers}, increasingly interconnected and therefore exposed to \emph{cyber-physical attacks}, \emph{i.e.},  security breaches in cyberspace that may adversely affect the physical processes underlying \emph{industrial control systems}. 

                      In this paper, we  propose a  \emph{formal approach} based on  \emph{runtime enforcement} to ensure specification compliance in  networks of  controllers,  possibly compromised by \emph{colluding malware} that may tamper with 
        actuator commands,  
        sensor readings,  and 
        inter-controller communications.  
        Our approach relies on  an ad-hoc sub-class of Ligatti et al.'s \emph{edit automata} to enforce controllers  represented in Hennessy and Regan's \emph{Timed Process Language}. We define a synthesis algorithm that, given an alphabet $\PSet$ of observable actions and  a 
        timed correctness property $e$,  returns a  monitor that enforces the property $e$ during the  execution of any (potentially corrupted) controller with alphabet $\PSet$, and complying with the property $e$. Our  monitors \emph{correct} and \emph{suppress} incorrect actions coming from  corrupted controllers and  \emph{emit} actions in full autonomy when the controller under scrutiny is not able to do so in a correct manner.  Besides classical requirements, such as \emph{transparency} and \emph{soundness}, the proposed enforcement enjoys   \emph{deadlock- and diverge-freedom} of monitored controllers, together with  \emph{scalability}    when dealing with networks of controllers. Finally, we test the proposed enforcement mechanism on a non-trivial case study, taken from the context of industrial water treatment systems, in which the controllers are injected with  different malware with different malicious goals. 
	\end{abstract}

\begin{CCSXML}
PPPP	
\end{CCSXML}
\ccsdesc[500]{Security and privacy~Formal security models}
\ccsdesc[500]{Security and privacy~Cyber-physical systems security}

\keywords{Runtime enforcement, control systems security, PLC malware}
\maketitle

	\section{Introduction}
	\emph{Industrial Control Systems} (ICS{s}) are physical and engineered systems whose operations are monitored, coordinated, controlled, and integrated by a computing and communication core \cite{CPS-DEF}.
        They represent the backbone of Critical Infrastructures for safety-critical applications such as electric power distribution, nuclear power production, and water supply.

        The growing connectivity and integration 
in \emph{Industry 4.0} has triggered a
        dramatic increase \nolinebreak in the number of \emph{cyber-physical attacks}~\cite{HCALTS2009} targeting ICSs, \emph{i.e.},  security breaches in cyberspace \nolinebreak  that adversely affect the physical processes.  
	%
	Some notorious examples are: (i) the \emph{Stuxnet} worm, which reprogrammed Siemens PLCs of nuclear centrifuges in  the nuclear facility of Natanz in Iran~\cite{stuxnet}; 
	(ii) the \emph{CRASHOVERRIDE} attack on the Ukrainian power grid, otherwise known as Industroyer~\cite{chrashoverride};
	(iii)  the recent \emph{TRITON/TRISIS} malware  that targeted a petrochemical plant in Saudi Arabia~\cite{Triton}. 

           One of the key components of ICSs  are \emph{Programmable Logic Controllers}, better known as PLCs. They  control mission-critical electrical hardware such as pumps or centrifuges, effectively serving as a bridge between the cyber and the physical worlds.
	PLCs  have an ad-hoc architecture to execute simple repeating processes known as \emph{scan cycles} (IEC 61131-3~\cite{61131-3}). Each scan cycle consists  of three phases: (i) reading of sensor measurements of the physical process;
        (ii) execution of the controller code to compute  how  the physical process should evolve; 
        (iii) transmission of commands 
                to the actuator devices to govern the physical process as desired.

	Due to their sensitive role in controlling industrial processes, successful exploitation of  PLCs can have severe consequences on ICSs. In fact, although modern controllers provide security mechanisms to allow only legitimate firmware to be uploaded, the running code can be typically altered by anyone with network or USB access to the controllers (see Figure~\ref{fig:Sys}). 	Published scan data shows how thousands of  PLCs are directly accessible from the Internet to improve efficiency~\cite{Radvanovsky2013}. 	Thus, despite their responsibility, controllers   
	 are vulnerable to several kinds of attacks, including PLC-Blaster worm~\cite{BLACKHAT2016},  Ladder Logic Bombs~\cite{Ladder-logic-bombs}, and  PLC PIN Control attacks~\cite{Abbasi2016}. 

	\begin{figure}[t]
		\centering
		\includegraphics[width=8.5cm,keepaspectratio=true,angle=0]{./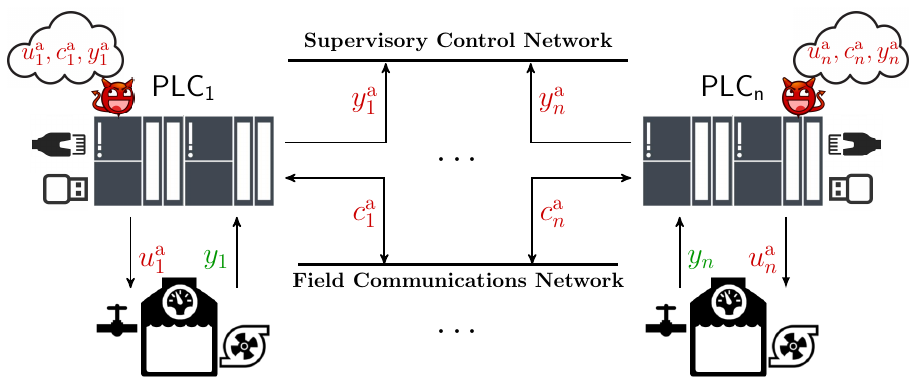}
		\caption{A  network of compromised PLCs:  \textcolor{ForestGreen}{$y_i$}  denote genuine sensor measurements,  \textcolor{BrickRed}{$y_i^{\mathrm a}$} are corrupted sensor measurements, \textcolor{BrickRed}{$u_i^{\mathrm a}$}  corrupted actuator commands, and \textcolor{BrickRed}{$c_i^{\mathrm a}$} denote corrupted inter-controller communications.}
		\label{fig:Sys}
	\end{figure}



        Extra \emph{trusted hardware components} have been  proposed to enhance the security  of  PLC architectures~\cite{McLaughlin-ACSAC2013,Mohan-HiCONS2013}. 
	For instance, McLaughlin~\cite{McLaughlin-ACSAC2013} proposed  a policy-based \emph{enforcement mechanism}  to mediate  the actuator commands transmitted by the PLC to the physical  plant.
        Mohan  et al.~\cite{Mohan-HiCONS2013} introduced a different architecture,   in which every PLC runs under the scrutiny  of a 
        \emph{monitor} which looks for deviations with respect to safe behaviours.
	Both  architectures  have been validated by means of simulation-based techniques. However, as far as we know, \emph{formal methodologies}  have  been
   {\color{black} rarely}     used to model and formally verify security-oriented architectures for ICSs.

\emph{Runtime enforcement} \cite{Schneider2000,Ligatti2005,Falcone3} 
is a  formal verification/validation technique
aiming at correcting possibly-incorrect executions  of a system-under-scrutiny (SuS). It employs a kind of monitor~\cite{Franc21} 
	that acts as a \emph{proxy}  between the SuS  and the environment interacting with it. 
	At runtime, the monitor \emph{transforms} any incorrect executions exhibited by the SuS into correct ones by either \emph{replacing}, \emph{suppressing} or \emph{inserting}  observable actions on behalf of the system.
	%
	%
	The  effectiveness of the enforcement depends on the achievement of the two following general principles~\cite{Schneider2000,Ligatti2005}:
	\begin{itemize}
		\item 
		\emph{transparency}, \emph{i.e.}, the enforcement  must not alter correct executions of the SuS; 
		\item \emph{soundness}, \emph{i.e.},   incorrect executions of the SuS must be prevented. 
	\end{itemize}

           In this paper, we  propose a  \emph{formal approach} based on \nolinebreak \emph{runtime enforcement} to ensure specification compliance in  networks of  controllers  possibly compromised by \emph{colluding malware} that may tamper with 
        actuator commands,  
        sensor readings,  and 
        inter-controller communications. 
        combined with \emph{automatic recovery mechanisms}.

	Our \emph{goal} is
        to enforce 
        potentially corrupted controllers  using
        \emph{secure proxies} based on a sub-class of Ligatti et al.'s edit automata~\cite{Ligatti2005}. These automata  will be \emph{synthesised} from enforceable \emph{timed correctness  properties} to form networks of \emph{monitored controllers}, as in Figure~\ref{fig:Sys-structure}. The proposed  enforcement will  enjoy both 
        transparency and soundness together  with the following features:
	\begin{itemize} 
		\item \emph{determinism preservation}, \emph{i.e.}, the enforcement should not introduce \emph{nondeterminism}; 
		\item \emph{deadlock-freedom},  \emph{i.e.}, the enforcement should not introduce deadlocks; 
                \item \emph{divergence-freedom}, \emph{i.e.}, the enforcement should not introduce divergencies; 
	        \item \emph{mitigation} of incorrect/malicious activities;
		\item \emph{scalability}, \emph{i.e.}, the enforcement mechanism should scale to networks of controllers. 
	\end{itemize}
	Obviously,   when a controller is compromised,  these  objectives can be achieved only with the introduction of a physically independent \emph{secure proxy}, as advocated by McLaughlin  and  Mohan et al.~\cite{McLaughlin-ACSAC2013,Mohan-HiCONS2013},  which does not have any Internet or USB access, and which is connected with the monitored controller via  \emph{secure channels}.
	This may seem like we just moved the problem over to securing the proxy. However,  this is not the case because the proxy only needs to enforce a  \emph{timed correctness  property} of the system, while the controller does the whole job of controlling the physical process relying on potentially dangerous communications via the Internet or the USB ports. Thus, any upgrade of the control system will be made to the controller and not to the secure proxy. Of course, by no means runtime reconfigurations of the secure proxy should be allowed 
        {\color{black} as its enforcing should be based on the  physics of the plant itself and not on the controller code. 
          }

	\begin{figure}[t]
		\centering
		\includegraphics[width=8.5cm,keepaspectratio=true,angle=0]{./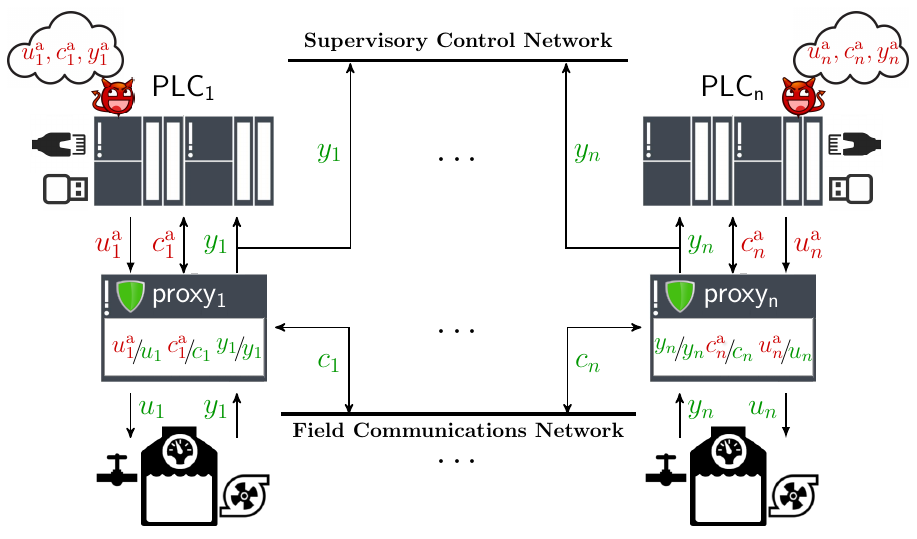}
		\caption{A  network of monitored controllers.}
		\label{fig:Sys-structure}
	\end{figure}

	\vspace*{-1mm}
	\paragraph*{Contribution}
	{\color{black}Fist of all, we define the attacker model and the attacker objectives in an enforced ICS architecture such as that depicted in Figure~\ref{fig:Sys-structure}. } Then, we introduce a formal language to specify controller programs. For this very purpose, we resort to  \emph{process calculi}, a successful and widespread formal approach in \emph{concurrency theory} for representing complex systems, such as \nolinebreak mobile systems~\cite{Ambients} and cyber-physical systems~\cite{LaMeTi18}, and used in many  areas, including verification \nolinebreak of security protocols~\cite{appliedpi,spi} and  security analysis of \emph{cyber-physical attacks}~\cite{LMMV20}.
        Thus, we define \nolinebreak a  \nolinebreak simple timed process calculus, based on  Hennessy and Regan's \emph{Timed Process Language} (TPL)~\cite{HR95}, 
        for specifying controllers, finite-state enforcers, and networks of communicating \nolinebreak  monitored \nolinebreak controllers.

	Then, we define a simple description language to express \emph{timed  correctness properties} that should hold {\color{black}for a (possibly unbounded) number} 
        of scan cycles of the monitored controller. 
	This will allow us to abstract over controllers implementations, focusing on general properties which may even be shared \nolinebreak by completely different controllers. In this regard, we might resort to one of the several logics existing in the li\-te\-rature for mo\-ni\-toring timed concurrent systems, and in particular cyber-physical systems (see, \emph{e.g.},  \cite{SurveyBartocciSTL2018,toolchain}).  
                \enlargethispage{.9\baselineskip}
        {\color{black}However, the  peculiar  iterative behaviour of controllers convinced us to adopt the  sub-class of \emph{regular expressions} that can be recognised by finite automata whose cycles always contain at least one final state; this is the largest class of regular properties that  can be enforced by  finite-state Ligatti et at.'s edit automata (see  Beauquier et al.'s work~\cite{BCL13}).
          }  
        In Section~\ref{sec:logics}, we express a wide class of  correctness properties for controllers in terms of (our) regular properties.

	  After defining a formal language to describe controller properties, we  provide  a \emph{synthesis function}  $\funEditP{-}\PSet$ that, given
          an alphabet $\PSet$ of observable actions (sensor readings, actuator commands, and inter-controller communications) and 
          a deterministic regular property $e$ combining events of $\PSet$,  returns  an edit automaton $\funEditP{e}\PSet \!$. The resulting enforcement mechanism will   ensure the required features mentioned before:   transparency, soundness, determinism preservation, deadlock-freedom, divergence-freedom, mitigation and scalability. 
            Then, we propose a non-trivial case study,  taken from the context of industrial water treatment systems, and implemented as follows: (i) the physical plant is simulated in \emph{Simulink}~\cite{MATLAB}; (ii) the open source  PLCs are implemented in \emph{OpenPLC}~\cite{OpenPLC} and executed on Raspberry Pi; (iii) the enforcers run on connected FPGAs. In this setting, we test our enforcement mechanism when  injecting the PLCs with $5$ different malware, with different \nolinebreak goals.

	    \paragraph*{Outline}
            Section~\ref{sec:attacker-model} describes the attacker model and the attacker objectives. 
            Section~\ref{sec:calculus} gives a formal language  for monitored  controllers.  Section~\ref{sec:case-study} defines the  case  study. Section~\ref{sec:logics} provides a language of regular properties to express controller behaviours; it also contains a taxonomy of properties expressible in the language. 
        Section~\ref{sec:synthesis} contains the algorithm \nolinebreak to synthesise monitors from  regular properties, together with  the main results.	
  Section~\ref{sec:implementation} discusses the implementation of the case study when exposed to five different attacks.  Section~\ref{sec:related} is devoted to related work. 
        Section~\ref{sec:conclusion} draws conclusions and discusses future work. Technical proofs  can be found in the appendix. 



{\color{black}

  \section{Attacker model and attacker objectives}
  \label{sec:attacker-model}
  In the following sections, we will propose  an enforcement-based architecture for ICSs (as those depicted in Figure~\ref{fig:Sys-structure})  to counter attacks complying with the following \emph{attacker model}: 

\begin{itemize}
\item malware injected in one or more PLCs may forge/drop actuator commands, modify sensor readings coming from the plant, forge/drop inter-controller communications;
\item malware injected in different PLCs of the same field communications network may collaborate/communicate with each other  to achieve common objectives;
  \item the attacker runtime behaviour  may vary as it may depend on the received sensor signals and the communications with other PLCs; 

  \item malicious alterations of sensor signals at network level, or within the sensor devices, are not allowed (they are out of the scope of this paper);

 \item  scan cycles  must be completed within a specific time,  called \emph{maximum cycle limit}, which depends on the controlled physical process; if this time limit is violated then the controller stops and throws an exception~\cite{BLACKHAT2016}; we assume that the injected malware  never  violates  the maximum cycle limit because not interested in causing the immediate shutdown of a  PLC;

 \item the enforcers added in the architecture are  physically independent \emph{secure proxies} with no Internet or USB access, and connected with the controller via  \emph{secure channels};  as a consequence, the measurements transmitted to the \emph{supervisory control network} will not be corrupted; 

   \item runtime reconfigurations of  secure proxies are not allowed.

\end{itemize}
Thus, in general, the attacker objectives can be resumed in alteration/forgery of  PLC actuator commands and/or  communication messages between PLCs to eventually  affect the 
evolution \nolinebreak  of \nolinebreak  the controlled physical processes, and/or  transmit fake signals to the supervisory control network.

	
}

\section{A Formal language for monitored controllers}
\label{sec:calculus}

In this section, we introduce the \emph{Timed Calculus of Monitored Controllers}, called \cname{},  as an abstract formal language to express networks of controllers  integrated with edit automata sitting on the network interface of each controller  to monitor/correct their interactions  with the rest  of  the system.
Basically, \cname{} extends Hennessy and Regan's \emph{Timed Process Language} (TPL)~\cite{HR95} with monitoring edit automata. 
Like TPL 
time proceeds in  \emph{discrete time slots} separated by   $\tick$-actions. 
	
	Let us start with some preliminary notation. We use $s, s_k \in \mathsf{Sens}$ to name \emph{sensor signals};  $a,a_k \in \mathsf{Act}$ to indicate \emph{actuator commands};   $c, c_k \in\mathsf{Chn}$ for \emph{channels};  $z, z_k$ for  \emph{generic names}.

	\paragraph*{Controllers}
	In our setting, controllers are nondeterministic sequential  timed processes evolving through three main  phases: \emph{sensing} of  sensor signals, \emph{communication} with other controllers, and \emph{actuation}. For convenience,  we use five different syntactic categories to distinguish the five main states of a controller: $\mathbbm{Ctrl}$  for  initial states, $\mathbbm{Sleep}$ for sleeping states,  $\mathbbm{Sens}$ for  sensing states, $\mathbbm{Com}$ for  communication states, and $\mathbbm{Act}$ for actuation states. In its initial state, a controller is a recursive process   \emph{waiting} for signal stabilisation in order to start the sensing phase:
	\begin{displaymath}
	\begin{array}{rcl}
	  \mathbbm{Ctrl} \ni P & \Bdf &   X \\ 
          \mathbbm{Sleep} \ni W  & \Bdf & \tick.W  \Bor     S
	\end{array}
	\end{displaymath}

        The main process describing a controller consists of some 
        \emph{recursive process}  defined via  equations of the form $X = \tick.W$, with $ W \in \mathbbm{Sleep}$; here, $X$ is a \emph{process variable} that may occur (free) in $W$. 
        For convenience,  our controllers always have at least one  initial timed action $\tick$  to ensure \emph{time-guarded recursion}, {\color{black} thus avoiding undesired \emph{zeno behaviours}~\cite{Zeno}: the number of untimed actions between
        two $\tick$-actions is always finite.} Then, after a determined sleeping period, when sensor signals get stable,  the sensing phase can start. 
	
	During the sensing phase, the controller waits  for a \emph{finite} number of admissible sensor signals. If none of those signals arrives in the current time slot then the controller will \emph{timeout} moving to the following time slot (we adopt the TPL construct $\timeout{\cdot}{\cdot}$ for timeout).  The syntax is the following:
	\begin{displaymath}
	\begin{array}{rcl}
	\mathbbm{Sens} \ni S   & \Bdf & \timeout{\sum_{i \in I} s_i.S_i}{S} 
	\Bor C 
	\end{array}
	\end{displaymath}

        \noindent 
	where $\sum_{i \in I} s_i.S_i$  denotes the standard construct for nondeterministic choice. 
	Once the sensing phase is concluded, the controller starts its calculations that may  depend  on \emph{communications} with other controllers governing different physical processes. Controllers communicate with each other for mainly two reasons: either to receive notice about the state of other physical sub-processes or to require an actuation on a  physical process which is out of their control. As in TPL, we adopt a \emph{channel-based} \emph{handshake point-to-point}  communication paradigm. Note that, in order to avoid starvation,  communication is always under timeout. The syntax for the communication phase is:
        %
	\begin{displaymath}
	\begin{array}{rcl}
	\mathbbm{Comm} \ni C & \Bdf & \timeout{\sum_{i \in I}c_i.C_i}{C}\Bor \timeout{\overline{c}.C}{C} \Bor A 
	\end{array}
	\end{displaymath}
        
	In the actuation phase a controller eventually transmits a \emph{finite} sequence of commands to  actuators, and then, it emits a \emph{fictitious system signal}  $\fineC$ to denote the end of the scan cycle. After that, the whole scan cycle can restart. Formally,
	\begin{displaymath}
	\begin{array}{rcl}
	  \mathbbm{Act} \ni A  & \Bdf & \overline{a}.A  \Bor  \fineC.X
	\end{array}
	\end{displaymath}

        \begin{remark}[Scan cycle duration and maximum cycle limit]
          \label{rem:maximum-time}

          The scan cycle of a PLC must be completed within a specific time,  called \emph{maximum cycle limit}, which depends on the controlled physical process; if this time limit is violated  the the controller stops and throws an exception~\cite{BLACKHAT2016}. Thus, the signal $\fineC$ must occur well before the \emph{maximum cycle limit} of the controller. We actually work under the assumption that our controllers successfully complete their scan cycle in
          less than half of the maximum cycle limit. The reasons for this assumption will be clarified in Remark~\ref{rem:mitigation-time}. {\color{black}Please, notice that it is easy to statically derive the maximum duration of a scan cycle expressed in our calculus by simply counting the maximum number of $\tick$-prefixes occurring between two subsequent $\fineC$-prefixes. }
        \end{remark}

\begin{table}[t]
		{\small 
			\begin{displaymath}
			\begin{array}{l@{\hspace*{8mm}}l} 
			\Txiom{Sleep}
			{ - }
			{ {  \tick.W \trans{\tick}  W }}
			&
			\Txiom{Rec}
			{ X = \tick.W }
			{ {  X \trans{\tick}  W }}
			\\[10pt]
			\Txiom{ReadS}
			{j \in I}  
			{ \timeout{\sum_{i\in I} s_i.S_i}{S} \trans{s_j}  {S_j}}
			&
			\Txiom{TimeoutS}
			{ - }  
			{ \timeout{\sum_{i\in I} s_i.S_i}{S} \trans{\tick}  {S}}
			\\[10pt]
			\Txiom{InC}
			{j \in I}  
			{ \timeout{\sum_{i\in I} c_i.C_i}{C} \trans{c_j}  {C_j}}
			
			&
			\Txiom{TimeoutInC}
			{- }  
			{ \timeout{\sum_{i\in I} c_i.C_i}{C} \trans{\tick}  {C} 
			}
			\\[10pt]
			\Txiom{OutC}
			{ - }  
			{\timeout{\overline{c}.C}{C'} \trans{\overline{c}}  {C}}
			&
			\Txiom{TimeoutOutC}
			{- }  
			{\timeout{\overline{c}.C}{C'} \trans{\tick}  {C'}
			}
			\\[10pt]
			\Txiom{WriteA}
			{-}
			{ { \overline{a}.A  \trans{\overline{a}}   A}}
			&
			\Txiom{End}
			{-}
			{ { \fineC.X \trans{\fineC}  X}}
			\end{array}
			\end{displaymath}
		}
		\caption{LTS for controllers.}
		\label{tab:sem-ctrl}
	\end{table}

	\enlargethispage{.3\baselineskip}
	The operational semantics  in Table~\ref{tab:sem-ctrl} is along the lines of Hennessy and Regan's TPL~\cite{HR95}.
	In the following, we use the metavariable $\alpha$   to range over the set of all \emph{(observable) controller actions}: $\{   s, \overline{a},  \overline{c}, c,   \tick, \fineC \}  $. These actions denote:  sensor  readings, actuator commands, channel transmissions, channel receptions,    passage of time, and end of  scan cycles, respectively.   {\color{black} Notice that at our level of abstraction we represent only the observable behaviour of PLCs:  internal computations are not modelled 
          within PLCs; although, we do have $\tau$-actions to express communications between  two   PLCs, as the reader will notice in Table~\ref{tab:fnet}. 
            
        \begin{remark}[Attacker model and $\fineC$-signal]
          \label{rem:end-action}
          In our abstract representation of PLCs, the $\fineC$-signal is not really part of the (possibly compromised) PLC program but it is rather a system signal denoting the end of a scan cycle. As a consequence, in accordance with our attacker model, we assume that this fictitious signal cannot be dropped or forged by the attacker.
        \end{remark}
        }

	\paragraph*{Monitored controllers}

        The core of our 
        enforcement relies on (timed)
        finite-state Ligatti et al.'s \emph{edit automata}~\cite{Ligatti2005}, \emph{i.e.}, a particular class of automata specifically designed to allow/suppress/insert actions in a generic system in order to preserve its correct behaviour. The syntax is as follows:
	\begin{displaymath}
{\small
	\begin{array}{rcl}
	\mathbbm{Edit} \ni \Edit, \Fdit & \Bdf &  \go  \Bor  
	\sum_{i \in I} \lambda_i.\Edit_i \Bor \mathsf X 
	\end{array}
}
	\end{displaymath}

%
  
        The special automaton $\go$ will admit any action of the monitored system. 
        The edit automaton $\sum_{i \in I} \lambda_i.\Edit_i$  enforces an action   $\lambda_i$, and then continues as $\Edit_i$, for any $i \in I$, with $I$ finite. Here,  the symbol $\lambda$ ranges over: (i) $\alpha$ to \emph{allow} the action $\alpha$, (ii) $^-\alpha$ to \emph{suppress} the action $\alpha$, and (iii) $\eact{\alpha_2}{\alpha_1}$, for $\alpha_1 \neq \alpha_2$, to \emph{insert} the action $\alpha_1$ before the action $\alpha_2$. \emph{Recursive automata} $\mathsf X$ are defined via equations of the form $\mathsf X = \Edit$, where the automata variable  $\mathsf X$ may occur (free) in $\Edit$.

 The   operational semantics of our  edit automata is given via the following transition rules:
	\begin{displaymath}
	{\small
		\begin{array}{c}
		\Txiom{Go}
		{-}
		{ { \go  \trans{\alpha}  \go }}
		\Q\Q
		\Txiom{Edit}
		{j \in I}
		{ \sum_{i \in I} \lambda_i.\Edit_i  \trans{\lambda_j}   \Edit_j }
		\Q\Q
		\Txiom{recE}
		{ {\mathsf X} = \Edit \Q \Edit \trans{\lambda}  {\Edit'} }  
		{  \mathsf X \trans{\lambda}  {\Edit'}}
		\end{array}
	}
	\end{displaymath}

	Our \emph{monitored controllers}, written $\confCPS{\Edit}{J}$, consist of a  controller $J$,  for $J \in \mathbbm{Ctrl} \cup \mathbbm{Sleep} \cup  \mathbbm{Sens}  \cup  \mathbbm{Comm}  \cup  \mathbbm{Act} $, and an edit automaton $\Edit$ enforcing the behaviour of  $J$,   according to the 
        following transition rules, presented in the style of~\cite{MartinelliMatteucci2007}: 
\begin{displaymath}
          	{\small
	          \begin{array}{c}
                    \! \! \!
	\Txiom{Allow}
	{ \Edit \trans{\alpha}  \Edit' \q J \trans{\alpha} J' }
	{  \confCPS {\Edit}{J} \trans{\alpha}  \confCPS {\Edit'}{J'}}
	\q\,
	\Txiom{Suppress}
	{ \Edit \trans{^{-}\alpha}  \Edit' \q J \trans{\alpha} J' }
	{  \confCPS {\Edit}{J} \trans{\tau}  \confCPS {\Edit'}{J'}} \q\, 
        {\color{black}
	\Txiom{Insert}
	      {  \Edit \trans{\eact{\,\alpha_2}{\alpha_1}}  \Edit' \q  J \trans{\alpha_2} J'   
                }
	      {  \confCPS {\Edit}{J} \trans{\alpha_1}  \confCPS {\Edit'}{J}
              } }	
	\end{array}
}
\end{displaymath}

  Rule \rulename{Allow} is used for allowing observable actions  emitted by the controller under scrutiny.
  By an application of Rule \rulename{Suppress}, incorrect actions $\alpha$  emitted by  (possibly corrupted) controllers $J$ are suppressed, \emph{i.e.}, converted into (silent) $\tau$-actions. Rule \rulename{Insert} is used to insert an action $\alpha_1$ before an action $\alpha_2$ of the controller.
  {\color{black}In the following, the metavariable  $\beta$ will range over the same set of actions as $\alpha$, together with the  \emph{silent action} $\tau$.}

        {\color{black}  
          Here, we wish to stress that, like  Ligatti et al.~\cite{Ligatti2005}, we are interested in  deterministic (and hence implementable) enforcement.
          With the following technical definitions we extract from  enforcer actions $\lambda$ both: (i) the controller triggering actions, and (ii) the resulting output actions. 
                      	\begin{definition}
                          Let $\lambda \in \{\allowE{\alpha}, \suppressE{\alpha}, \insertE{\alpha_1}{\alpha_2}  \}$
                          be an arbitrary action for  edit automata, we write     $\trigger{\lambda}$  to denote the  controller action triggering $\lambda$, defined as: 
$\trigger{\allowE{\alpha}} = \alpha$,  $\trigger{\suppressE{\alpha}}=  \alpha$ and 
                          $\trigger{\insertE{\alpha_1}{\alpha_2}} =  \alpha_2$. Similarly,
                          we write     $\mathit{out}({\lambda}$)  to denote the output action prescribed by $\lambda$, defined as:  
$\mathit{out}({\allowE{\alpha}}) = \alpha$,  $\mathit{out}({\suppressE{\alpha}})=  \tau$ and 
                          $\mathit{out}({\insertE{\alpha_1}{\alpha_2}}) =  \alpha_1$. Given a trace $t= \lambda_1 \cdots \lambda_n$, we write $\mathit{out}(t)$ for the trace $\mathit{out}(\lambda_1) \cdots   \mathit{out}(\lambda_n)$.
                     	\end{definition}
                        Now, we provide a definition of deterministic enforcer along the lines of Pinisetty at al.~\cite{Pinisetty_2017}.
\begin{definition}[Deterministic enforcer]
		\label{def:semantic-deterministic-enfrocement}
	A edit automaton $\Edit\in\mathbbm{Edit}$ is said to be \emph{deterministic} iff 
        in every term $\sum_{i \in I} \lambda_i.\Edit_i $  that appears in $\Edit$
        there are no $\lambda_k$ and  $\lambda_j$, for $k,j \in I$, $k\neq j$, such that
                $\trigger{\lambda_k} =  \trigger{\lambda_j}$ and 
        $\mathit{out}({\lambda_k}) =  \mathit{out}({\lambda_j})$. 
\end{definition}
}


	Finally, we can easily generalise the concept of monitored controller to a \emph{field communications network} of parallel monitored controllers, each one acting on different actuators, 
	and  exchanging information via channels. These networks are formally defined via a straightforward grammar:
	\[
	\mathbbm{FNet} \ni \Sys \Bdf \confCPS{\Edit}{J} \Bor \Sys \parallel \Sys 
	\]
	with the  operational semantics defined in Table~\ref{tab:fnet}.
        
        \begin{table}
	  \begin{displaymath}
            {\small
			\begin{array}{ll}
			\Txiom{ParL}
			{ \Sys_1 \trans{\alpha} \Sys_1'}  
			{ \Sys_1 \parallel \Sys_2 \trans{\alpha} \Sys_1' \parallel \Sys_2}
                        &
                        	\Txiom{ChnSync}
			{\Sys_1 \trans{c} \Sys_1' \q\, \Sys_2 \trans{\overline{c}} \Sys_2'}  
			{ \Sys_1 \parallel \Sys_2 \trans{\tau} \Sys_1' \parallel \Sys_2'\\
				\Sys_2 \parallel \Sys_1 \trans{\tau} \Sys_2' \parallel \Sys_1'}
                  
				\\[20pt]
			\Txiom{ParR}
			{ \Sys_2 \trans{\alpha} \Sys_2'}  
			{ \Sys_1 \parallel \Sys_2 \trans{\alpha} \Sys_1 \parallel \Sys_2'}
			&
		
			\Txiom{TimeSync}
			{\Sys_1 \trans{\tick} \Sys_1' \Q \Sys_2 \trans{\tick} \Sys_2' \Q  \Sys_1 \parallel \Sys_2 \ntrans{\tau}}  
			{ \Sys_1 \parallel \Sys_2 \trans{\tick} \Sys_1' \parallel \Sys_2'}
			\end{array}
                        	}	
	  \end{displaymath}
          \caption{LTS for field communications networks of monitored controllers.}
                    \label{tab:fnet}
         \end{table}

          Notice that monitored controllers may interact with each other via channel synchronisation (see Rule \rulename{ChnSync}). Moreover, via rule \rulename{TimeSync} they may evolve in time only when channel synchronisation may not occur (our controllers do not admit zeno behaviours). This ensures  \emph{maximal progress}~\cite{HR95}, a desirable time property  when modelling real-time systems: channel communications will never be postponed.

          {\color{black}
	\begin{definition}[Execution traces]
	  Given three finite execution traces {\small $t_{\mathrm{c}}=\alpha_1 \ldots \alpha_k$,  $t_{\mathrm{e}}=\lambda_1 \ldots \lambda_l$, and  $t_{\mathrm{m}}=\beta_1 \ldots \beta_n$}, for controllers, edit automata and monitored controllers, respectively. We  write: (i) {\small $P \trans{t_{\mathrm{c}} } P'$}, as an abbreviation for {\small $P=P_0\trans{\alpha_1} \cdots \trans{\alpha_k}P_k=P'$}; (ii) {\small $\Edit \trans{t_{\mathrm{e}}} \Edit'$}, as an abbreviation for {\small $\Edit=\Edit_0\trans{\lambda_1} \cdots \trans{\lambda_l}\Edit_l=\Edit'$}; (iii)
          {\small $N \trans{t_{\mathrm{m}}} N'$}, as an abbreviation for {\small $N=N_0\trans{\beta_1} \cdots \trans{\beta_n}N_n=N'$}.
        \end{definition}
        }

        In the rest of the paper we adopt the following notations. 
        \begin{notation}
           \label{not:trace}
                As usual, we write $\epsilon$ to denote the  \emph{empty trace}.  
                Given a  trace $t$ we write $|t|$ to denote the  \emph{length} of $t$, \emph{i.e.}, the number of actions occurring in $t$. Given a  trace $t$ we write $\hat{t}$    to denote the trace obtained by removing the
$\tau$-actions. 
                Given two traces $t'$ and $t''$, we write  $t' \cdot t''$ for the trace resulting from the  \emph{concatenation} of $t'$ and $t''$. 
                For $t= t' \cdot t''$ we say that  $t'$ is a  \emph{prefix} of $t$ and $t''$ is a \emph{suffix} of $t$.  
 \end{notation}

	\section{Use case: the SWaT System}
        	\label{sec:case-study}

	In this section,  we describe 
	how to specify in \cname{} a non-trivial network of PLCs to control 
	(a simplified version of) the \emph{Secure Water Treatment system} (SWaT)~\cite{SWaT}.

	SWaT represents a scaled down version of a real-world industrial water treatment plant. The system consists of $6$ stages, each of which deals with a different treatment, including:  chemical dosing, filtration,  dechlorination, and reverse osmosis.  
	For simplicity, in our use case, depicted in Figure~\ref{fig:case_study}, we consider only three stages. In the first stage, raw water is \emph{chemically dosed} and pumped in a tank $T_1$, via two pumps $\mathit{pump}_1$ and $\mathit{pump}_2$. A \emph{valve} connects $T_1$ with a \emph{filtration unit} that releases the treated water in a second tank $T_2$.  Here, we assume that the flow of the incoming water in $T_1$ is greater than the outgoing flow passing through the valve. The water in $T_2$  flows into a  \emph{reverse osmosis unit} to reduce inorganic impurities. In the  last stage, the  water coming from the reverse osmosis unit is either distributed as clean water, if required standards are met, or stored in a backwash tank $T_3$ and then pumped back, via a pump $\mathit{pump}_3$, to the filtration unit. Here, we assume that tank $T_2$ is large enough to receive the whole content of tank $T_3$ at any moment.

		\begin{figure}[t]
	\centering
	\includegraphics[width=10cm,keepaspectratio=true,angle=0]{./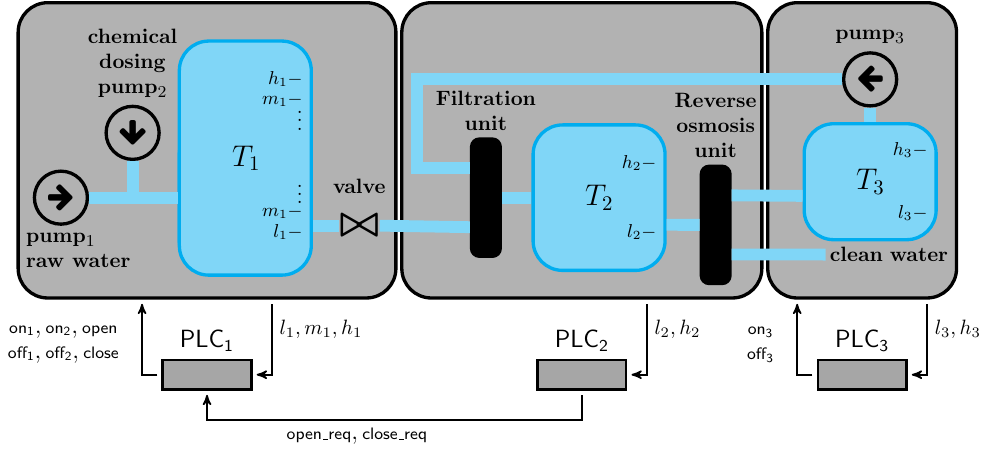}
	\caption{A simplified Industrial Water Treatment System.}
	\label{fig:case_study}
\end{figure}
        
	The SWaT system has been used to provide a dataset containing physical and network data recorded during $11$ days of activity~\cite{swat-dataset}. Part of this dataset contains information about the execution of the system in isolation, while a second part records the effects on the system when exposed to different kinds of cyber-physical attacks. Thus, for instance, 
     (i) \emph{drops} of commands to activate $\textit{pump}_2$ may affect the  quality of the water, as they would affect the correct functioning of  the chemical dosing pump; 
	(ii) \emph{injections} of commands   to close the valve between $T_1$ and $T_2$,  may  give rise to an overflow of tank $T_1$ if this tank is full;  
	(iii) \emph{integrity attacks} on the signals coming from the sensor of the tank $T_3$ 
        may result in damages of the pump  $\mathit{pump}_3$ if it is activated when $T_3$ is  empty.

	Each tank is controlled by its  own PLC. The  programs of the three PLCs, expressed in terms of ladder logic, are given in Figure \ref{fig:ladder-logic-plcs}. 
        In the following, we give their descriptions in 
        \cname{}. 

        Let us start with the code of the controller $\mathrm{PLC}_1$ managing the tank $T_1$. Its definition is given in terms of two  equations  to deal with the case when the two pumps, $\mathit{pump}_1$ and $\mathit{pump}_2$, are both off and both on, respectively. Intuitively, when the pumps are off, the level of water in $T_1$ drops until it reaches its low level  (event $l_1$); when this happens  both pumps are turned on and they remain so until the tank is refilled, reaching its high level (event $h_1$). Formally, 
		\begin{displaymath}
	{\footnotesize
		\begin{array}{l}
		{P}^{\off{ }}_1	   = \tick.  \big(  
		\lfloor l_1. 
		\overline{\on{1}}.\overline{\on{2}}.\overline{\close}.\fineC.{P^{\on{ }}_1}
		\\ [1pt]
		\Q\Q\Q\Q\q	   + \, m_1.\timeout{\ask{open}. \overline{\off{1}}.\overline{\off{2}}.\overline{\open }.\fineC.{P^{\off{ }}_1} +  \ask{close}. \overline{\off{1}}.\overline{\off{2}}.\overline{\close}.\fineC.{P^{\off{ }}_1}}{(\overline{\off{1}}.\overline{\off{2}}.\overline{\close}.\fineC.{P^{\off{ }}_1})}\\ [1pt]
		\Q\Q\Q\Q\q		
		+ \, 	h_1.\timeout{
			\ask{\open}.\overline{\off{1}}.\overline{\off{2}}.\overline{\open}.\fineC.{P^{\off{ }}_1}
			+
			\ask{\close}.\overline{\off{1}}.\overline{\off{2}}.\overline{\close}.\fineC.{P^{\off{ }}_1}
		}{(\overline{\off{1}}.\overline{\off{2}}.\overline{\close}.\fineC.{P^{\off{ }}_1})}
		\\ [1pt]
		\Q\Q\Q\Q
		\rfloor{(\overline{\off{1}}.\overline{\off{2}}.\overline{\close}.\fineC.{P^{\off{ }}_1})}\big)
		\\ [2pt]
		
		{P}^{\on{ }}_1	   = \tick.  \big(  
		\lfloor l_1. 
		\overline{\on{1}}.\overline{\on{2}}.\overline{\close}.\fineC.{P^{\on{ }}_1}
		\\ [1pt]
		\Q\Q\Q\Q\q	   + \, m_1.\timeout{\ask{open}. \overline{\on{1}}.\overline{\on{2}}.\overline{\open }.\fineC.{P^{\on{ }}_1} +  \ask{close}. \overline{\on{1}}.\overline{\on{2}}.\overline{\close}.\fineC.{P^{\on{ }}_1}}{(\overline{\on{1}}.\overline{\on{2}}.\overline{\close}.\fineC.{P^{\on{ }}_1})}\\ [1pt]
		\Q\Q\Q\Q\q		
		+ \, 	h_1.\timeout{
			\ask{\open}.\overline{\off{1}}.\overline{\off{2}}.\overline{\open}.\fineC.{P^{\off{ }}_1}
			+
			\ask{\close}.\overline{\off{1}}.\overline{\off{2}}.\overline{\close}.\fineC.{P^{\off{ }}_1}
		}{(\overline{\off{1}}.\overline{\off{2}}.\overline{\close}.\fineC.{P^{\off{ }}_1})}
		\\ [1pt]
		\Q\Q\Q\Q
		\rfloor{(\overline{\off{1}}.\overline{\off{2}}.\overline{\close}.\fineC.{P^{\on{ }}_1})}\big)
		\end{array}
	}
	\end{displaymath}
	Thus, for instance, when the pumps are off the $\mathrm{PLC}_1$ waits for one time slot (to get stable sensor signals) and then checks the water level of the tank $T_1$, distinguishing between three possible states. 
 	If $T_1$ reaches its low level (signal $l_1$) then the pumps are turned on (commands 	$\overline{\on{1}}$ and $\overline{\on{2}}$) and the valve is closed (command 	$\ask{\open}$). Otherwise, if the tank $T_1$ is at some intermediate level  between low and high (signal $m_1$) then $\mathrm{PLC}_1$ listens for  requests arriving from $\mathrm{PLC}_2$ to open/close the valve. Precisely,  if the PLC gets an $\ask{open}$ request then it opens the valve,  letting the water  flow from $T_1$ to $T_2$, otherwise, if it gets a $\ask{close}$ request then it closes the valve; in both cases the pumps remain off.  
	If the level of the tank is  high (signal $h_1$) then the requests of water coming from $\mathrm{PLC}_2$ are served as before, but the two pumps are eventually turned off (commands  $\scriptstyle\overline{\mathsf{off}_1}$ and $\scriptstyle\overline{\mathsf{off}_2}$). 

        	\begin{figure}[t]	
			\centering
		\begin{minipage}{0.27\textwidth}
			\includegraphics[
			width=3.8cm,keepaspectratio=true,angle=0]{./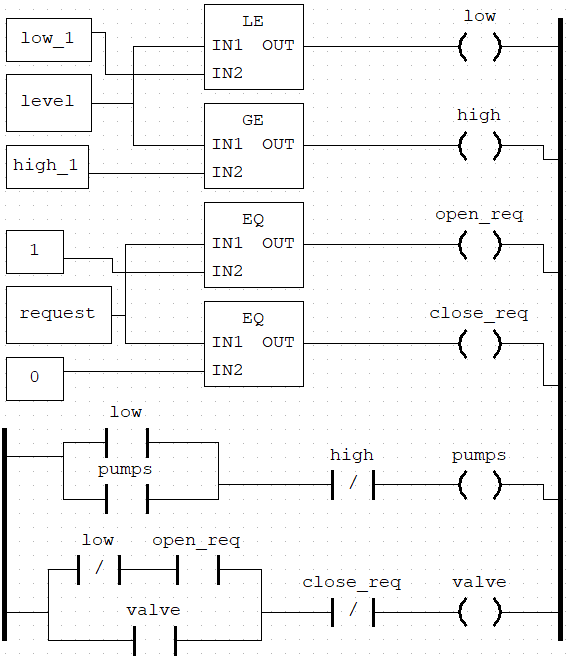}
		\end{minipage}\Q
		\begin{minipage}{0.285\textwidth}
			\includegraphics[
			width=4cm,keepaspectratio=true,angle=0]{./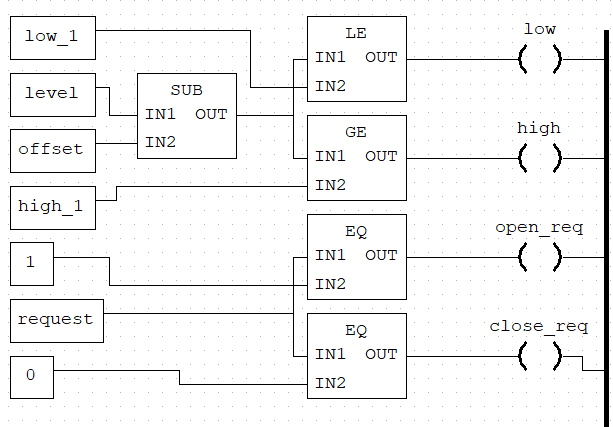}
		\end{minipage}\Q
		\begin{minipage}{0.295\textwidth}
			\includegraphics[
			width=4.55cm,keepaspectratio=true,angle=0]{./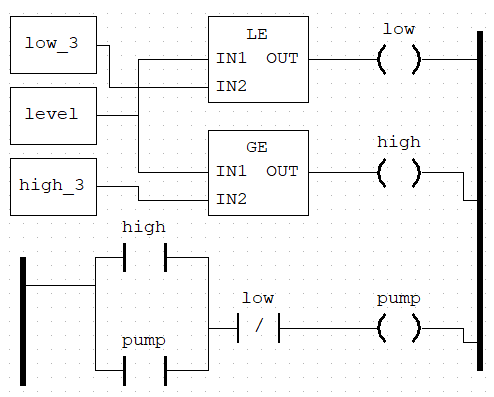}
		\end{minipage}
			\caption{Ladder logics of the three PLCs controlling the system in Figure~\ref{fig:case_study}.}
		\label{fig:ladder-logic-plcs}
	\end{figure}

	$\mathrm{PLC_2}$ manages the water level of tank  $T_2$. Its code consists of  the  two equations to model the filling (state $\uparrow$) and the emptying (state $\downarrow$) of the tank. Formally, 
	\begin{displaymath}
           {\footnotesize 
	\begin{array}{c}
	P^{\Fill}_2 =
	\tick.(  
	\timeout{l_{2}.\timeout{\overline{ \ask{open}}.\fineC.{P^{\Fill}_2} }{\fineC.{P^{\Fill}_2}} 
		\, + \, m_{2}.\timeout{\overline{ \ask{open}}.\fineC.{P^{\Fill}_2} }{\fineC.{P^{\Fill}_2}} 
		\, + \,
		h_{2}.\timeout{\overline{\ask{close}}.\fineC.{P^{\Empty}_2}}{\fineC.{P^{\Fill}_2}}}{\fineC.{P^{\Fill}_2}} )
	\\ [5pt]
	P^{\Empty}_2 =
	\tick.(  
	\timeout{l_{2}.\timeout{\overline{ \ask{open}}.\fineC.{P^{\Fill}_2} }{\fineC.{P^{\Empty}_2}} 
		\, + \, m_{2}.\timeout{\overline{ \ask{close}}.\fineC.{P^{\Empty}_2} }{\fineC.{P^{\Empty}_2}} 
		\, + \,
		h_{2}.\timeout{\overline{\ask{close}}.\fineC.{P^{\Empty}_2}}{\fineC.{P^{\Empty}_2}}}{\fineC.{P^{\Empty}_2}} )
	\end{array}
}            
	\end{displaymath}

	Here, after one time slot, the level of $T_2$ is checked. If the level is low (signal $l_2$) then $\mathrm{PLC}_2$ sends a request to $\mathrm{PLC}_1$, via the channel $\ask{open}$, to open the valve that lets the water  flow from $T_1$ to $T_2$, and then returns. Otherwise, if the level of tank $T_2$ is high (signal $h_2$) then $\mathrm{PLC}_2$ asks  $\mathrm{PLC}_1$ to close the valve, via the channel $\ask{close}$, and then returns. Finally, if the tank $T_2$ is at some intermediate level  between $l_2$ and $h_2$ (signal $m_2$) then the valve remains open (respectively, closed) when the tank is refilling (respectively, emptying). 
	
	Finally, $\mathrm{PLC_3}$ manages  the water level of the backwash tank $T_3$. Its code consists of two equations to deal with the case when the pump $\mathit{pump}_3$ is off and on, respectively. Formally,  
%
        	\begin{displaymath}
	{\footnotesize
		\begin{array}{c}
		P^{\off{ }}_3 \: =  \: \tick. ( 
		\timeout{
			l_3.\overline{\off{3}}.\fineC. P^{\off{ }}_3
			\: + \: 	m_3.\overline{\off{3}}.\fineC. P^{\off{ }}_3
			\: + \: 
			h_3.\overline{\on{3}}.\fineC.P^{\on{ }}_3}{(\off{3}.\fineC.P^{\off{ }}_3)} )
		\\[5pt]
		P^{\on{ }}_3 \: =  \: \tick. ( 
		\timeout{
			l_3.\overline{\off{3}}.\fineC. P^{\off{ }}_3
			\: + \: 	m_3.\overline{\on{3}}.\fineC. P^{\on{ }}_3
			\: + \: 
			h_3.\overline{\on{3}}.\fineC.P^{\on{ }}_3}{(\off{3}.\fineC.P^{\off{ }}_3)} )
		\end{array}
	}
	        \end{displaymath}

        Here, after one time slot, the level of  tank $T_3$ is checked. If the level is low (signal $l_3$) then $\mathrm{PLC}_3$ turns off the pump $\mathit{pump}_3$ (command $\scriptstyle\overline{\off{3}}$), and then returns. Otherwise, if the level of $T_3$ is high (signal $h_3$) then the pump is turned on (command $\scriptstyle\overline{\on{3}}$) until the whole content of $T_3$ is pumped back into the filtration unit of $T_2$.

       {\color{black} \paragraph*{Examples of correctness properties and  attacks}
         In a system similar to that described above, one would expect a number of  properties capturing the correct functioning of system components. Let us provide a few examples of such correctness properties and some specific attacks that may potentially invalidate these properties. 

         A first property might say that if  $\mathrm{PLC}_1$  receives a request to open the valve
         between tanks $T_1$ and $T_2$ then the same valve will be eventually closed early enough to prevent water overflow in tank $T_2$.  This property certainly holds when the system is not exposed to any attack. However,  a malware injected in  $\mathrm{PLC}_1$ might try to undermine this property by tampering either with the actuator dedicated to the valve or with  the requests of  $\mathrm{PLC}_2$  to open/close the valve. 
         In particular, a malicious request to open the valve might be forged by an attacker  injected in $\mathrm{PLC}_2$. Thus, another desired correctness property might say that whenever the tank $T_2$ is full then $\mathrm{PLC}_2$  will never ask for incoming water from tank $T_1$. Finally, another expected property might say that $\mathit{pump}_3$ will never work without enough water in tank $T_3$.
         Again,  an attacker injected in  $\mathrm{PLC}_3$ might try to undermine this property by tampering either with the actuator dedicated to  the pump or with the sensor measuring the level of tank~$T_3$.

         In Section~\ref{sec:attacks} we will provide formal definitions for patterns template of structured correctness properties that are suitable for enforcing correct behaviours of our PLCs. 
}
	%
	%
	\section{A formal language for controller properties}
	\label{sec:logics}

	In this section, we provide a simple description language to express  \emph{correctness  properties} that we may wish to enforce at runtime in our controllers. 
	As discussed in the Introduction, we resort to (a sub-class of) \emph{regular properties} 
        as they allow us to  express interesting classes of properties referring to one or more scan cycles of a controller. 
	The proposed language distinguishes between two kinds of properties: (i) \emph{global properties}, $e \in \PropG$, to express   general controllers' execution traces; (ii) \emph{local properties}, $p \in \PropL$, to express traces confined to a finite number of consecutive scan cycles. 
	The two families of properties are formalised via the following regular  grammar:  

		\begin{displaymath}
	\begin{array}{lcl}
	e \in \PropG  &\Bdf& p^\ast \,|\, e_1 \cap e_2 \\ 
	p,q \in  \PropL &\Bdf&  \epsilon \,|\, p_1; p_2 \,|\,  \cup_{i\in I}\pi_i.p_i  \,|\, p_1 \cap p_2 
	\end{array}
	\end{displaymath}
	        %
	where $\pi_i \in \Events \defn \SensSet \cup \overline{\ActSet} \cup \ChanSet^{\ast} \cup \{\tick\} \cup \{ \fineC \}$ denote \emph{atomic properties},  called \emph{events}, that may occur as prefix of a property. With an abuse of notation, we use the symbol $\epsilon$ to denote both the \emph{empty property} and the \emph{empty trace}. 

        The \emph{semantics} of our logic is naturally defined in terms of sets of execution traces which satisfy a given property; its formal definition is given in Table~\ref{tab:semantics-logic}.

        \begin{table*}[t]
        	\begin{displaymath}
        	{\small 
        		\begin{array}{lcl}
        		
        		\regSemantics{p^\ast} &\defn&
        		\{\epsilon\} \cup \bigcup_{n\in \mathbb{N}^+} \{t \, \mid \,  t =t_1 \cdot \ldots \cdot t_n, \textrm{ with }   t_i \in \regSemantics{p}, \textrm{ for }  1\leq i\leq n \} \\[1pt]
        		\regSemantics{e_1 \cap e_2} &\defn&  \{t\, \mid \,t\in \regSemantics{e_1}  \text{ and } t \in \regSemantics{e_2}\}\\[1pt]
        		\regSemantics{\epsilon} &\defn& 
        		{\{ \epsilon \}}\\[1pt]
        		\regSemantics{p_1\cap p_2} &\defn& 	
        		\{t \, \mid \, t\in \regSemantics{p_1}  \text{ and } t \in \regSemantics{p_2} \}
        		\\[1pt]
        		\regSemantics{p_1;p_2} &\defn& 	
        		\{t \, \mid \, t=t_1\cdot t_2, \textrm{ with  $t_1 \in \regSemantics{p_1}$  and  $t_2 \in \regSemantics{p_2}$} \}
        		\\[1pt]
        		\regSemantics{\bigcup_{i\in I}\pi_i.p_i } &\defn& \bigcup_{i\in I} 
        		\{t \, \mid \, t=\pi_i \cdot t',  \textrm{ with $t'\in \regSemantics{p_i}$}\}                
        		\end{array}
        	}
        	\end{displaymath}
        	\caption{Trace semantics of our regular properties.}
        	\label{tab:semantics-logic}
        \end{table*}
     
   However, the syntax of our logic is a bit too permissive with respect to our intentions, as it allows us to describe partial scan  cycles,  \emph{i.e.}, cycles that have not completed. Thus, we restrict ourselves to considering properties which builds  on top of local properties associated to  \emph{complete scan cycles},  \emph{i.e.}, scan cycles whose last action is an  $\fineC$-action. Formally, 
   \begin{definition}
     \label{def:well-formedness}
     \emph{Well-formed properties} are  defined as follows: 
     \begin{itemize}
			\item the local property $\fineC. \epsilon$ is well formed; 
			\item a local property of the form $p_1 ; p_2$ is well formed if $p_2$ is well formed; 
			\item a local property of the form $p_1 \cap p_2$ is well formed if both $p_1$ and $p_2$ are well formed; 
			\item a local property of the form $\cup_{i\in I}\pi_i.p_i$ 
			is well formed  if   either $\pi_i . p_i = \fineC. \epsilon$ or $p_i$ is well formed, for any $i \in I$; 
			\item a global property $p^\ast$ is well formed if  $p$ is well-formed;
			\item a global property $e_1\cap e_2$ is well-formed if both $e_1$ and $e_2$ are well-formed. 
	\end{itemize}
	\end{definition}
   %
        In the rest of the paper, we always assume to work with well-formed properties.  Moreover, we adopt the following notations and/or abbreviations on properties. 
\begin{notation}
  We omit trailing empty properties, writing $\pi$ instead of $\pi.\epsilon$.
  For $k > 0$, we write $\pi^k.p$ as a shorthand for $\pi.\pi...\pi.p$, where prefix $\pi$ appears $k$ consecutive times. Given a local property $p$ we write $\events p \subseteq \Events$ to denote the set of events occurring in $p$; similarly, we write $\events e \subseteq \Events$ to denote the set of events occurring in a global property $e \in \PropG$. 
  Given a set of events ${\mathcal A} \subseteq \Events$ and a local property $p$,  we use ${\mathcal A}$ itself as an abbreviation for the property $\cup_{\pi \in {\mathcal A}}\pi.\epsilon$, and  ${\mathcal A}.p$  as an abbreviation for the property $\cup_{\pi \in {\mathcal A}}\pi.p$.
		Given a set of events ${\mathcal A}$, with  $\fineC \not \in {\mathcal A}$,
	  	we write ${\mathcal A}^{\le k}$, for $k \geq 0$, to denote the well-formed property defined  as
                follows: (i)
		${\mathcal A}^{\le 0}  \defn  \fineC $; 
          (ii) 	$ 
	 {\mathcal A}^{\le k}  \defn	\fineC  \cup  {\mathcal A}.{\mathcal A}^{\le k-1}$, 
	 for $k> 0$. 
         Thus, the property ${\mathcal A}^{\le k}$  captures all possible sequences  of events of  ${\mathcal A}$ whose length is at most $k$, for $k \in\mathbb{N}$.  We write $\PEvents$ to denote the set of pure events, \emph{i.e.}, $\Events \setminus \{\fineC \}$.
         Finally, we write $\PUEvents$ to denote the set of pure untimed events, \emph{i.e.}, $\Events \setminus \{\fineC, \tick \}$. 
	\end{notation}

{\color{black}
Note that our properties are in general nondeterministic. However, since we are interested in deterministic enforcers, in the following we will focus on deterministic enforcing properties. 
\begin{definition}[Deterministic properties]
  \label{def:deterministic-prop}
        A global property $e\in \PropG$ is said to be deterministic if for any sub-term $\cup_{i \in I}\pi_i.p_i$  appearing in $e$,  we have  $\pi_k \neq \pi_h$, for any $k,h\in I$,  $k\neq h$.
	\end{definition}
}




\subsection{Local properties}
\label{sec:local_prop}
As already said, local properties describe execution traces which are limited to a finite number \nolinebreak of \nolinebreak scan cycles. Let us present a number of significant local properties that can be expressed in our language of regular properties.
 {\color{black}
   In the following, we assume a fixed maximum number of actions, 
    $\maxa$,  that may occur within  a single scan cycle of our controllers, \emph{i.e.}, between two subsequent  $\fineC$-actions. }
\subsubsection{Basic properties}

They prescribe 
conditional, eventual and persistent behaviours.

\paragraph*{Conditional}
These properties say that 
when a (pure) untimed event $\pi$ occurs in the \emph{current scan cycle}  then some property $p$ should be satisfied. 
More generally, for $\pi_i \in \UEvents$ and $p_i\in \PropL$, we write $\caseCND{(\,\cup_{i\in I}\{ \tpl{\pi_i}{p_i}\})}$  to denote the property $q_k$, for $k=\maxa $,  defined as follows: 
\begin{itemize}
\item
   $q_k \defn \fineC \cup \bigcup_{i\in I} \pi_i.p_i \cup\ (\PEvents {\setminus}\bigcup_{i \in I} \{\pi_i\}).q_{k-1} $, for $0 < k \leq \maxa$ 
\item
   $q_0 \defn \fineC$.
\end{itemize}

 When there is only one triggering event 
 $\pi \in \PUEvents$ and one associated local property $p \in \PropL$, we have a simple conditional property defined as follow:   $\CND{\pi}{p} \defn \caseCND{(\{\tpl{\pi}{p}\})}$.

\begin{figure}[t]
	\centering
	\includegraphics[width=0.65\textwidth]{./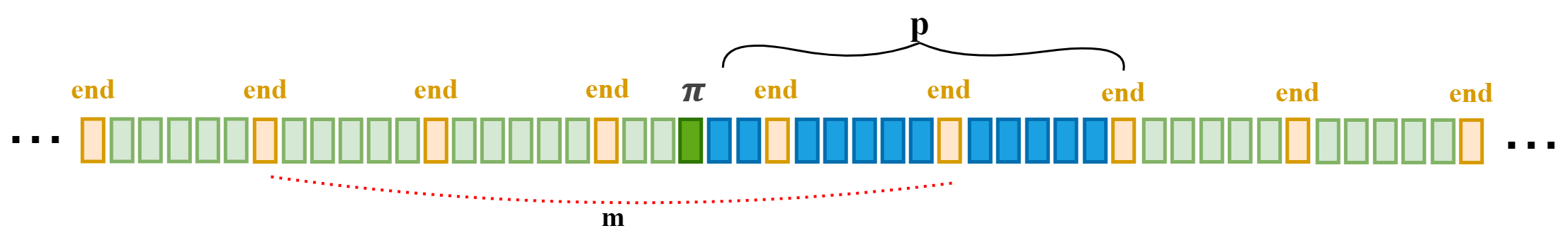}
	\caption{A trace satisfying a persistent conditional property $\PCND{\pi}{p}{m}$.}
	\label{fig:conditional}
\end{figure}

Conditional properties $\CND{\pi}{p}$ define a cause-effect relation in which the triggering event $\pi$ is searched in the current  scan cycle;  one may think of a more general property $\PCND{\pi}{p}{m}$, in which the cause-effect relation  \emph{persists} for $m > 0$ consecutive scan cycles, \emph{i.e.}, the search for the triggering event $\pi$ continues for at most $m$ consecutive scan cycles. Of course, the triggered local property $p$ may span over a finite number of scan cycles (see Figure~\ref{fig:conditional}). 
Formally, we write $\PCND{\pi}{p}{m}$, for $\pi \in \UEvents$, $p\in \PropL$ and $m > 0$, for the  property 
 $q^{m}_\maxa$ defined as follows:
\begin{itemize}
	\item $q^h_k\defn \fineC.q^{h-1}_\maxa \, \cup \,  \pi.p \, \cup \,  (\PEvents{\setminus}\{\pi\}).q^{h}_{k-1}$, for $1 < h \leq m$ and $0 < k \leq \maxa$
	\item $q^{h}_0\defn \fineC.q^{h-1}_\maxa$, for $1 < h \leq m$
	\item $q^{1}_k\defn  \fineC \, \cup \,  \pi.p \, \cup \, (\PEvents{\setminus}\{\pi\}).q^{1}_{k-1}$,   for $0 < k \leq \maxa$ 
          \item  $q^{1}_0 \defn \epsilon$. 
\end{itemize}
Obviously, $\CND{\pi}{p} =  \PCND{\pi}{p}{1}$.

\paragraph*{Bounded eventually}
In this case, \emph{an event $\pi$ must eventually occur within $m$ scan cycles}.
Formally, for $\pi \in \UEvents$ and $m > 0$,  we write ${\BE{\pi}{m}}$ to denote the property $q_\maxa^m $ defined as follows:
\begin{itemize}
	\item $q^h_k\defn \fineC.q_{\maxa}^{h-1} \cup \pi.{\PEvents}^{\leq k-1} \cup (\PEvents {\setminus}\{\pi\}).q_{k-1}^{h}   $, for $1 < h \leq m$ and $0 < k \leq \maxa$
	\item $q_0^h \defn\fineC.q^{h-1}_{\maxa}$, for $1 < h \leq m$ 
	\item $q^1_k\defn \pi.{\PEvents}^{\leq k-1} \cup (\PEvents {\setminus}\{\pi\}).q_{k-1}^{1}$,  for  $0 < k \leq \maxa$ 

	\item $q_0^1 \defn {\pi}.\fineC$.
\end{itemize}

\paragraph*{Bounded persistency}

While in  ${\BE{\pi}{m}}$ 
the event $\pi$ must eventually occur within $m$ scan cycles, bounded persistency prescribes that an  event $\pi$ \emph{must occur in all subsequent $m$ scan cycles}. 
%
Formally,  for $\pi \in \UEvents$ and $m > 0$, we write ${\BP{\pi}{m}}$ to denote the property $q_\maxa^{m} $ defined as follows:
\begin{itemize}
	\item $q_{k}^h \defn {\pi}. \PEvents^{\leq k-1};
	q^{h-1}_\maxa  \cup{(\PEvents{\setminus}\{\pi\})}.
	q_{k-1}^{h}$,  for $1 < h \leq m$ and $0 < k \leq \maxa$
	\item $q_{0}^h\defn \pi.\fineC.q^{h-1}_\maxa$,  for $1 < h \leq m$
	\item $q_{k}^1 \defn {\pi}. \PEvents^{\leq k-1}  \cup{(\PEvents{\setminus}\{\pi\})}.
	q_{k-1}^{1}$,  for $0 < k \leq \maxa$
	\item $q_0^1  \defn \pi.\fineC$.
\end{itemize}

\paragraph*{Bounded absence}
The negative counterpart of bounded persistency is bounded absence. This property says that \emph{an  event $\pi$ must not appear in all subsequent $m$ scan cycles}{.
Formally,  for $\pi \in \UEvents$ and $m> 0$, we write ${\BA{\pi}{m}}$ to denote the property $q_{m} $ defined as follows:
	\begin{itemize}

		\item  $q_{h} \defn (\PEvents{\setminus}\{\pi\})^{\leq \maxa};q_{h-1}$, for $0 < h \leq m$
		\item $q_{0} \defn \epsilon$.
\end{itemize}

\subsubsection{Compound conditional   properties}
The properties above can be combined together to express more detailed PLC behaviours. Let us see a few examples with the help of the use case of Section~\ref{sec:case-study}. 

\paragraph*{Conditional bounded eventually}
According to this property, 
if a triggering event $\pi_1$ occurs then a second event $\pi_2$ must eventually occur 
between the $m$-th  and the $n$-th scan cycle, with $1\leq m \leq n$.
Formally, for  $\pi_1,\pi_2 \in \UEvents$ and $1\leq m \leq n$, we define $\CBE{\pi_1}{\pi_2}{m}{n}$  as follows: 
	 \[ \CBE{\pi_1}{\pi_2}{m}{n} \defn \CND{\pi_1 \, }{(\PEvents^{\leq \maxa})^{m-1};\BE{\pi_2}{n-m+1}}. \]
         Intuitively, if the  event $\pi_1$ occurs then the event $\pi_2$
         must eventually occur between the scan cycles $m$ and $n$. In case we would wish that $\pi_2$ should not occur before the $m$-th scan cycle, then the property would become: $ \CND{\pi_1 \, }{\BA{\pi_2}{m-1};\BE{\pi_2}{n-m+1}}.$ 

As an example,  we might enforce a conditional bounded eventually property in  $\mathrm{PLC}_1$ of our use case in Section~\ref{sec:case-study} to prevent water overflow in the tank $T_2$ due to a misuse of the valve connecting the tanks $T_1$ and $T_2$. Assume that $z \in \mathbb{N}$ is the time (expressed in scan cycles) required to overflow the tank $T_2$ when the valve is open and the level of tank $T_2$ is low.  We might consider to enforce a property of the form  
\begin{math}
\CBE{\ask{open}}{\overline{\close}}{1}{w}
\end{math}, 
with $w < < z$, saying that if  $\mathrm{PLC}_1$ receives a request to open the valve, 
then the valve will be eventually closed
within at most $w$ scan cycles (including the current one). This will ensure that if a water request coming from $\mathrm{PLC}_2$ is received by $\mathrm{PLC}_1$ then the valve controlling the flaw from $T_1$ to $T_2$ will remain open for at most $w$ scan cycles, with $w < < z$,  preventing the overflow of $T_2$. 

\paragraph*{Conditional bounded persistency}
Another possibility is to combine conditional with bounded persistency to prescribe that if a triggering event $\pi_1$ occurs then the event $\pi_2$ must occur in the $m$-th  scan cycle and in all subsequent $n-m$ scan cycles, for $1\leq m \leq n$. 
Formally,  for  $\pi_1,\pi_2 \in \UEvents$ and $1 \leq m \leq n$, we write $\CBP{\pi_1}{\pi_2}{m}{n}$ to denote the property defined as:
\[ \CBP{\pi_1}{\pi_2}{m}{n} \defn \CND{\pi_1 \, }{(\PEvents^{\leq \maxa})^{m-1};\BP{\pi_2}{n-m+1}}. \]


As an example,  we might enforce a conditional bounded persistency property in  $\mathrm{PLC}_3$ of our use case in Section~\ref{sec:case-study} to prevent damages of  $\mathit{pump}_3$ due to lack of water in tank $T_3$.
Assume that $z \in \mathbb{N}$ is the minimum time (in terms of scan cycles) required to fill $T_3$, \emph{i.e.}, to pass from level $l_3$ to level $h_3$, when  $\mathit{pump}_3$ is off.   We might consider to enforce a property of the form 
\begin{math}
\CBP{l_3}{\overline{\off{3}}}{1}{z}
\end{math}, 
 to prescribe that if the tank reaches its  low level (event $l_3$) then  $\mathit{pump}_3$ must remain  off (event $\overline{\off{3}}$) for $z$ consecutive scan cycles. This will ensure enough water in tank $T_3$ to prevent damages on  $\mathit{pump}_3$.

  Notice that all previous properties have a single triggering event $\pi_1$;
 in order to deal with multiple triggering events it is enough to replace the conditional operator with the case construct. 

\paragraph*{Conditional bounded absence (also called Absence timed~\cite{toolchain})} Finally, we might consider to combine conditional with bounded absence to formalise a property saying that if a triggering event $\pi_1$ occurs then another event $\pi_2$  must not occur in 
the $m$-th  scan cycle and in all subsequent $n-m$ scan cycles, with $1\leq m \leq n$.
Formally,  for $\pi_1,\pi_2 \in \UEvents$ and  $1\leq m \leq n$, we write $\CBA{\pi_1}{\pi_2}{m}{n}$ to denote the property defined as follows:
\[ \CBA{\pi_1}{\pi_2}{m}{n} \defn \CND{\pi_1}{\, (\PEvents^{\leq \maxa})^{m-1}; \BA{\pi_2}{n-m+1}}. \]
%
Intuitively, if the triggering event $\pi_1$ occurs then the event $\pi_2$ must not occur in the time interval between  the  $m$-th and the  $n$-th scan cycle. 

As an example,  we might enforce a conditional bounded absence property in  $\mathrm{PLC}_2$ of our use case in Section~\ref{sec:case-study} to prevent water overflow in the tank $T_2$ due to a misuse of the valve connecting the tanks $T_1$ and $T_2$. Assume that $z \in \mathbb{N}$ is the time (expressed in scan cycles) required to empty the tank $T_2$ when the valve is closed and the tank $T_2$ reaches its high level $h_2$. Then,  we might consider to enforce a property of the form 
\begin{math}
\CBA{h_2}{\overline{\ask{open}}}{1}{w}
\end{math}, 
for $w < z$,
to prescribe that if the tank reaches its high level (event $h_2$) then $\mathrm{PLC}_2$ may not send a requests to open the valve (event $\overline{\ask{open}}$) for the subsequent $w$ scan cycles. This ensures us that when $T_2$ reaches its high level then it will not ask for incoming water for at least $w$ scan cycles, so preventing  tank overflow.

\subsubsection{Compound persistent conditional properties}
Now, we formalise in our language of regular properties a number of correctness properties  used by Frehse et al.\ for the verification of hybrid systems~\cite{toolchain}.  More precisely,   we formalise bounded versions of their properties. 
\paragraph*{Bounded minimum duration} When a triggering event $\pi_1$ occurs, if a second event $\pi_2$ occurs within $m$ scan cycles then this event \emph{must} appear in \emph{at least} all subsequent $n$ scan cycles (see Figure~\ref{fig:minimum-duration}). Formally, we can express this property as follows: 
	  \[ \MinD{\pi_1}{\pi_2}{m}{n} \defn \CND{\pi_1}{\PCND{\pi_2}{\BP{\pi_2}{n}}{m} }. \]

 Note that the property  $\MinD{\pi_1}{\pi_2}{m}{n}$ is satisfied each time $\CBP{\pi_1}{\pi_2}{m}{m+n}$ is. The vice versa does not hold as in $\CBP{\pi_1}{\pi_2}{m}{m+n}$ the event $\pi_2$ is required to occur in the whole time interval $[m, m{+}n]$, while, according to $\MinD{\pi_1}{\pi_2}{m}{n}$, the event $\pi_2$ might not occur at all.  
\begin{figure}[t]
	\centering
	\includegraphics[width=0.75\textwidth]{./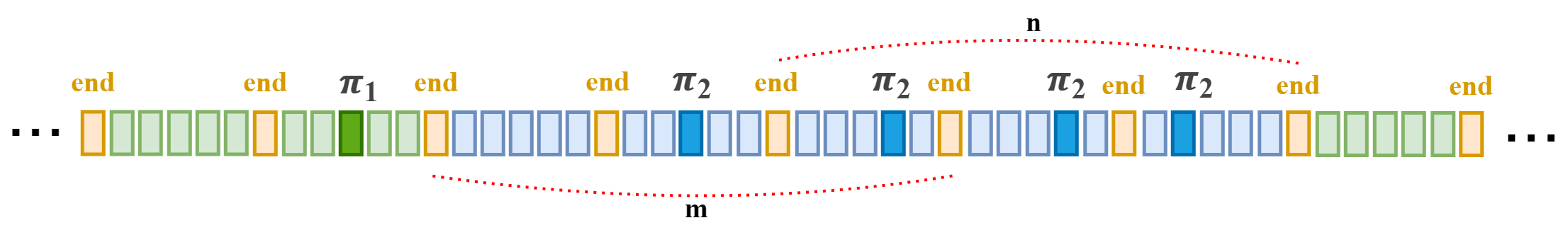}
	\caption{A trace satisfying a minimum duration property 
	 $\MinD{\pi_1}{\pi_2}{m}{n}$, for $m=n=3$.
	}
	\label{fig:minimum-duration}
\end{figure}

\paragraph*{Bounded maximum duration}
When an event $\pi_1$ occurs, if a second event $\pi_2$ occurs within $m$ scan cycles then the same event  $\pi_2$ \emph{may} occur in \emph{at most} all subsequent $n$ scan cycles. 
%
%
 Formally, we can represent this property as follows:
 	 \[ \MaxD{\pi_1}{\pi_2}{m}{n} \defn \CND{\pi_1}{ \PCND{\pi_2}{(\PEvents^{\leq \maxa})^{n};\BA{\pi_2}{1}}{m}}. \]
The property  $\MaxD{\pi_1}{\pi_2}{m}{n}$ is satisfied each time the property $\CBP{\pi_1}{\pi_2}{m}{m+n}; \BA{\pi_2}{1}$ is. Again, the vice versa does not hold. 

\paragraph*{Bounded response} When an event $\pi_1$ occurs, if
a second event $\pi_2$ occurs within $m$ scan cycles then a third event $\pi_3$  appears within $n$ scan cycles. 
Formally, we can express this property as follows: 
	 	\[ \BR{\pi_1}{\pi_2}{\pi_3}{m}{n} \defn \CND{\pi_1}{ \PCND{\pi_2}{ \BE{\pi_3}{n} }{m}}. \]

\paragraph*{Bounded invariance}
Whenever an event $\pi_1$ occurs,  if  $\pi_2$ occurs within $m$ scan cycles then  $\pi_3$ will persistently occur for at least $n$ scan cycles.
%
Formally,  we can express this property as follows:
 	 \[ \BI{\pi_1}{\pi_2}{\pi_3}{m}{n} \defn \CND{\pi_1}{\PCND{\pi_2}{\BP{\pi_3}{n} }{m}}. \]

\subsubsection{Bounded mutual exclusion}
A different class of properties prescribes the possible occurrence  of  events  $\pi_i \in \PEvents$, for $i \in I$, in mutual exclusion within $m$ consecutive scan cycles. 
%
%
%
Formally,  for  $\pi_i\in \UEvents$, $i\in I$ and $m\ge 1$, we write $\BME{(\,\bigcup_{i\in I}\{ \pi_i\} \,)}{m}$, for the  property \nolinebreak $q^{m}_\maxa$ \nolinebreak defined \nolinebreak  as:

%
\begin{itemize}
	\item $q^h_k\defn \fineC.q^{h-1}_\maxa \, \cup \,  \bigcup_{i\in I} \pi_i.(\bigcap_{j\in I\setminus\{i\}} \BA{\pi_j}{h}) \cup (\PEvents {\setminus}\bigcup_{i \in I} \{\pi_i\}).q^{h}_{k-1}$, for $1 < h \leq m$ and $0 < k \leq \maxa$
	\item $q^{h}_0\defn \fineC.q^{h-1}_\maxa$, for $1 < h \leq m$
	\item $q^{1}_k\defn  \fineC  \cup   \bigcup_{i\in I} \pi_i.(\bigcap_{j\in I\setminus\{i\}} \BA{\pi_j}{1}) \cup (\PEvents {\setminus}\bigcup_{i \in I} \{\pi_i\}).q^{1}_{k-1}$,   for $0 < k \leq \maxa$ 
	\item  $q^{1}_0 \defn \epsilon$. 
\end{itemize}
 As an example, we might enforce a bounded mutual exclusion property in the $\mathrm{PLC}_1$ of our use case of Section~\ref{sec:case-study} to prevent   chattering of the valve, \emph{i.e.},  rapid opening and closing which may cause  mechanical failures on the long run.
In particular, we might consider to enforce a property of the form
\begin{math} \BME{(\{ \overline{\open}, \overline{\close}  \})}{3}
\end{math}
saying that  within $3$ consecutive scan cycles the opening and the closing of the  valve (events $\overline{\open}$ and $\overline{\close}$, respectively) may only occur in mutual exclusion. 
 
\smallskip
In Table~\ref{tab:properties-overview}, we summarise all local properties represented and discussed in this section.  

%

\subsection{Global properties}
	As expected, the previously described local properties become global ones by applying the Kleene-operator $\ast$. Once in this form, we can put these properties in conjunction between them. Here, we show two global properties, the first one is built top of  conditional bounded persistency properties and the second one is built on top of a conditional bounded eventually property. 
	
	As a first example, we might consider a global property saying that whenever an event $\pi$ occurs then all events $\pi_i$ , for $i \in I$, must occur in the $m$-th  scan cycle and in all subsequent $n-m$ scan cycles, for $1\leq m \leq n$. 
	Formally, for $\pi,\pi_i \in \UEvents$,  $i\in I$, and $1\leq m\leq n$:
        $\bigcap_{i\in I} (\CBP{\pi}{\pi_i}{m}{n})^\ast$. 
	

\begin{table}[t]
\small
	\begin{center}
		\begin{tabular}{p{0.315\textwidth}p{0.6\textwidth}}
			\hlineB{3}
%
			Case: & if $\pi_i$ occurs then $p_i$ should be satisfied, for $i\in I$ \\
			
			
			Persistent conditional: &  for $m$ scan cycles, if  $\pi$  occurs then $p$  should be satisfied\\
				
			Bounded eventually: &  event $\pi$ must eventually occur within $m$ scan cycles \\
				
			Bounded persistency: &  event $\pi$ must occur in all subsequent $ m$ scan cycles \\
			
			Bounded absence: &  even $\pi$ must not occur in all subsequent $m$ scan cycles \\
			\hline	
			Conditional bounded eventually: & if  $\pi_1$ occurs then $\pi_2$ must eventually occur in the scan cycles $[m,n]$\\

			Conditional bounded persistency: & if $\pi_1$ occurs then  $\pi_2$ must occur in all  scan cycles of $[m,n]$\\

			Conditional bounded absence: & if $\pi_1$ occurs then  $\pi_2$  must not occur in all  scan cycles of $[m, n]$ \\

			\hline	
			(Bounded) Minimum duration: & when $\pi_1$,  if  $\pi_2$ in $[1,m]$  then $\pi_2$ persists for at least  $n$ scan cycles \\[1pt]
			
			(Bounded) Maximum duration: & when  $\pi_1$, if  $\pi_2$ in  $[1,m]$ then $\pi_2$ persists for at most  $n$ scan cycles \\[1pt]
			
			Bounded response: & when $\pi_1$, if $\pi_2$ in $[1,m]$ them $\pi_3$ appears within $n$ scan cycles \\[1pt]
			
			Bounded invariance: & when $\pi_1$, if  $\pi_2$ in $[1,m]$ 
			then $\pi_3$ persists  for at least $n$ scan  cycles \\ [1pt]
			\hline
			Bounded mutual exclusion & events $\pi_i$ may  only occur in mutual exclusion within $n$  scan cycles \\
			
			
			


			\hlineB{3}
		\end{tabular}
	\end{center}	
	\caption{Overview of  local properties.}
	\label{tab:properties-overview}
\end{table}

We might enforce this kind of property in  $\mathrm{PLC}_1$ of our use case of Section~\ref{sec:case-study}.
Assume  $z\in \mathbb{N}$ being the time (expressed in scan cycles) required to overflow the tank $T_1$ when the level of the tank $T_1$ is low and both pumps are on and the valve is closed.
Then, the property would be 
\begin{math}
(\CBP{l_1}{\overline{\on{1}}}{1}{w})^\ast\cap(\CBP{l_1}{\overline{\on{2}}}{1}{w})^\ast
\end{math},
with $w< z$,
saying that if the tank $T_1$ reaches its low level (event $l_1$) then both $\mathit{pump}_1$ and $\mathit{pump}_2$ must be  on (events $\overline{\on{1}}$ and $\overline{\on{2}}$) in all subsequent $w$ scan cycles, starting from the current one.

As a second example, we might consider a more involved global property relying on conditional bounded eventually, persistent conditional, and bounded persistency.   Basically, the property says that whenever an event $\pi_1$ occurs then a second event $\pi_2$ must eventually occur between the $m$-th scan cycle and the $n$-th scan cycle, with $1\leq m \leq n$; moreover, it must occur for $d$ consecutive scan cycles,  for $1 \leq d$ (see Figure~\ref{fig:cbe-ext2}).
%
Formally, the property is the following: 
		\[	\big(\CBE{\pi_1}{\pi_2}{m}{n}\big)^\ast \cap \, \big( \CND{\pi_1}{\PCND{\pi_2}{\PEvents^{\leq \maxa}; \BP{\pi_2}{d-1}}{n}}\big)^\ast \]
        for $\pi_1,\pi_2 \in \UEvents$, with $1\leq m\leq n$ and $d \ge 1$. Intuitively, the  property $(\CBE{\pi_1}{\pi_2}{m}{n})^\ast$ requires that when $\pi_1$ occurs the event $\pi_2$ must eventually occur between the $m$-th scan cycle and the $n$-th scan cycle. The remaining part of the property says  if the event $\pi_2$ occurs within the $n$-th scan cycle  (recall that $m \leq n$) then it must persist for $d$ scan cycles.

\begin{figure}[t]
	\centering
	\includegraphics[width=0.75\textwidth]{./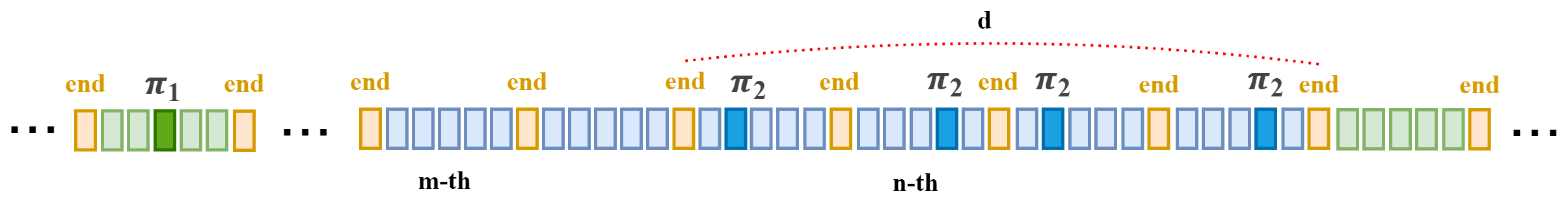}
	\caption{A trace satisfying the just mentioned property 
		for some $m$, $n = m+4$ and $d=4$.
	}
	\label{fig:cbe-ext2}
\end{figure}

In this manner,  we might strengthen the conditional bounded eventually property given in Section~\ref{sec:local_prop} for   $\mathrm{PLC}_1$ of our use case 
to prevent water overflow in the tank $T_2$.
Let  $z \in \mathbb{N}$ be the time (expressed in scan cycles) required to overflow the tank $T_2$ when the valve is open and the level of tank $T_2$ is low. The property is the following: 
\begin{displaymath}{\small
		\big(\CBE{\ask{open}}{\overline{\close}}{1}{w} \big)^\ast \, \cap \, \big( \CND{\ask{open}}{\PCND{\overline{\close}}{\PEvents^{\leq \maxa}; \BP{\overline{\close}}{d-1}}{w}}\big)^\ast 
	}
\end{displaymath}
where $w < < z$, and $d \in \mathbb{N}$ is the time (expressed in scan cycles) required to release in $T_3$ the (maximum) quantity of water that the tank $T_2$ may accumulate in $w$ scan cycles.
The first part of the property says that if  $\mathrm{PLC}_1$  receives a request to open the valve (event $\ask{open}$)  then the valve must be eventually closed (event $\overline{\close}$ must eventually occur) within at most $w$ scan cycles. The remaining part of the property says that  when  $\mathrm{PLC}_1$ receives a request to open the valve (event $\ask{open}$), if the valve gets closed (event $\overline{\close}$) within the $w$-th scan cycle, then it must remain closed for the  $d$ consecutive scan cycles.
Here, $d$ depends both on  the maximum level of water  reachable in $T_2$ in $w$ scan cycles and on the physical law governing the water flow from $T_2$ to $T_3$.

	\section{Monitor synthesis}
	\label{sec:synthesis}

	In this section, we provide an algorithm to synthesise monitors from  regular properties  whose events are contained in (the set  of events associated to) a fixed set   $ \PSet$ of   observable controller actions.
        More precisely, given  
        a global property $e \in \PropG$
       the algorithm  returns an edit automaton $\synthES{\! e \!}{}^{\PSet} \in \mathbbm{Edit}$
    that is capable to enforce 
    the  property $e$  during  the execution of a generic controller whose possible actions are confined to those in  $\PSet$. 
    The synthesis algorithm is defined in Table~\ref{tab:synthesis-logic} by induction on the structure of the global/local property given in input; 	as we distinguish global  properties from local ones,  we define our  algorithm in two steps.
    {\color{black}
      \begin{remark}
        We recall that, according to the operational semantics defined in Table~\ref{tab:sem-ctrl},  all controller actions $\alpha$  are observable and they basically coincide with the set $ \Events$ used to build up the enforcing properties defined in Section~\ref{sec:logics}. As a consequence,  we will synthesise enforcing monitors that may observe any action of the controller under scrutiny and may act consequently.
        \end{remark}
    }

    The monitor  $\synthES{\! p^{\ast} \!}{}^{\PSet}$ associated to a global property $p^{\ast}$ 
    is an edit automaton defined via the recursive equation $\mathsf X = \synthESP{p}{\mathsf X}{\PSet}$, to
recursively enforce the local property $p$. The monitor  $\synthES{\!e_1\cap e_2\!}{}^{\PSet}$ is given by the \emph{cross product} between the edit automata $\synthES{e_1}{}^{\PSet}$ and $\synthES{e_2}{}^{\PSet}$, to accept only traces that satisfy both $e_1$ and $e_2$;  the definition of the cross product 
between two edit automata 
recalls that for finite state automata,  and it is reported in the appendix in Table~\ref{tab:edit-automata-product}. 
%
The monitor ${\synthESP{\! p_1\cap p_2 \!}{\mathsf X}{\PSet}}$ 
        is given by the cross product 
        between the edit automata ${\synthESP{p_1}{\mathsf X}{\PSet}}$ and ${\synthESP{p_2}{\mathsf X}{\PSet}}$.
        {\color{black}
	The monitor ${\synthESP{p_1;p_2}{\mathsf X}{\PSet}}$  is given by  
        the  automaton ${\synthESP{p_1}{\mathsf Z}{\PSet}}$, 
        where $\mathsf Z  = {\synthESP{p_2}{\mathsf X}{\PSet}} $;  basically $\mathsf Z$ ties the final states of the automaton enforcing $p_1$ with the initial state of the automaton enforcing $p_2$ 
        (\emph{e.g.}, ${\synthESP{\epsilon;p_2}{\mathsf X}{\PSet}} = {\synthESP{\epsilon}{\mathsf Z}{\PSet}} = \mathsf Z$, for  $ {\mathsf Z} = {\synthESP{p_2}{\mathsf X}{\PSet}}$ ).  
	The monitor associated to a union property $\cup_{i\in I}\pi_i.p_i$ does the following: (i) 
	\emph{allows} all actions associated to  the events $\pi_i$, (ii) \emph{inserts} an action associated to  some admissible event $\pi_i$ only when the controller wishes to prematurely complete the scan cycle, i.e., it emits an $\fineC$-action, and (iii) \emph{suppresses} any  other action except for $\tick$- and $\fineC$-actions. }

        {\color{black}
          Thus, the \emph{mitigation} of the enforcement is actually implemented in the monitors synthesised from union properties. 
 In practise,  when  the controller under scrutiny complies with the property enforced by the monitor, 
 the two components, monitor and controller, evolve in a tethered fashion (by applying rule \rulename{Allow}), moving through related correct states.
 However, if the controller gets somehow corrupted (for instance, due to the presence of a malware)  then the two components will get misaligned  reaching unrelated states. 
 In this case, the enforcer mitigates the attack  by suppressing the remaining actions emitted by the controller   (by applying rule \rulename{Suppress}) until the controller reaches the end of the scan cycle, signalled by the emission of the   $\fineC$-action\footnote{As said in Section\ref{sec:attacker-model}, a  malware that aims to take control of the plant  has no interest in delaying the scan cycle and risking the violation of the maximum cycle limit  whose consequence would be the immediate shutdown of the controller~\cite{BLACKHAT2016}. }.
 After that, 
  if monitor and controller are not aligned the monitor will command the insertion of a safe trace,  without any involvement of the  controller, via one or more applications of the rule \rulename{Insert}.
 Safe traces inserted in full autonomy by our enforcers always terminate with an $\fineC$. 
 Thus, when both the controller and the monitor will be aligned, at the end of the scan cycle,  they will synchronise on the action $\fineC$, via an application of the  rule \rulename{Allow}, and from then on they may continue in a tethered fashion. 
	\begin{table}[t] 
		\begin{displaymath}
		{\small
			\begin{array}{rcl}
                          	\synthES{p^{\ast}}{ }^{\PSet} &\defn& 
			\mathsf X, \text{ for } \mathsf X = \synthES{p}{\mathsf X}^{\PSet} 
                        \\[2pt]
			 \synthES{e_1 \cap e_2}{ }^{\PSet} &\defn&  \TimesP{ \synthES{e_1}{ }^\PSet}{\synthES{e_2}{}^{\PSet}}{\Xrec},\q \Xrec \text{ fresh}

			 \\[1pt] 
		
			\synthES{\epsilon}{\mathsf X}^{\PSet} &\defn& \mathsf X \\[1pt]
			\synthES{p_1 \cap p_2}{\Xrec}^{\PSet} & \defn & \TimesP{\synthES{p_1}{\Xrec}^{\PSet}}{\synthES{p_2}{\Xrec}^{\PSet}}{\Xrec}  \\[1pt]
			\synthES{p_1;p_2}{\mathsf X}^{\PSet} &\defn& 
			\synthES{p_1}{\Zrec}^{\PSet},\textrm{ for } \Zrec = \synthES{p_2}{\mathsf X}^{\PSet}, \q \Zrec \text{ fresh} \\[1pt]
			
			%
			
			\synthES{\bigcup_{i\in I}\pi_i.p_i}{\mathsf X}^{\PSet} &\defn&  \Zrec,  \text{  for }  \\[1pt]
			&  
                        \multicolumn{2}{l}{
                        {\color{black}
			\Zrec =
                        \begin{cases}
                        \sum\limits_{\substack{{\scriptscriptstyle i\in I} }}
			\pi_i.\synthES{p_i}{\mathsf X}^{\PSet}
		
			+ \sum\limits_{\substack{{\scriptscriptstyle i\in I} }}
			{\scriptstyle \eact{\,\fineC}{\pi_i}}.\synthES{p_i}{\mathsf X}^{\PSet}
		
			+ \sum\limits_{\subalign{{\scriptscriptstyle \alpha \in {\mathcal Q} }}} {^{-}}\alpha.\Zrec, \text{  if } \fineC \not \in \cup_{i \in I}\pi_i
                        \\[4pt]
                         \sum\limits_{ \substack{{\scriptscriptstyle i\in I} }}
			\pi_i.\synthES{p_i}{\mathsf X}^{\PSet}
		
			+ \sum\limits_{\subalign{{\scriptscriptstyle \alpha \in {\mathcal Q} }}} {^{-}}\alpha.\Zrec, \text{  otherwise} 
                        
                        \end{cases}
                        }
                        }\\[4pt]

&& {\color{black} 
\q \textrm{ where }{\scriptstyle {\mathcal Q} \, = \,  {\PSet} \setminus  (\cup_{i \in I}\pi_i\cup\{\tick,\fineC\})}
}  
        
			\end{array}
		}
		\end{displaymath}
		
		\caption{Monitor synthesis  from properties in $\PropG$ and  $\PropL$.}
		\label{tab:synthesis-logic}
	\end{table}

 \begin{remark}
   \label{rem:mitigation-time}
   {\color{black}
     Note that even whe the controller is completely unreliable and the monitor inserts an entire safe trace, 	the assumption made in Remark~\ref{rem:maximum-time} ensures us that the enforced scan cycle always ends well before a violation of the maximum  cycle limit.
     }
 \end{remark}

 }

%
%

	Now, we calculate the complexity of the synthesis algorithm based on  the number of occurrences of the operator $\cap$ in $e$ and the dimension of $e$, $\ctrlDim{e}$, \emph{i.e.},  the number  of all operators occurring in $e$.  
     {\color{black}   Intuitively, the size of a property is given by the number of operators occurring in it.
	\begin{definition}
          Let $\propDim {}:\PropG\cup\PropL\rightarrow\mathbbm{N}$ be a property-size function  defined as: 
			\begin{displaymath}
				{\small
			\begin{array}{lcl@{\hspace*{0.5cm}}lcl}
						\propDim{p^*} &\defn& \propDim{ p } +1
						&
				\propDim{e_1 \cap e_2} &\defn&  \propDim{e_1}	+\propDim{e_2}	+1\\[1pt]		
			\propDim{\epsilon} &\defn& 1 
			&
			\propDim{p_1;p_2} &\defn& \propDim{p_1} + \propDim{p_2}+1\\[1pt]
				\propDim{p_1\cap p_2} &\defn& \propDim{p_1} + \propDim{p_2}+1
			&
			\propDim{\bigcup_{i\in I}\alpha_i.p_i} &\defn& |I| + \sum_{i \in I}\propDim{ p_i }. 
			\end{array}
		}
			\end{displaymath}	
	\end{definition}
        }

	\begin{proposition}[Complexity]
		\label{prop:poly2}
		Let $e \in \PropG$ be a global property and  $\PSet$ be a set of actions such that $\events e \subseteq \PSet$.  The complexity of the  algorithm to synthesise $\funEditP{e}{\PSet}$ is $\mathcal{O}( {|}\PSet{|} \cdot m^{k+1} )$, with $m=\ctrlDim{e}$ and $k$ being the number of occurrences of the operator $\cap$ in $e$.
\end{proposition}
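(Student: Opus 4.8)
The plan is to proceed by structural induction on the global/local property given in input to the synthesis algorithm of Table~\ref{tab:synthesis-logic}, tracking two quantities simultaneously: the size of the synthesised automaton (measured, say, in number of states or number of transition rules) and the cost of producing it. First I would fix notation: write $m = \ctrlDim{e}$ for the total number of operators in $e$, and let $k$ be the number of occurrences of $\cap$ in $e$. The key observation is that every construct in the synthesis algorithm except the cross product $\mathsf{Prod}$ builds the output automaton in time \emph{linear} in the sizes of the recursively-synthesised sub-automata: the cases for $\epsilon$, for $p_1;p_2$ (sequential composition, which only renames a recursion variable), and for $\bigcup_{i\in I}\pi_i.p_i$ (which emits one allow-, one insert-, and one suppress-transition per relevant event, hence $\mathcal{O}(|\PSet|)$ extra transitions, and $|\PSet|$ is bounded in terms of $\events{e}\subseteq\PSet$) all contribute only an additive polynomial overhead. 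Consequently, \emph{in the absence of} $\cap$, a straightforward induction gives that $\funEditP{e}{\PSet}$ has size polynomial in $m$ (indeed $\mathcal{O}(m)$ up to the $|\PSet|$ factor), synthesised in polynomial time — this is the base case $k=0$.

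The inductive step is the cross-product case. When $e = e_1 \cap e_2$ (or $p = p_1 \cap p_2$), the synthesised automaton is $\mathsf{Prod}$ of the two sub-automata, and the crucial quantitative fact — which I would extract from the definition of the cross product in Table~\ref{tab:edit-automata-product} in the appendix — is that $|\mathsf{Prod}(\Edit_1,\Edit_2)|$ is bounded by the \emph{product} $|\Edit_1|\cdot|\Edit_2|$ (states of the product are pairs of states), and that it is computed in time polynomial in that product. So if $e_1$ contains $k_1$ copies of $\cap$ and $e_2$ contains $k_2$, with $k = k_1 + k_2 + 1$, and $m_1,m_2$ are their dimensions with $m_1 + m_2 < m$, the induction hypothesis gives $|\funEditP{e_i}{\PSet}| = \mathcal{O}(m_i^{k_i+1})$, whence $|\funEditP{e}{\PSet}| = \mathcal{O}(m_1^{k_1+1} \cdot m_2^{k_2+1}) = \mathcal{O}(m^{k_1+k_2+2}) = \mathcal{O}(m^{k+1})$, using $m_1,m_2 \le m$ and $k_1 + k_2 + 2 = k+1$. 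The time to build it is dominated by the same product (plus the recursive synthesis costs, which are absorbed), giving the claimed $\mathcal{O}(m^{k+1})$ bound; I would also note that the case $e_1\cap e_2$ at the global level versus $p_1\cap p_2$ at the local level are handled uniformly since both invoke $\mathsf{Prod}$.

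The main obstacle I anticipate is bookkeeping rather than conceptual depth: one must set up the induction so that the \emph{two} exponents — number of $\cap$'s and dimension — combine correctly under the product, which requires the strengthened induction hypothesis to be about both size \emph{and} running time, and requires verifying that the non-$\cap$ constructs genuinely do not blow up the $\cap$-count (so that the exponent $k$ is additive across $\mathsf{Prod}$ and unchanged by the other cases). A secondary subtlety is the dependence on $\PSet$: since $|\PSet|$ appears additively in the union case and $\events{e}\subseteq\PSet$ is the hypothesis, I would either treat $|\PSet|$ as a fixed parameter folded into the constant, or observe that only events actually occurring in $e$ contribute non-trivially, keeping everything in terms of $m$. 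Finally I would remark that the sequential-composition and Kleene-star cases introduce fresh recursion variables but no new states beyond those of the sub-automata, so they are safely linear — this is the one place where a careless reading of Table~\ref{tab:synthesis-logic} could suggest an unwanted duplication that in fact does not occur.
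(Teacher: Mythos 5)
Your proposal is correct and follows essentially the same route as the paper: the paper also proceeds by structural induction, isolating (i) a lemma bounding the cost of the cross product by the product of the numbers of derivatives of the two sub-automata (times an alphabet-dependent factor folded away, as you do with $\PSet$) and (ii) a lemma bounding the number of derivatives of $\funEditP{e}{\PSet}$ by $m^{k+1}$ via exactly your exponent arithmetic $m_1^{k_1+1}\cdot m_2^{k_2+1}\le m^{k+1}$ with $k_1+k_2+1=k$, while the non-$\cap$ cases (empty property, sequencing, union, Kleene star) are shown to add only linear overhead. The only presentational difference is that the paper packages the time analysis as a recurrence $T(m)$ built from these two lemmas, whereas you carry size and running time together in one strengthened induction; the content is the same.
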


	In the following, we prove that the enforcement induced by our synthesised monitors enjoys the properties stated in the Introduction: \emph{determinism preservation},  \emph{transparency}, \emph{soundness}, \emph{deadlock-freedom}, \emph{divergence-freedom}, and \emph{scalability}. 
        %
        In this section, with a small abuse of notation, given a set of observable actions $\PSet$, we will use $\PSet$ to denote also the set of the corresponding events.



        Given a deterministic global property $e$, our synthesis algorithm returns a   deterministic enforcer (according to Definition~\ref{def:semantic-deterministic-enfrocement}), {\color{black} \emph{i.e.}, an enforcer that can be effectively implemented. } Formally, 
        {\color{black}
     	\begin{proposition}[Deterministic preservation]
	\label{prop:sematic-determinisct-enforcement}
	Given a deterministic global property $e \in \PropG$ over a set of events $\PSet$. The edit automaton
	$\funEditP{e}{\PSet}$ is  deterministic. 
        \end{proposition}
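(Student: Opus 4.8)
The plan is to isolate the combinatorial core of the statement in a purely syntactic property of edit automata --- call it \emph{output-determinism} --- prove that the synthesis of Table~\ref{tab:synthesis-logic} produces output-deterministic automata out of deterministic properties, and then read off semantic determinism of the monitored controller directly from the rules \rulename{Allow}, \rulename{Suppress}, \rulename{Insert}. Say that an edit move $\Edit \trans{\lambda} \Edit'$ \emph{emits} the visible action $\alpha$ when $\lambda = \alpha$, emits $\alpha$ when $\lambda = \eact{\gamma}{\alpha}$ (an insertion of $\alpha$ before some $\gamma$), and emits $\tau$ when $\lambda = {^-}\alpha$ (a suppression). Call $\Edit$ \emph{output-deterministic} if every $\Edit'$ reachable from $\Edit$ by a sequence of $\trans{\lambda}$ moves is such that: whenever $\Edit' \trans{\lambda_1} \Edit_1$ and $\Edit' \trans{\lambda_2} \Edit_2$ emit the same visible action --- with the convention above, so that, in particular, \emph{any} two suppressions from $\Edit'$ count as emitting $\tau$ --- then $\Edit_1 = \Edit_2$. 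Output-determinism is trivially inherited by $\trans{\lambda}$-derivatives, since the set of automata reachable from a derivative of $\Edit$ is contained in that reachable from $\Edit$.

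First I would show that output-determinism of $\funEditP{e}{\PSet}$ yields the thesis. Assume $\confCPS{\funEditP{e}{\PSet}}{P} \trans{t} \confCPS{\Edit'}{J'}$, $\confCPS{\Edit'}{J'} \trans{\alpha_1} \confCPS{\Edit_1}{J_1}$, $\confCPS{\Edit'}{J'} \trans{\alpha_2} \confCPS{\Edit_2}{J_2}$, with $\alpha_1 = \alpha_2$; we must derive $\Edit_1 = \Edit_2$. Each step of a monitored controller advances its edit component by some $\trans{\lambda}$ move (via \rulename{Allow}, \rulename{Suppress} or \rulename{Insert}), so $\Edit'$ is reachable from $\funEditP{e}{\PSet}$ and hence output-deterministic. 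Moreover, in each of the three rules the visible action performed by the monitored controller coincides with the action \emph{emitted} (in the sense above) by the underlying edit move: $\alpha$ for \rulename{Allow}, $\tau$ for \rulename{Suppress}, the inserted $\beta$ for \rulename{Insert}. Therefore the edit moves $\Edit' \trans{\lambda_1} \Edit_1$ and $\Edit' \trans{\lambda_2} \Edit_2$ witnessing the two transitions emit the same visible action, and output-determinism of $\Edit'$ gives $\Edit_1 = \Edit_2$. (Note that the possible nondeterminism of the controller $P$ itself --- e.g.\ two summands $s.S_1, s.S_2$ in a sensing state --- is harmless, as in the \rulename{Allow} case it affects only the $J$-component.)

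It then remains to prove that $\funEditP{e}{\PSet}$ is output-deterministic whenever $e$ is deterministic, by induction on the structure of $e$ along Table~\ref{tab:synthesis-logic} (with an inner induction on local properties, and, to cope with the free automata variables appearing in the clauses, proving the slightly stronger statement that $\synthES{p}{\Edit_0}^{\PSet}$ is output-deterministic for every output-deterministic $\Edit_0$ plugged in for the recursion variable). The key state is the one created by the union clause $\synthES{\cup_{i\in I}\pi_i.p_i}{\mathsf X}^{\PSet}$: its allow-summand $\pi_i.\synthES{p_i}{\mathsf X}^{\PSet}$ and its insertion-summand $\eact{\fineC}{\pi_i}.\synthES{p_i}{\mathsf X}^{\PSet}$ both emit $\pi_i$ and lead to the \emph{same} target $\synthES{p_i}{\mathsf X}^{\PSet}$; two distinct indices $i\neq j$ give distinct emitted actions $\pi_i\neq\pi_j$ precisely because $e$ is deterministic; and all suppression-summands ${^-}\alpha.\Zrec$ emit $\tau$ and share the target $\Zrec$, with every suppressed $\alpha$ lying outside $\{\pi_i\}_{i\in I}$, so no two summands emit the same visible action with different targets. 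The continuations $\synthES{p_i}{\mathsf X}^{\PSet}$ are output-deterministic by the inductive hypothesis applied to the (deterministic) sub-properties $p_i$. For the remaining clauses --- sequential composition $\synthES{p_1;p_2}{\mathsf X}^{\PSet}$, the Kleene-star $\mathsf X = \synthES{p}{\mathsf X}^{\PSet}$, and the conjunctions $e_1\cap e_2$ and $p_1\cap p_2$ --- one checks that the corresponding operations (splicing at the recursion variable, recursive closure, and the cross product $\TimesP{\cdot}{\cdot}{\Xrec}$ of Table~\ref{tab:edit-automata-product}) preserve output-determinism at every reachable state.

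I expect the cross-product case to be the main obstacle. For non-$\tau$ actions it is routine: a product move emitting a visible $\alpha\neq\tau$ stems from a pair of component moves both emitting $\alpha$, whose targets are pinned down by the (inductively available) output-determinism of the two factors, hence so is the product target. The delicate point is the $\tau$-case, because a $\tau$-labelled product move may arise not only from a pair of suppressions but also from a \emph{mixed} pair in which one factor allows an action while the other suppresses it; one has to verify, by unfolding the definition of $\TimesP{\cdot}{\cdot}{\Xrec}$ in the appendix, that all such moves out of a given product state still land in one and the same state --- intuitively because the first suppression abandons the current scan cycle and drives both factors into their common post-suppression continuation, from which only further suppressions and the insertion of a safe completion (satisfying both conjuncts) remain possible. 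This is exactly where the precise shape of the product construction is needed; everything else is bookkeeping, and neither the complexity bound nor the well-formedness hypotheses play any role here.
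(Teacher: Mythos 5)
Your proof is correct and shares the paper's two–stage architecture (a determinism invariant for the synthesised automaton, proved by structural induction with the union clause exploiting $\pi_k\neq\pi_h$ and the cross product preserving the invariant, then a lifting through \rulename{Allow}/\rulename{Suppress}/\rulename{Insert}), but your intermediate invariant is genuinely different from the paper's. The paper's Lemma~\ref{lem:sematic-determinisct-enforcement} is \emph{label}-determinism: two transitions of $\funEditP{e}{\PSet}$ with distinct targets have distinct edit labels $\lambda_1\neq\lambda_2$. This says nothing about an allow of $\alpha$ versus an insertion $\eact{\fineC}{\alpha}$, nor about two distinct suppressions, so the paper recovers those cases inside the proposition proof by reasoning about the controller: from $J\trans{\alpha}J_1$ with $\alpha\neq\fineC$ it deduces $J\ntrans{\fineC}$, hence \rulename{Insert} (whose triggers in synthesised monitors are always $\fineC$) cannot yield the second transition, and both transitions collapse to \rulename{Allow} with the same label; the $\tau$/\rulename{Suppress} case is not treated explicitly. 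Your \emph{output}-determinism, keyed to emitted actions, makes the lifting immediate and uniform, covering Suppress and mixed Allow/Insert with no appeal to controller syntax, at the price of a stronger invariant in the synthesis induction; that invariant does hold by construction, since the allow-summand $\pi_i$ and the insertion-summand $\eact{\fineC}{\pi_i}$ share the continuation $\synthES{p_i}{\mathsf X}^{\PSet}$, and all suppression-summands of a state point to one recursion variable. On the point you flag as delicate: the product of Table~\ref{tab:edit-automata-product} is defined syntactically, so there are no ``mixed'' $\tau$-moves to reconcile—each product state suppresses exactly the actions outside $\bigcup_{(i,j)\in H}\lambda_i$, all to the single target $\Zrec$, and determinism of the factors makes $H$ contain at most one pair per action—so the verification you anticipate does go through.
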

}

	Let us move to   \emph{transparency}. 
	Intuitively, the enforcement induced by a {\color{black} deterministic} property $e \in \PropG$  {\color{black} should preserve} any execution trace satisfying  $e$ itself  (Definition~2 at pag.\ 5 of~\cite{Ligatti2005}).

{\color{black}           
	\begin{theorem}[Transparency]
		\label{prop:transparency-logic} 
		Let $e \in \PropG$ be a deterministic global property, $\PSet$ be a set of observable actions such that $\events e \subseteq \PSet$, and $P\in \mathbbm{Ctrl}$ be a controller. 
		Let $t= \alpha_1 \cdots \alpha_n $ be a 
                trace of the controller $P$ with $t\in \regSemantics{e}$.  Then, (1) 
               $t$ is a trace of the edit automaton $\funEditP{e}{\PSet}\!$,  and 
(2) there is no trace $t'=\alpha_1 \cdots \alpha_k\cdot \lambda $  for  $\funEditP{e}{\PSet}\!$ such that  $0 \leq k  <n$ and  
$\lambda \in \{  \suppressE{\alpha_{k+1}}, \insertE{\alpha}{\alpha_{k+1}}\}$, for  
some $\alpha$. 
         
	\end{theorem}
        Basically, conclusion (1) says that all execution trace $t$ (of a controller $P$) satisfying the enforcing property $e$ are \emph{allowed} by the associated enforcer  $\funEditP{e}{\PSet}\!$, while conclusion (2) says that allowing the trace $t$ is the only possible option in the enforcement (this follows by the determinism of $e$). 
         } 
         
	Another important property of our enforcement is  \emph{soundness}~\cite{Ligatti2005}. Intuitively,  a controller under the scrutiny of  a monitor $\funEditP{e}{\PSet}$ should only yield execution traces which satisfy the enforced property $e$, \emph{i.e.}, execution traces which are consistent with its semantics $\regSemantics{e}$ (up to $\tau$-actions).
	\begin{theorem}[Soundness]
	\label{thm:safety-logic}
	Let $e \in \PropG$ be a global property,  $\PSet$ be a set of observable actions such that $\events e \subseteq \PSet$, and $P\in \mathbbm{Ctrl}$ be a controller.
	If $t$ is a  trace of the monitored controller $\confCPS{\funEditP{e}{\PSet}}{P}$ then  $\hat{t}$ is a prefix of some trace in $\regSemantics{e}$ (see Notation~\ref{not:trace} for the definition of the trace $\hat t$). 		
 \end{theorem}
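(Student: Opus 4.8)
The plan is to prove, by induction on the length of the trace $t$, a strengthening of the statement that exposes how the synthesised monitor ``moves inside'' the property $e$. To this end I will first introduce a notion of \emph{property derivative}: I write $e \trans{\pi} e'$ when consuming the event $\pi \in \Events$ turns the global (or local) property $e$ into the residual property $e'$, defined so that $\regSemantics{e}$ is exactly the union of $\{\epsilon\}$ (when $\epsilon\in\regSemantics{e}$) with $\{\pi\cdot u \mid e\trans{\pi}e',\ u\in\regSemantics{e'}\}$. Residuals of $p_1;p_2$ and of unions $\cup_{i\in I}\pi_i.p_i$ are the obvious ones; residuals of $p^{\ast}$ are obtained by treating $p^{\ast}$ as $p;p^{\ast}$ and differentiating $p$ (looping back to $p^{\ast}$ at iteration boundaries, mirroring the equation $\mathsf{X}=\synthESP{p}{\mathsf{X}}{\PSet}$); residuals of conjunctions are taken componentwise, $(e_1\cap e_2)\trans{\pi}(e_1'\cap e_2')$ whenever $e_i\trans{\pi}e_i'$, mirroring the cross product $\TimesP{\cdot}{\cdot}{\Xrec}$. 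The invariant I will maintain is: whenever $\confCPS{\funEditP{e}{\PSet}}{P}\trans{t}\confCPS{\Edit'}{J'}$, there is a residual $e_t$ obtained from $e$ by the derivatives spelled out along $\hat t$, such that $\Edit'=\funEditP{e_t}{\PSet}$ up to unfolding of recursion, $\{\hat t\cdot u \mid u\in\regSemantics{e_t}\}\subseteq\regSemantics{e}$, and $\regSemantics{e_t}\neq\emptyset$.

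The core of the argument is a \emph{correspondence lemma} between monitored transitions and property derivatives, proved by structural induction on the property, following the clauses of Table~\ref{tab:synthesis-logic}. For a union $\cup_{i\in I}\pi_i.p_i$, inspection of the synthesised state $\Zrec$ gives exactly three shapes of monitored step: (i) an \rulename{Allow} of some $\pi_i=\alpha$ emitted by the controller, matching the derivative on $\pi_i$ and appending $\pi_i$ to both $t$ and $\hat t$; (ii) an \rulename{Insert} via an edit $\eact{\fineC}{\pi_i}$, which by the side condition $\Edit\ntrans{\fineC}$ is enabled precisely when the controller emits $\fineC$ while $\fineC\notin\{\pi_i\mid i\in I\}$ (note $\fineC$ is never in the suppression set, so $\Zrec\trans{\fineC}$ iff $\fineC$ is one of the $\pi_i$), again matching the derivative on $\pi_i$ while the controller is held still; and (iii) a \rulename{Suppress} of an action $\alpha\notin\cup_{i\in I}\{\pi_i\}\cup\{\tick,\fineC\}$, which returns to $\Zrec$, i.e.\ leaves the residual unchanged and appends only a $\tau$ (hence does not alter $\hat t$). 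Because $\tick$ and $\fineC$ are never suppressed, scan-cycle boundaries are always respected and $\fineC$-synchronisation is forced as soon as both sides are aligned. The cases $p_1;p_2$, $p_1\cap p_2$ and $e_1\cap e_2$ are handled by composing, respectively intersecting, the residuals, using for the intersection cases that the cross product of Table~\ref{tab:edit-automata-product} performs exactly the componentwise derivative; the clause $p^{\ast}$ and $\epsilon$ are handled by unfolding the defining recursion.

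Granting the correspondence lemma, the inductive step is mechanical: an \rulename{Allow} or \rulename{Insert} advances $e_t$ by one derivative, preserving $\{\hat t\cdot u\mid u\in\regSemantics{e_t}\}\subseteq\regSemantics{e}$ by the defining decomposition of $\regSemantics{\cdot}$ under derivatives; a \rulename{Suppress} leaves $e_t$ and $\hat t$ unchanged. Soundness is then immediate: from $\regSemantics{e_t}\neq\emptyset$ pick any $u\in\regSemantics{e_t}$; then $\hat t\cdot u\in\regSemantics{e}$, so $\hat t$ is a prefix of a trace in $\regSemantics{e}$, as required by the theorem (recall Notation~\ref{not:trace}).

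The main obstacle is keeping the clause $\regSemantics{e_t}\neq\emptyset$, i.e.\ that every reachable residual stays \emph{completable}. For a single $p^{\ast}$ this follows from well-formedness (Definition~\ref{def:well-formedness}): every sub-property of a well-formed local property can reach an $\fineC$ and $\regSemantics{p^{\ast}}$ always contains $\epsilon$, and one checks these facts survive derivation. The delicate case is the conjunction $e_1\cap e_2$, where $e_t=e_{1,t}\cap e_{2,t}$ and one must show $\regSemantics{e_{1,t}}\cap\regSemantics{e_{2,t}}\neq\emptyset$ \emph{for every residual reachable by the cross product}. This is where I expect the real work: one argues, by induction over the construction of $\TimesP{\cdot}{\cdot}{\Xrec}$ in Table~\ref{tab:edit-automata-product}, that since $\tick$ and $\fineC$ are never suppressed and the product only moves when both components move, the two component-residuals stay ``in phase'' — they have consumed the same $\hat t$ with the same scan-cycle structure — so a common completion always exists (e.g.\ closing the current scan cycle and then iterating zero more times); for genuinely incompatible conjunctions, where no such common completion exists, the product can never commit to a non-trivial $\hat t$ (it suppresses everything and, on $\fineC$, can only insert along a branch both components share), so the invariant holds vacuously. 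Making this ``in-phase'' property precise is the technically heaviest part, and everything else in the proof is bookkeeping on top of the correspondence lemma.
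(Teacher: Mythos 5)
Your strategy is sound and its inductive content is essentially the paper's, but the proof is organised along a genuinely different decomposition. The paper factors soundness through two lemmas: a \emph{trace decomposition} result (Lemma~\ref{lem:execution_decomposition}) showing that every trace of $\confCPS{\funEditP{e}{\PSet}}{P}$ projects onto a trace of the edit automaton $\funEditP{e}{\PSet}$ taken in isolation, and a \emph{soundness-of-synthesis} result (Lemma~\ref{lem:sound_edit}) proved by induction on that automaton trace, whose invariant is that every reachable monitor state is $\synthESP{p'}{\Xrec}{\PSet}$ (up to recursion unfolding) for some residual $p'\preccurlyeq e$, where $\preccurlyeq$ is a syntactic sub-term relation rather than your Brzozowski-style derivative; the intersection cases are discharged by a separate cross-product correctness lemma (Lemma~\ref{lem:sound_cross}). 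You instead run a single induction directly on monitored configurations, carrying the explicit invariant $\{\hat t\cdot u\mid u\in\regSemantics{e_t}\}\subseteq\regSemantics{e}$ together with $\regSemantics{e_t}\neq\emptyset$. The trade-off: your formulation makes the final step immediate and makes explicit a completability fact that the paper leaves implicit inside Lemma~\ref{lem:sound_cross} (which is only stated, not proved, in the appendix), whereas the paper's factorisation isolates the monitor from the controller once and for all and lets the same two lemmas be reused for transparency, deadlock- and divergence-freedom. Note that the part you flag as the technical heavy lifting — keeping conjunct residuals completable — is exactly what the $H$-filter in Table~\ref{tab:edit-automata-product} is engineered for: a pair $(i,j)$ is admitted only if $\TimesP{\Edit_i}{\Edit_j}{\Xrec}$ is not the suppression-only automaton, so the product never commits a visible action into a dead residual, and your ``in-phase'' argument amounts to re-proving part (1) of Lemma~\ref{lem:sound_cross}; also be slightly careful with your clause ``the invariant holds vacuously'' for incompatible conjunctions, since when the product collapses to the suppression-only state the clause $\regSemantics{e_t}\neq\emptyset$ cannot be kept for the componentwise derivative itself — the right reading (and the paper's) is that $\hat t$ then stops growing, so the prefix property already established persists unchanged.
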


        Here, it is important to stress that in general soundness does not ensure deadlock-freedom of the monitored controller. That is,  it may be possible that the enforcement of some property $e$ causes a deadlock of the controller $P$ under scrutiny. In particular, this may happen in our controllers only when the initial sleeping phase 
        does not match  the enforcing property ({\color{black}\emph{e.g.}, $P=\tick.c.\fineC.P$ and $e=(c.\fineC)^\ast$}).
        Intuitively, a local property will be called a $k$-sleeping property  if it allows $k$ initial time instants of sleep. Formally, 
        \begin{definition}  For $k \in  \mathbb{N}^+ $,
          we say that $p \in \PropL$ is a \emph{$k$-sleeping local property},  only if $\regSemantics{p} =\{t | t=t_1\cdot... \cdot  t_n, \text{ for } n>0, \text{ s.t. } t_i = \tick^k{\cdot } t_i'{\cdot} \fineC, \fineC\notin t_i', \text{ and } 1\leq i \leq n \}$. We say that  $p^\ast$  is a \emph{$k$-sleeping global property} only if $p$ is, and $e=e_1\cap e_2$ is $k$-sleeping only if both $e_1,e_2$ are $k$-sleeping.
	\end{definition}
	
        The enforcement of $k$-sleeping properties does not introduce deadlocks in  $k$-sleeping controllers.
         This is because our synthesised monitors suppress all incorrect actions of the controller under scrutiny, driving it to the end of its scan cycle. Then,  the controller remains in stand-by while the monitor yields a safe sequence of actions to mimic a safe completion of the current scan cycle. 
	\begin{theorem}[Deadlock-freedom]
		\label{thm:deadlock}
		Let $e\in \PropG$ be a $k$-sleeping global property, and $\PSet$ be a set of observable actions such that $\events e \subseteq \PSet$. 
                Let $P\in \mathbbm{Ctrl}$ be a controller of the form $P=\tick^{k}.S$  
                whose set of observable actions is contained in $\PSet$. Then,  $ \confCPS {\funEditP{e}{\PSet}} {P}$ does not deadlock. 
	\end{theorem}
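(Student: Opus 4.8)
The plan is to read ``$\confCPS{\funEditP{e}{\PSet}}{P}$ does not deadlock'' as the statement that every state $\confCPS{\Edit'}{J'}$ reachable from $\confCPS{\funEditP{e}{\PSet}}{P}$ has at least one outgoing transition (an $\alpha$-, a $\tau$-, or a $\tick$-transition), and then to exhibit, in every such reachable state, an applicable instance of rule \rulename{Allow}, \rulename{Suppress}, or \rulename{Insert}. I would obtain this by combining a structural analysis of the reachable control states of the synthesised monitor $\funEditP{e}{\PSet}$ with a simple invariant relating monitor and controller along computations.

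First I would prove, by induction on the structure of $e$ following Table~\ref{tab:synthesis-logic} (and Table~\ref{tab:edit-automata-product} for the cross product), that every monitor state $\Edit'$ reachable from $\funEditP{e}{\PSet}$ \emph{handles every non-$\tick$ action}: for each $\alpha\in\PSet$ with $\alpha\neq\tick$, if $\alpha\neq\fineC$ then $\Edit'\trans{\alpha}\Edit''$ or $\Edit'\trans{^{-}\alpha}\Edit''$ for some $\Edit''$, and if $\alpha=\fineC$ then $\Edit'\trans{\fineC}\Edit''$, or else $\Edit'\ntrans{\fineC}$ and $\Edit'\trans{\eact{\fineC}{\beta}}\Edit''$ for some $\beta,\Edit''$. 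The decisive base case is the union $\cup_{i\in I}\pi_i.p_i$: its monitor $\Zrec$ \emph{allows} every $\pi_i$, carries a suppression $^{-}\alpha$ for every $\alpha\in\PSet\setminus(\bigcup_i\{\pi_i\}\cup\{\tick,\fineC\})$, and, when $\fineC\notin\bigcup_i\{\pi_i\}$, carries an insertion $\eact{\fineC}{\pi_i}$ for each $\pi_i\neq\fineC$ (such a $\pi_i$ exists since $I\neq\emptyset$, and if instead $\fineC\in\bigcup_i\{\pi_i\}$ then $\fineC$ is outright allowed) --- so everything but $\tick$ is covered. For $p_1;p_2$ the reachable states are those of the $p_1$-monitor with continuation $\Zrec$ together with those of $\Zrec$, and for $p^{\ast}$ the recursion $\mathsf X=\synthESP{p}{\mathsf X}{\PSet}$ reaches only states of its body, so both cases lift by the inductive hypothesis; for $e_1\cap e_2$ I would check that the cross-product construction of Table~\ref{tab:edit-automata-product} preserves the invariant. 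As a by-product this re-establishes Remark~\ref{rem:syn-mit}: $\funEditP{e}{\PSet}$ never suppresses $\tick$ or $\fineC$.

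Next I would single out the controller states that can force a $\tick$. Call a controller state \emph{must-$\tick$} if its only outgoing transition is a $\tick$-transition; by inspection of Table~\ref{tab:sem-ctrl}, for a controller of the prescribed shape $P=\tick^{k}.S$ these are exactly the sleeping states $\tick^{j}.S$ with $0<j\le k$ and the recursion variable $X$, while every other reachable controller state can perform a non-$\tick$ action. Exploiting that $e$ is $k$-sleeping and $P=\tick^{k}.S$, I would prove the invariant: \emph{whenever the controller component of a reachable state is must-$\tick$, the monitor component is a ``sleeping'' monitor state that allows $\tick$}. This holds at the start, $\confCPS{\mathsf X}{\tick^{k}.S}$, since $k$-sleepingness forces $\mathsf X$ to begin with a $k$-step $\tick$-countdown. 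It is preserved because the controller re-enters a must-$\tick$ state only by firing $\fineC$ from a state $\fineC.X$; as $\funEditP{e}{\PSet}$ never suppresses $\fineC$ and rule \rulename{Insert} leaves the controller unchanged, this firing must use rule \rulename{Allow}, after which the monitor sits at the head of the next iteration of $e$, which (again by $k$-sleepingness) is a sleeping monitor state with a $k$-step $\tick$-countdown matched to $X=\tick^{k}.S$; the two components then descend their countdowns in lockstep via \rulename{Allow}.

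Finally I would conclude: take a reachable $\confCPS{\Edit'}{J'}$; if $J'$ is must-$\tick$ then $J'\trans{\tick}J''$ and, by the invariant, $\Edit'\trans{\tick}\Edit''$, so \rulename{Allow} applies; if $J'$ is not must-$\tick$ then it has a transition $J'\trans{\alpha}J''$ with $\alpha\neq\tick$, and since the observable actions of $P$ lie in $\PSet$ we have $\alpha\in\PSet\setminus\{\tick\}$, so by the first step $\Edit'$ can allow $\alpha$, suppress $\alpha$, or (for $\alpha=\fineC$) insert before it, and one of \rulename{Allow}, \rulename{Suppress}, \rulename{Insert} applies. I expect the main obstacle to be the intersection case in the first step: showing that the cross-product construction of Table~\ref{tab:edit-automata-product} keeps the ``handles every non-$\tick$ action'' property even when the two component automata disagree on an action --- one allowing while the other suppresses, or one ready to insert while the other is ready to proceed --- since this is the delicate point on which deadlock-freedom of conjunctive properties rests. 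A secondary, bookkeeping obstacle is extracting enough from the $k$-sleeping hypothesis in the second step: confirming that the synthesised monitor really performs a clean $k$-step $\tick$-countdown at the head of each iteration, and that no must-$\tick$ controller state can arise outside the initial sleeping phase for a controller of the prescribed shape $P=\tick^{k}.S$ (e.g.\ that degenerate sensing or communication states with empty choice do not occur).
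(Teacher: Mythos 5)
Your plan is sound and its first half coincides with the paper's: your ``every reachable monitor state handles every non-$\tick$ action'' induction over Table~\ref{tab:synthesis-logic} and Table~\ref{tab:edit-automata-product} is exactly the paper's step showing that any reachable $\Edit$ (up to unfolding) lies in a class $\mathcal{E}$ of automata of the shape $\sum_i \alpha_i.\Xrec_i + \sum_{\alpha_i\neq\fineC}\eact{\fineC}{\alpha_i}.\Xrec_i + \sum {}^{-}\alpha.\Xrec$, which can only block a controller that must $\tick$. Where you genuinely diverge is the second half: the paper argues by contradiction and, at the hypothetical deadlocked state, invokes its already-proved soundness results (Theorem~\ref{thm:safety-logic} and Lemma~\ref{lem:sound_edit}) — the trace so far is a prefix of $\regSemantics{e}$ ending in $\fineC\cdot\tick^{k-h}$, so $k$-sleepingness forces the residual property to be $\tick^{h}.p'$ and hence $\Edit\trans{\tick}$ — whereas you maintain a forward invariant (``controller must-$\tick$ implies monitor allows $\tick$''). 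Your route is more self-contained in spirit and makes the tick-alignment explicit, but note that its preservation step is precisely where the paper's machinery re-enters: after the \rulename{Allow} on $\fineC$ you assert the monitor ``sits at the head of the next iteration of $e$''; in fact it may sit at the start of the next segment \emph{inside} the same local property, and in either case the claim that this syntactic state begins a $\tick$-countdown requires a Lemma~\ref{lem:sound_edit}-style result tying the reachable monitor state to $\synthESP{p'}{\Xrec}{\PSet}$ for a residual $p'\preccurlyeq e$, combined with the $k$-sleeping definition — so you would end up proving (or importing) essentially the lemma the paper leans on, just packaged as invariant preservation rather than as an appeal to soundness at the deadlock point.

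One further remark: the obstacle you flag in the intersection case is real and is the same soft spot in the paper's own argument. The $H=\emptyset$ branch of Table~\ref{tab:edit-automata-product} yields the suppression-only automaton $\sum_{\alpha\in\PSet\setminus\{\tick,\fineC\}}{}^{-}\alpha.\Xrec$, which neither allows, suppresses, nor inserts before $\fineC$ (nor $\tick$); if it were reachable your step-1 property would fail at $\fineC.X$ and during sleeping phases. The paper does not treat this degenerate case separately either (its class $\mathcal{E}$ formally contains it, with $I=\emptyset$), so to make either proof airtight one must argue that for well-formed $k$-sleeping properties the synthesis never lands on this automaton along a reachable run — you identified the right pressure point, but your plan, like the paper's, leaves it to be discharged.
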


		Another important property of our enforcement mechanism
      	is \emph{divergence-freedom}. In practice, the enforcement  does not introduce divergence: monitored controllers  will always be able to complete their scan cycles by executing a finite number of actions. 
      	This is because we limit our enforcement to well-formed properties (Definition~\ref{def:well-formedness}) which always terminates with an $\fineC$ event. In particular,  the  well-formedness of local properties  ensures us 
        that in a global property of the form $p^\ast$  the number of events  within two subsequent $\fineC$ events  is always finite.

	\begin{theorem}[Divergence-freedom]
			\label{thm:divergence}	
			Let $e\in \PropG$ be a global property,  $\PSet$ be a set of observable actions such that $\events e \subseteq \PSet$, and $P\in \mathbbm{Ctrl}$ be a controller. Then, there exists a  $k \in  \mathbb{N}^+$ such that whenever \smash{$\confCPS{\funEditP{e}{\PSet}}{P}\trans{t}\confCPS{\Edit}{J}$}, 
                        if \smash{$\confCPS{\Edit}{J}\trans{t'}\confCPS{\Edit'}{J'}$}, 
                        with $|t'| \ge k$, then $\fineC\in t'$.	
		\end{theorem}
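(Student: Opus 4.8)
\emph{Proof idea.} The plan is to produce an explicit constant $k=N_e+B_P+1$, where $N_e\in\mathbb{N}^+$ is a bound extracted from the well-formed property $e$ that limits the length of any $\fineC$-free factor of a trace admitted by $\regSemantics{e}$, and $B_P\in\mathbb{N}$ is a bound extracted from the controller $P$ that limits how many actions $P$ can perform before offering an $\fineC$. For the first ingredient I would argue by induction on the derivation of well-formedness (Definition~\ref{def:well-formedness}), using the trace semantics of Table~\ref{tab:semantics-logic}: for every well-formed \emph{local} property $p$, the set $\regSemantics{p}$ consists of finite traces, each \emph{terminating with} the action $\fineC$, of length bounded by some $L_p\in\mathbb{N}^+$ (the base case $\fineC.\epsilon$ and the cases $p_1;p_2$, $p_1\cap p_2$, $\cup_{i\in I}\pi_i.p_i$ follow directly from the clauses of $\regSemantics{\cdot}$ and the induction hypothesis). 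Lifting this to a well-formed \emph{global} property $e$ yields a constant $N_e$ bounding the length of every factor of a trace in $\regSemantics{e}$ that contains no $\fineC$: in a trace $t_1\cdots t_n$ of $\regSemantics{p^\ast}$ each block $t_i$ ends with $\fineC$ and has length at most $L_p$, so consecutive occurrences of $\fineC$ are at distance at most $L_p$ and hence a $\fineC$-free factor has length $<L_p$; and for $e=e_1\cap e_2$ one takes $N_e=\min(N_{e_1},N_{e_2})$, since $\regSemantics{e}\subseteq\regSemantics{e_1}$.

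For the second ingredient the key syntactic observation is that recursion in controllers is \emph{$\fineC$-guarded}: a process variable $X$ ranges over $\mathbbm{Ctrl}$, and $\mathbbm{Ctrl}$-terms occur in the grammar of Section~\ref{sec:calculus} \emph{only} inside the construct $\fineC.X$. Hence, assigning to each controller term $J$ the length $\mu(J)$ of its longest $\fineC$-free trace gives a well-defined natural number (structural induction on the controller syntax, with $\mu(\fineC.X)=0$), and since the set of controller terms reachable from $P$ via the LTS of Table~\ref{tab:sem-ctrl} is finite, the quantity $B_P:=\max\{\mu(J): J\text{ reachable from }P\}$ is finite. Moreover, every controller component occurring in a monitored configuration reachable from $\confCPS{\funEditP{e}{\PSet}}{P}$ is itself reachable in the bare controller LTS, because rules \rulename{Allow} and \rulename{Suppress} advance the controller component by one LTS step while rule \rulename{Insert} leaves it unchanged.

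Now I would combine the two bounds. Take $k:=N_e+B_P+1$ and suppose, for some reachable configuration, $\confCPS{\funEditP{e}{\PSet}}{P}\trans{t}\confCPS{\Edit}{J}\trans{t'}\confCPS{\Edit'}{J'}$ with $\fineC\notin t'$; it suffices to show $|t'|<k$. The $\tau$-free projection $\hat{t'}$ also contains no $\fineC$, and by Soundness (Theorem~\ref{thm:safety-logic}) the string $\hat{t}\cdot\hat{t'}$ is a prefix of some trace $u\in\regSemantics{e}$, so $\hat{t'}$ is a $\fineC$-free factor of $u$ and $|\hat{t'}|\le N_e$ by the first ingredient. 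Furthermore, a single monitored controller has no \rulename{ChnSync} transition, so every $\tau$ in $t'$ is produced by rule \rulename{Suppress} and therefore consumes one action of the controller; because the synthesised monitors never suppress $\fineC$ and insert actions only \emph{before} $\fineC$ (Remark~\ref{rem:syn-mit}), no controller action is consumed once the controller component has reached a state offering $\fineC$, whence the number of $\tau$'s along $t'$ is at most $\mu(J)\le B_P$. Consequently $|t'|=|\hat{t'}|+(\text{number of }\tau\text{'s in }t')\le N_e+B_P<k$, contradicting $|t'|\ge k$; hence $\fineC\in t'$ whenever $|t'|\ge k$.

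The step I expect to require the most care is the first ingredient --- bounding $\fineC$-free \emph{factors} (rather than just initial segments) of $\regSemantics{e}$, where one must track how $\fineC$ may fall close to a block boundary under sequential composition and how intersection interacts with block structure --- together with the $\tau$-accounting in the last paragraph, which depends on the precise form of the automata synthesised in Table~\ref{tab:synthesis-logic} (insertions triggered only by $\fineC$; $\fineC$ and $\tick$ never suppressed). Should one prefer to avoid invoking Theorem~\ref{thm:safety-logic}, an alternative is to note directly that, apart from the suppress self-loops $\sum_{\alpha}\suppressE{\alpha}.\Zrec$ (each traversal of which consumes a controller action), every cycle in the synthesised automaton passes through an $\fineC$-allowing transition, because by well-formedness of $e$ the recursion variable $\mathsf{X}$ of $\synthES{p^\ast}{}^{\PSet}$ is re-entered only after an $\fineC$; this bounds the number of insertions and of $\fineC$-free allows by the number of automaton states and yields the statement with $k$ phrased in terms of the automaton size.
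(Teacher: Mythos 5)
Your proof is correct, and its core coincides with the paper's: well-formedness forces every trace of a well-formed local property to end in $\fineC$ and to have bounded length, hence $\fineC$-free stretches of traces in $\regSemantics{e}$ are bounded, and Soundness (Theorem~\ref{thm:safety-logic}) transfers that bound to the observable part of any monitored execution. Where you genuinely diverge is in the treatment of $\tau$-actions. The paper simply sets $k=\max_i k_i$, the longest trace length among the $\regSemantics{p_i}$ in the normal form $p_1^\ast\cap\dots\cap p_n^\ast$, and applies Soundness to $t\cdot t'$ as if $t'$ and $\hat{t'}$ coincided; it never bounds the number of suppressions, even though each application of rule \rulename{Suppress} lengthens $t'$ by a $\tau$ about which Soundness says nothing (it only constrains $\hat{t'}$). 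Your extra ingredient $B_P$ --- finiteness of $\fineC$-free controller executions, coming from the fact that process variables occur only under $\fineC$, together with the observations that \rulename{Insert} does not advance the controller and that the synthesised monitors never suppress $\fineC$ or $\tick$ --- is exactly what closes that gap, at the harmless price of a constant $k=N_e+B_P+1$ depending on $P$ as well as on $e$, which the statement explicitly allows. So your argument is a more careful rendition of the paper's two-pillar proof (well-formedness bound plus Soundness) with the $\tau$-accounting made explicit; the paper's property-only choice of $k$ is adequate only if $|t'|$ is read as counting observable actions, whereas yours handles the trace length as literally defined.
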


          Notice that all properties  seen up to now scale to \emph{field communications networks} of controllers. This means that they are  preserved when the  controller under scrutiny is running in parallel with other controllers in the same field communications network. As an example,  by an application of Theorems~\ref{prop:transparency-logic}  and \ref{thm:safety-logic},  we show how both transparency and soundness scale to field networks. A similar result applies to the remaining properties. 

\begin{corollary}[Scalability to networks of PLCs]
  \label{cor:transparency-logic}
  	\label{cor:enforcement-logic} 
		Let $e \in \PropG$ be a global property  and  $\PSet$ be a set of observable actions, such that $\events e \subseteq \PSet$. Let $P\in \mathbbm{Ctrl}$ be a controller and $N \in \mathbbm{FNet}$ be a field network. 
If
{\small $(\confCPS{\funEditP{e}{\PSet}}{P}) \parallel N \trans{t} (\confCPS{\Edit}{J}) \parallel N'$}, for some  $t$, $\Edit$, $J$ and $N'$, then
\begin{itemize}

\item
  {\color{black}
    whenever {\small $P \trans{t'} J$}, 
  with 		$t' =  \alpha_1 \cdots \alpha_n  \in  \regSemantics{e}$, the trace  $t'$ is a trace of ${\funEditP{e}{\PSet}}$ and
  there is no trace $t''=\alpha_1 \cdots \alpha_k\cdot \lambda $  of  $\funEditP{e}{\PSet}\!$ such that  $0 \leq k  <n$ and  
$\lambda \in \{  \suppressE{\alpha_{k+1}}, \insertE{\alpha}{\alpha_{k+1}}\}$, for  
some $\alpha$; 
   }
  \item whenever {\small $\confCPS{\funEditP{e}\PSet}{P} \trans{t'} \confCPS{\Edit}{J}$}
		the trace $\widehat{t'}$ is a prefix of some trace in $\regSemantics{e}$. 
\end{itemize}
	\end{corollary}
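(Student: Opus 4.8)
The plan is to derive the corollary directly from Theorem~\ref{prop:transparency-logic} (Transparency) and Theorem~\ref{thm:safety-logic} (Soundness) applied to the monitored controller $\confCPS{\funEditP{e}{\PSet}}{P}$ \emph{in isolation}, and to justify that placing it inside a field network $N$ neither creates nor destroys the behaviour that those theorems constrain. Concretely, both bullets are instances of the two theorems: the first is exactly the conclusion of Transparency for the trace $t'$ (which is a trace of $\confCPS{\go}{P}$, since rule \rulename{Go} admits every action), and the second is exactly the conclusion of Soundness for the trace $t'$ of $\confCPS{\funEditP{e}{\PSet}}{P}$. The only genuinely network-specific content is that the intermediate states $J$ and $\Edit$ named in the two ``whenever'' clauses really do occur along standalone runs of the (monitored) controller, i.e.\ that the clauses are not vacuous; this is what a projection lemma for parallel composition delivers.

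The first step is therefore to prove that \emph{projection lemma}: whenever $\Sys_1 \parallel \Sys_2 \trans{t} \Sys_1' \parallel \Sys_2'$ there are traces $t_1,t_2$ with $\Sys_1 \trans{t_1} \Sys_1'$ and $\Sys_2 \trans{t_2} \Sys_2'$, where $t_i$ is obtained from $t$ by keeping the moves attributed to component $i$ and replacing each $\tau$ produced by rule \rulename{ChnSync} with the corresponding channel action ($c$ on the receiver side, $\overline{c}$ on the sender side). I would prove this by induction on $|t|$ with a case analysis on the last rule of Table~\ref{tab:fnet}: \rulename{ParL} and \rulename{ParR} are immediate; for \rulename{ChnSync} one uses that the component firing $c$ (resp.\ $\overline{c}$) could fire it on its own (rules \rulename{InC}/\rulename{OutC} lifted through \rulename{Allow}); for \rulename{TimeSync} one uses the component $\tick$-transitions and simply discards the maximal-progress side condition $\Sys_1 \parallel \Sys_2 \ntrans{\tau}$, which can only forbid, never enable, a standalone move. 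Instantiating the lemma with $\Sys_1 = \confCPS{\funEditP{e}{\PSet}}{P}$ and the given $t$ yields a standalone trace $\bar t$ with $\confCPS{\funEditP{e}{\PSet}}{P} \trans{\bar t} \confCPS{\Edit}{J}$, so such a $t'$ exists for the second bullet; and extracting the controller's own moves from $\bar t$ (rules \rulename{Allow} and \rulename{Suppress} advance the controller, \rulename{Insert} leaves it put) gives $P \trans{} J$, hence a trace with $\confCPS{\go}{P}\trans{}\confCPS{\go}{J}$, witnessing the non-vacuity of the first bullet.

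With the lemma in place the two bullets follow by a verbatim appeal to the enforcement theorems: for the first, $t' \in \regSemantics{e}$ together with $t'$ being a trace of $\confCPS{\go}{P}$ gives, by Theorem~\ref{prop:transparency-logic}, that $t'$ is a trace of $\confCPS{\funEditP{e}{\PSet}}{P}$; for the second, $t'$ being a trace of $\confCPS{\funEditP{e}{\PSet}}{P}$ gives, by Theorem~\ref{thm:safety-logic}, that $\widehat{t'}$ is a prefix of some trace in $\regSemantics{e}$. \textbf{The main obstacle} is the bookkeeping inside the projection lemma: distinguishing the $\tau$'s produced by \rulename{Suppress} within a single monitored controller (which must survive projection as $\tau$, and are the ones erased by $\widehat{\cdot}$) from those produced by \rulename{ChnSync} between two components (which must be ``un-hidden'' into complementary channel actions so that the projected trace is again a legitimate standalone trace), and checking that the maximal-progress premise of \rulename{TimeSync} is never needed in the standalone direction. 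Once this is handled, the statement is essentially a repackaging of Theorems~\ref{prop:transparency-logic} and~\ref{thm:safety-logic}, and the analogous scaling of determinism preservation, deadlock-freedom and divergence-freedom follows by the same projection argument.
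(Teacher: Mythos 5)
Your proof is correct and takes essentially the same route as the paper, which treats this corollary as a direct consequence of Theorems~\ref{prop:transparency-logic} and~\ref{thm:safety-logic}: since both bullets quantify over standalone traces $t'$ of $\confCPS{\go}{P}$ and $\confCPS{\funEditP{e}{\PSet}}{P}$ respectively, the two theorems apply verbatim and no genuinely network-level reasoning is required. The projection lemma you develop is sound but strictly optional here --- it only establishes non-vacuity of the ``whenever'' clauses (that the states $\Edit$ and $J$ reached in the network are also reachable standalone), a point the paper does not bother to argue.
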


\section{Our enforcement mechanism at work}
\label{sec:implementation}
	\begin{figure}[t]
	\centering

		\includegraphics[
		height=4cm,keepaspectratio=true,angle=0]{./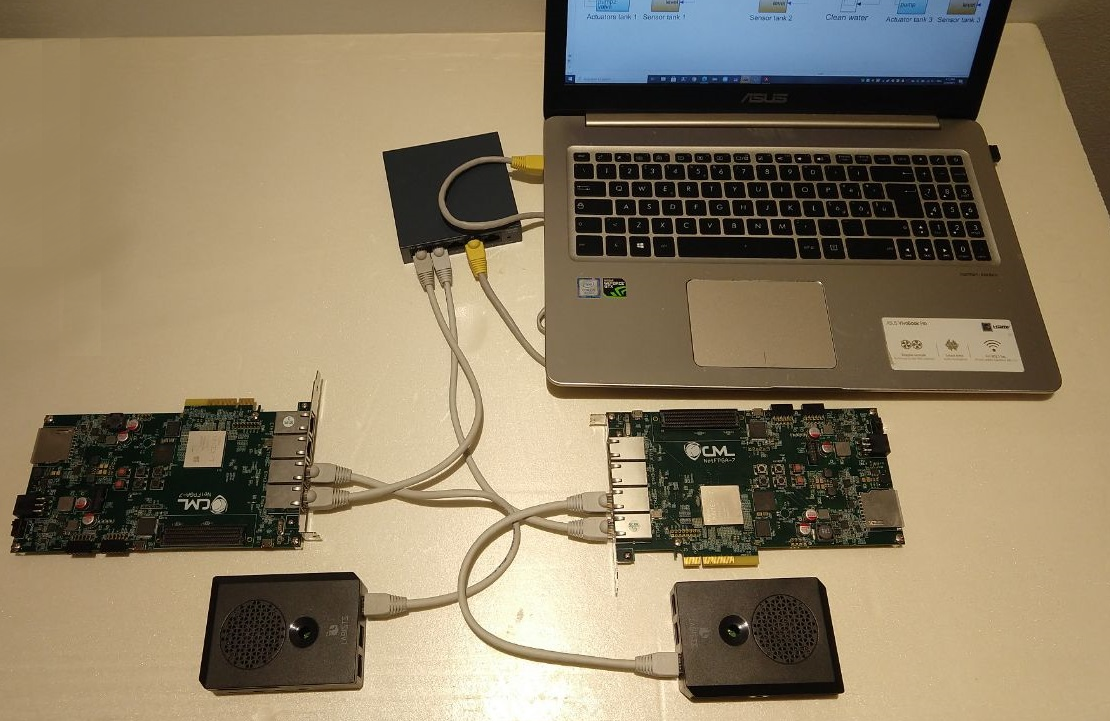}
	\caption{Some physical components of our implementation.}
	\label{fig:setup}
\end{figure}
In this section, we propose an implementation of our enforcement mechanism in which  monitors, running on \emph{field-programmable gate arrays} (FPGAs)~\cite{FPGA}, enforce \emph{open source PLCs}~\cite{OpenPLC}, running on Raspberry Pi devices~\cite{RaspberryPI}, and governing a physical plant simulated in  \emph{Simulink}~\cite{MATLAB}.  The section has the following structure. In Section~\ref{sec:FPGA},  we argue why FPGAs are good candidates for implementing \emph{secure proxies}. In Section~\ref{sec:Swat-impl}, we describe  how we implemented the {\color{black} whole enforcement architecture for the use case of Section}~\ref{sec:case-study}. In Section~\ref{sec:attacks}, we test our implementation  injecting the enforced PLCs with five different malware aiming at causing three different physical perturbations: tank overflow, valve damage, and pump damage.  {\color{black}
  The attacks have been chosen to cover as much as possible the attacker model of Section~\ref{sec:attacker-model}. In particular, they include: a drop of the actuator commands of the valve,  an integrity attack on the water-level sensors, a forgery of the actuator commands of the valve,  a forgery of the message requests to open/close  the valve, and a forgery of the  actuator commands of the pumps. {\color{black}Section~\ref{sec:discussion} discusses the performance of our implementation. }
}


\subsection{FPGAs as secure proxies for ICSs}
\label{sec:FPGA}

	Field-programmable gate arrays (FPGAs) are semiconductor devices that can be programmed to run specific applications. 
	An FPGA
        consists of  (configurational) logic blocks, routing channels and I/O blocks. The logic blocks can be configured to perform complex combinational functions and are further made up of transistor pairs,
	logic gates, lookup tables and multiplexers.
	The applications are written using hardware description languages, such as Verilog~\cite{Verilog}. Thus, in order  to execute an application on the FPGA, its Verilog code is converted into a sequence of bits, called  \emph{bitstream}, that 
	is loaded into the FPGA.


	 FPGA are assumed to be secure when the adversary does not have physical access to the device, \emph{i.e.}, the bitstream cannot be compromised~\cite{FPGA_security}. Recent  FPGAs support remote updates of the bitstream by relying on authentication mechanisms to prevent unauthorised uploads of malicious logic~\cite{FPGA_security}. Nevertheless, as said in the Introduction and  advocated by McLaughlin and Mohan~\cite{McLaughlin-ACSAC2013,Mohan-HiCONS2013}, any form of runtime reconfiguration  should be prevented. Summarising, under the assumption that the adversary does not have   physical access to the FPGA and she cannot do remote updates, FPGAs represent a good candidate for the implementation of secure enforcing proxies. 
	
		\begin{figure}[t]
	\centering
	\includegraphics[trim=0 210 0 200,clip,width=11cm,keepaspectratio=true,angle=0]{./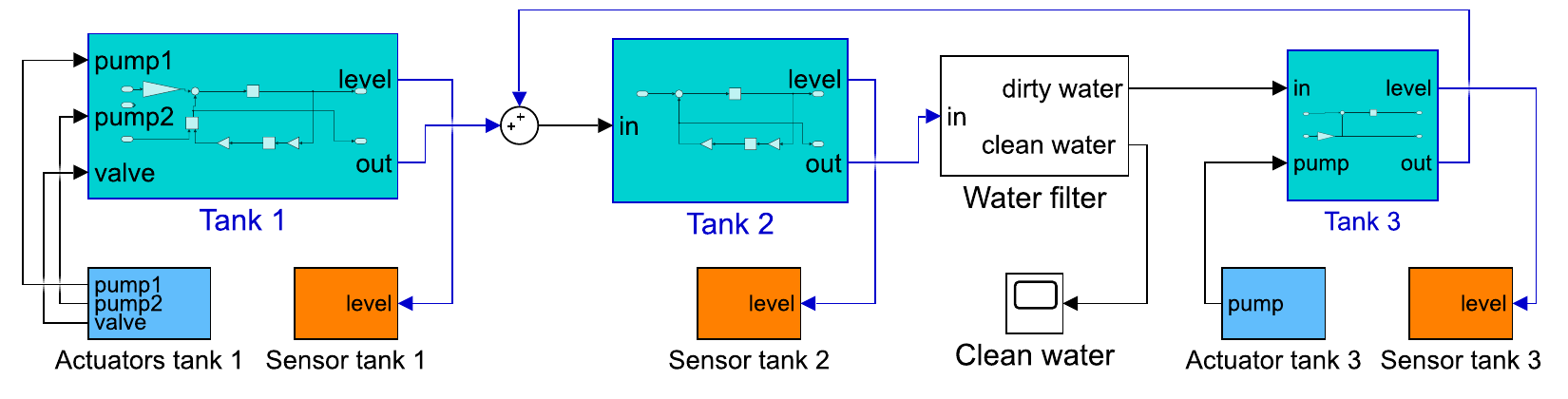}
	\caption{An implementation in Simulink of the plant of the SWaT system.}
	\label{fig:case-study-simulink}
\end{figure}
	\subsection{An implementation of the enforcement of the SWaT system of Section~\ref{sec:case-study}}
        \label{sec:Swat-impl}
        The proposed implementation adopts different approaches for plant, controllers and enforcers.

	\paragraph{Plant}
        The plant of the SWaT system is simulated in Simulink}~\cite{MATLAB},   a 
	framework 
	to model, simulate and analyse cyber-physical systems, widely adopted in industry and research.
	A Simulink model is given by \emph{blocks} interconnected via \emph{wires}.
	Our Simulink model contains blocks to simulate water tanks, actuators (\emph{i.e.}, pumps and valves) and sensors (see Figure~\ref{fig:case-study-simulink}). 
	In particular, water-tank blocks implement the differential equations that model the dynamics of the tanks according to the physical constraints obtained from~\cite{SWaT, swat-dataset}. 
	Actuation blocks receive commands from  PLCs, whereas sensor blocks send measurements to  PLCs.
	For simplicity, state changes of both pumps and valves do not occur instantaneously; they take 1 second.
	We ran our Simulink model  on a laptop with 2.8 GHz Intel i7 7700 HQ,  16GB memory, and Linux Ubuntu 20.04 LTS OS.
	
		\paragraph{Controllers}
	        Controllers are defined in \emph{OpenPLC}~\cite{OpenPLC},  an open source PLC capable of running user programs in all five IEC61131-3 defined languages~\cite{61131-3}.
	Additionally, OpenPLC supports standard SCADA protocols, such as Modbus/TCP, DNP3 and  Ethernet/IP.
	OpenPLC can  run on a variety of hardware, from a simple Raspberry Pi to robust industrial boards.
 	We installed OpenPLC on three Raspberry Pi 4~\cite{RPi4Modelb}; each instance  runs one of the three ladder logics seen in 
        Figure~\ref{fig:ladder-logic-plcs}.

		\paragraph{Enforcers}
	Enforcers are implemented using three NetFPGA-CML development boards~\cite{NetFPGA}.
	Our synthesis algorithm  is implemented in Python to return  enforcers written in Verilog, and  checked for correctness using ModelSim. The Verilog code is then compiled into a bitstream and  executed in the FPGA. 
	{\color{black}
	  More precisely, our algorithm in Python   takes as input a JSON file containing the property to be synthesised and other relevant informations,  such as the number of input/output signals and a fixed  priority among admissible safe output signals. 
          Then,  the property is parsed by means of the  ANTLR parser~\cite{ANTLR}. 
          After the parsing, our algorithm implements the synthesis  of Table~\ref{tab:synthesis-logic} to derive the enforcing edit automaton; this is written down into a JSON file. 
          At this stage, the derived edit automaton is still somewhat abstract, as both $\fineC$- and $\tick$-actions are explicitly represented. 
          Finally, the algorithm compiles the edit automaton into an enforcer written in Verilog, where the above abstractions are implemented. In particular, the passage of time  (\emph{i.e.},  $\tick$-actions) is  represented and monitored via \emph{clock variables},  
while the end of scan cycles (\emph{i.e.},  $\fineC$-actions) is implemented via specific code  to synchronise enforcers and controllers, relying on clock variables.
Thus, before  each scan cycle the enforcer forwards the current inputs (coming from the plant) to the controller. Then, when the scan cycle is completed, it receives from the controller all the current outputs, and forwards them to the actuators. {
In the meanwhile, the enforcer monitors the passage of time via its clock variables, and when  the scan cycle is completed (\emph{i.e.}, the controller sends all outputs) it moves to the state corresponding to the following scan cycle. 
   }         
	%
	%
                        Finally,  in our FPGAs we also write some  code to implement an UDP-based network connecting together  enforcers, PLCs, and the simulated plant. 
		
	}
%


\smallskip
        The code of the three PLCs, the  algorithm in Python,  the enforcers  written in Verilog,  and the Simulink simulations can be found at: {\footnotesize \url{https://bitbucket.org/formal\_projects/runtime\_enforcement
			
	}}.

	\subsection{The enforced SWaT system under attack}
\label{sec:attacks}
	In this section, we consider five different attacks targeting the PLCs of the SWaT system to achieve three possible malicious goals: (i) overflow the water tanks, (ii) damage of the valve, (iii) damage of the pumps.
	 In order to simulate the injection of  malware in the PLCs, 
	we reinstall the original PLC ladder logics with compromised ones, containing some  additional logic  intended to disrupt the normal operations of the PLC~\cite{Ladder-logic-bombs}.  
	In the following, we will discuss these  attacks, grouped by goals, showing how the enforcement of specific properties  mitigates  the attacks by preserving the correct behaviour of the monitored PLCs.

\paragraph*{Tank overflow}
	Our \emph{first attack} is a DoS attack targeting $\mathrm{PLC}_{1}$ by  dropping  the commands  to close the valve. In the left-hand side of Figure~\ref{fig:ld-plc1-attack1} we show a possible implementation of this attack in ladder logic. Basically, the malware remains silent  for $500$ seconds and then it sets  true a malicious $\mathit{drop}$ variable (highlighted in  yellow). Once the variable $\mathit{drop}$ becomes true, the $\mathit{valve}$ variable is forced to be false (highlighted in  red), thus preventing the closure of the valve.
    	\begin{figure}[t]
	\centering
	\begin{minipage}{0.39\textwidth}
		\includegraphics[
		width=5.6cm,keepaspectratio=true,angle=0]{./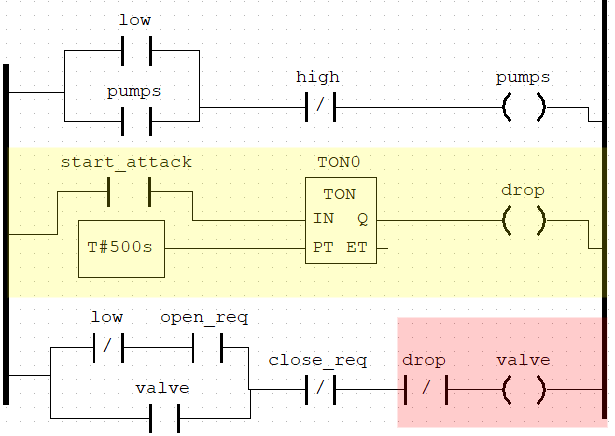}
	\end{minipage}\Q
	\begin{minipage}{0.4\textwidth}
		\includegraphics[
		width=4.8cm,keepaspectratio=true,angle=0]{./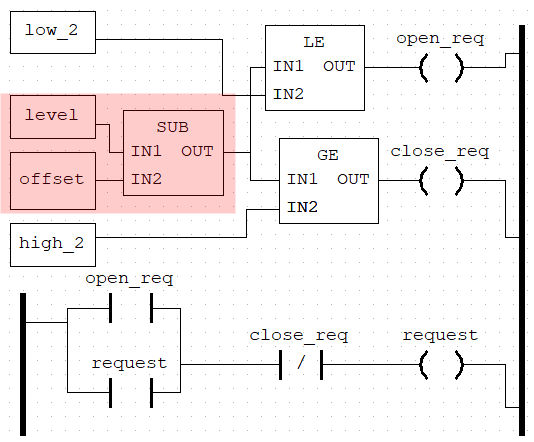}
	\end{minipage}
	\caption{Tank overflow: Ladder Logic of the first (left) and the second attack (right).}
	\label{fig:ld-plc1-attack1}
\end{figure}

        In order to prevent attacks aiming at overflowing the tanks,   we propose the following three enforcing properties, one for each PLC, respectively:
	
\begin{itemize}	
		\item $e_1\defn(\CBP{h_1}{\overline{\off{1}}}{1}{m})^\ast \cap\, (\CBP{h_1}{\overline{\off{2}}}{1}{m})^\ast$, an intersection between two conditional bounded persistency properties to enforce $\mathrm{PLC_{1}}$  to prevent water overflow in $T_1$. This  property ensures that both pumps $\mathit{pump}_1$ and $\mathit{pump}_2$ are off, for $m$ consecutive scan cycles, when the level of  $T_1$ is high (measurement $h_1$). Here,   $m<n$ for $n\in \mathbb{N}$ is the number of scan cycles required to empty  $T_1$ when its level  is high, both pumps are off, and the valve is open.
		
		
		\item $e_2\defn (\CBP{h_2}{\overline{\ask{close}}}{1}{u})^\ast$, a 
		conditional bounded persistency property for  $\mathrm{PLC_{2}}$ 
		ensuring that requests to close the valve (event $\overline{\ask{close}}$) are sent for $u$ consecutive scan cycles
		when the level of water in tank $T_2$ is high (measurement $h_2$). Here,   $u<v$ for  $v \in \mathbb{N}$ is the number of scan cycles required to empty the tank $T_2$ when the level is high and the valve is closed. 
	      \item $e_3\defn(\CBP{h_3}{\overline{\on{3}}}{1}{w})^\ast$, a conditional bounded persistency property for $\mathrm{PLC_{3}}$ to ensure that $\mathit{pump}_3$ is on for $w$ consecutive scan cycles when the level of water in tank $T_3$ is high (measurement $h_3$). Here,  $w<z$ for $z\in \mathbb{N}$ is the time 
		(expressed in scan cycles) required to empty the tank $T_3$ when the level  is high and  $\mathit{pump}_3$ is on.
	\end{itemize}

	Now, let us analyse the effectiveness of the  enforcement induced by these three properties.
        For instance, in the upper graphs of Figure~\ref{figure:goal-1-attack-1} we report the impact  on the tanks $T_1$ and $T_2$ of the  DoS attack previously described,  when enforcing the three properties $e_1, e_2$ and $e_3$ in the corresponding PLCs. Here,  the red region denotes when the attack becomes active. As the reader may notice, despite repeated requests to close the valve coming from  $\mathrm{PLC}_2$, the compromised $\mathrm{PLC}_1$ never closes the valve causing  the overflow of tank $T_2$. So, the enforced property $e_1$ is not up the task. 

		\begin{figure}[t]
		\centering
		\begin{minipage}{0.44\textwidth}
			\includegraphics[
			width=6cm,keepaspectratio=true,angle=0]{./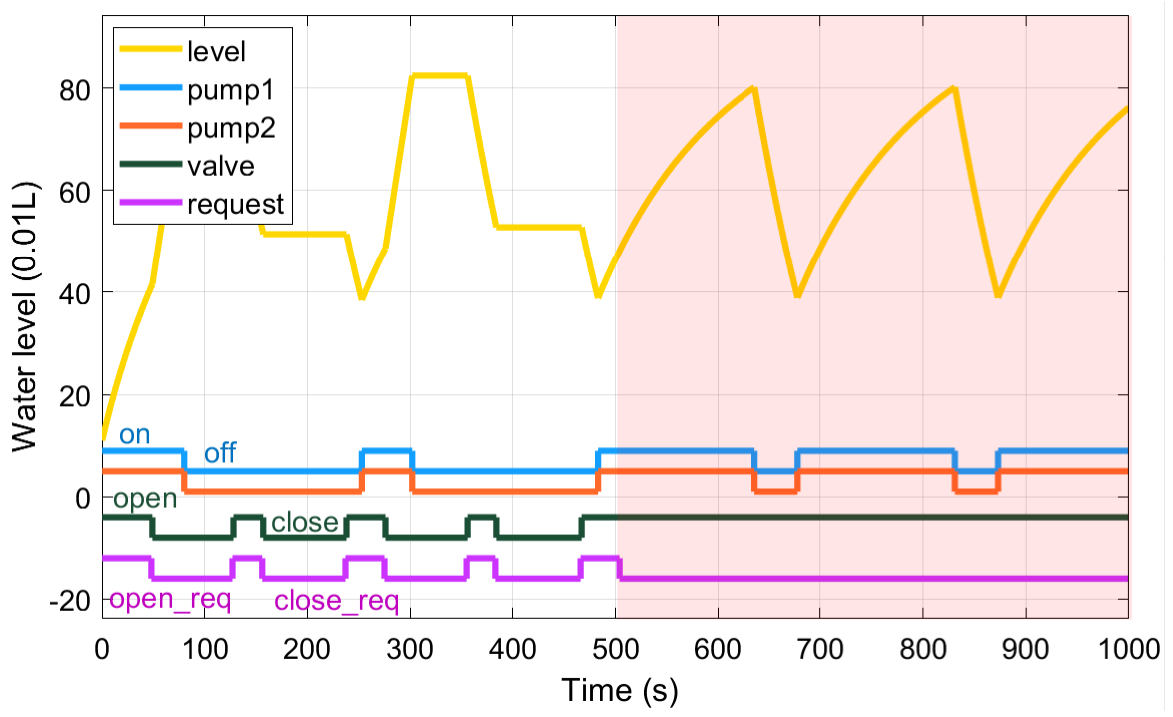}
		\end{minipage}\q
		\begin{minipage}{0.44\textwidth}
			\includegraphics[
			width=6cm,keepaspectratio=true,angle=0]{./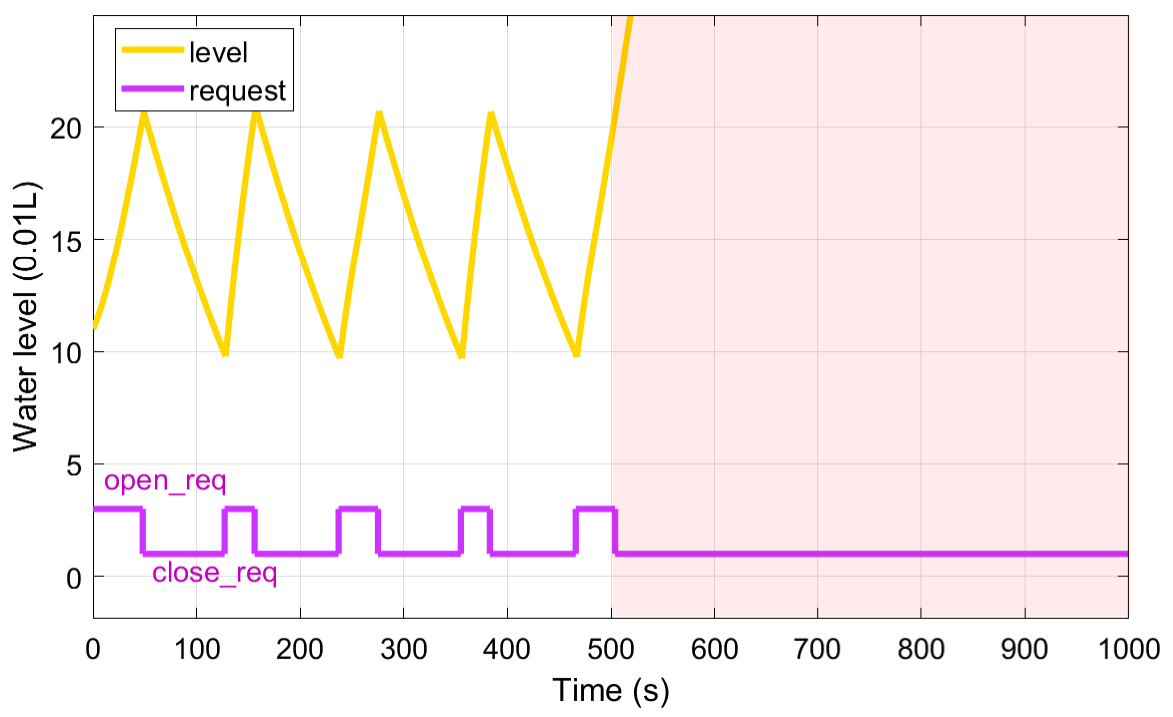}
		\end{minipage}\Q
                \\[2pt]
		\begin{minipage}{0.44\textwidth}
			\includegraphics[
			width=6cm,keepaspectratio=true,angle=0]{./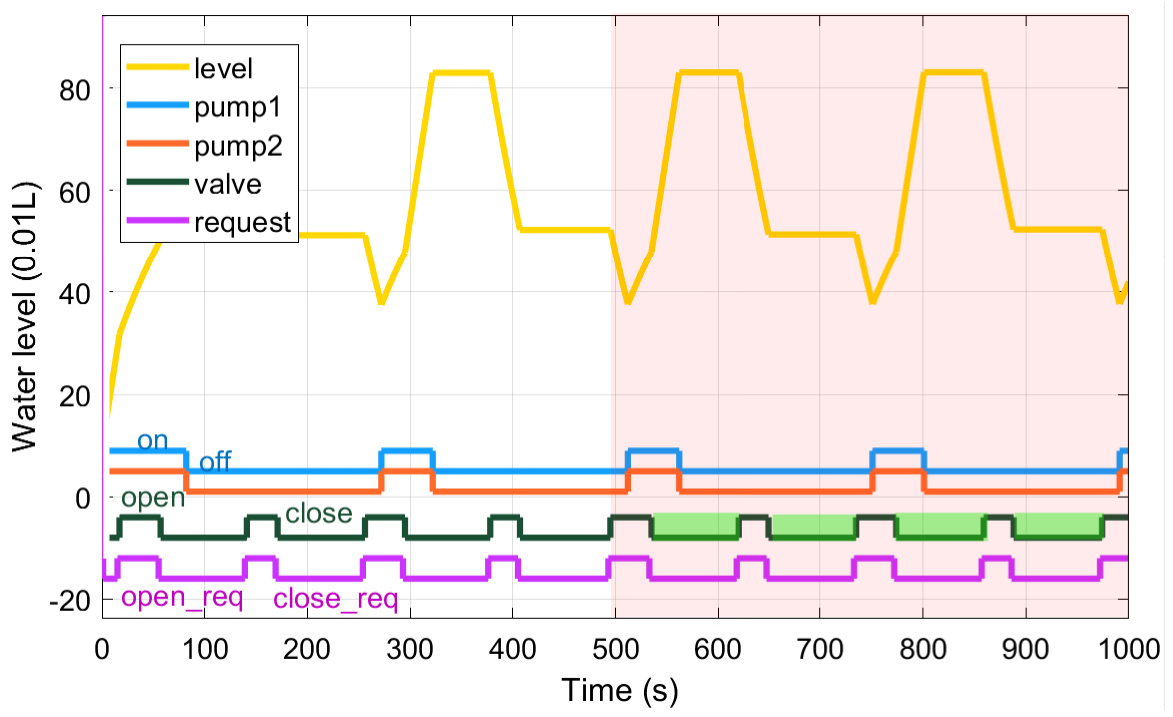}
		\end{minipage}\q
		\begin{minipage}{0.44\textwidth}
			\includegraphics[
			width=6cm,keepaspectratio=true,angle=0]{./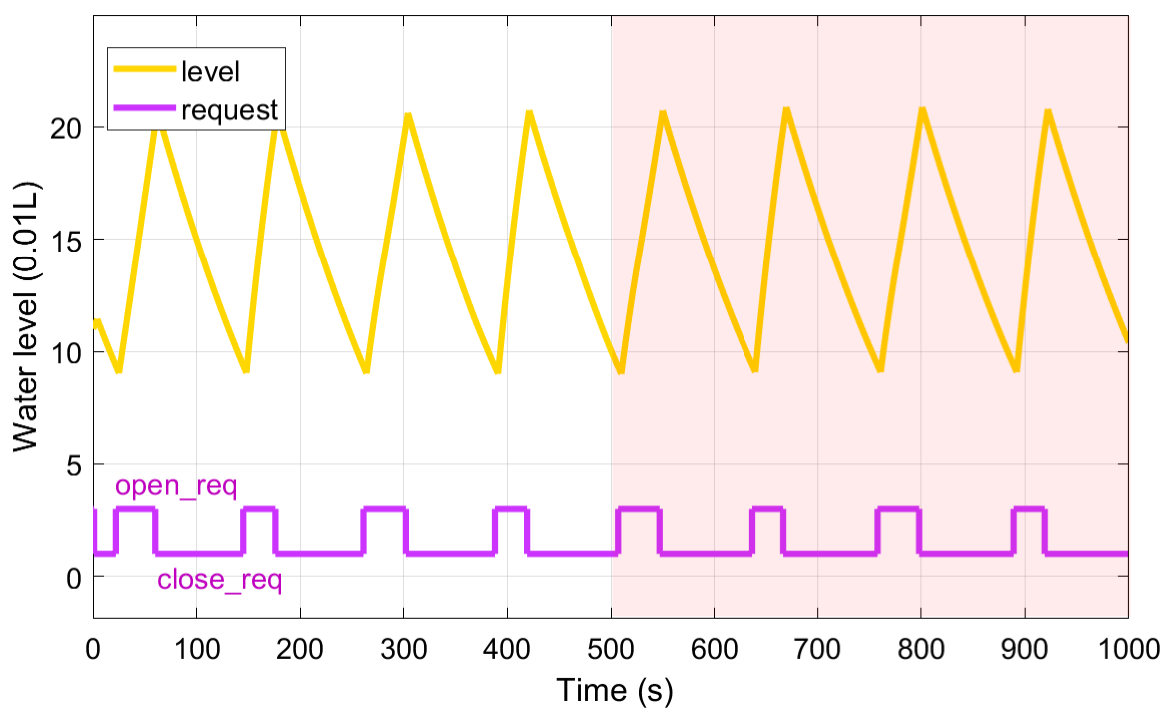}
		\end{minipage} 
		\caption{Tank overflow: DoS attack on $\mathrm{PLC}_1$ when enforcing $e_1, e_2,e_3$ (up) and $e_1',e_2, e_3$ (down).}
		\label{figure:goal-1-attack-1}
	\end{figure}

	In order to prevent this attack, we must guarantee that $\mathrm{PLC}_1$ closes the valve when  $\mathrm{PLC}_2$ requests so. Thus, we should enforce in $\mathrm{PLC}_1$ a more demanding    property $e_1'$ defined as follows: $
        e_1 \, \cap \,   \CBE{\ask{close}}{\overline{\close}}{1}{1}$. Basically, the last part of the property ensures that every request to close the valve
        is followed by an actual closure of the valve 
        in the same scan cycle. 
	The impact of the malware on $\mathrm{PLC}_{1}$  when enforcing the properties $e_1',e_2,e_3$ is represented in the lower graphs of Figure~\ref{figure:goal-1-attack-1}. Now, the correct behaviour of $\mathrm{PLC}_1$ is ensured, thus  preventing the overflowing of the water tank $T_2$. In these graphs, the green highlighted regions denote when the monitor \emph{detects} the attack and \emph{mitigates} the activities of the compromised $\mathrm{PLC}_1$. In particular, the monitor \emph{inserts} the commands to close the valve on behalf of $\mathrm{PLC}_1$ when $\mathrm{PLC}_2$ sends requests to close the valve.


	Having strengthened the enforcing property for $\mathrm{PLC}_1$ one may think that the enforcement of $e_2$ in $\mathrm{PLC}_2$ is now superfluous to prevent water overflow in $T_2$. 
	However, this is not the case if the attacker can compromise 
        $\mathrm{PLC}_{2}$. Consider a \emph {second attack} to  $\mathrm{PLC}_2$, an  \emph{integrity attack}  that adds an offset of $-30$ to the measured water level of $T_2$. 
        We show a ladder logic implementation of such attack in the right-hand side of Figure~\ref{fig:ld-plc1-attack1} where, for simplicity, we omit the initial silent phases lasting $500$ seconds.
	The impact on the tanks $T_1$ and $T_2$ of the malware injected in $\mathrm{PLC}_{2}$ in the presence of the enforcing of the properties $e'_1$ and $e_3$, respectively,  is represented on the upper graphs of Figure~\ref{figure:goal-1-attack-2}.
\begin{figure}[t]
	\centering
	\begin{minipage}{0.44\textwidth}
		\includegraphics[
		width=6cm,keepaspectratio=true,angle=0]{./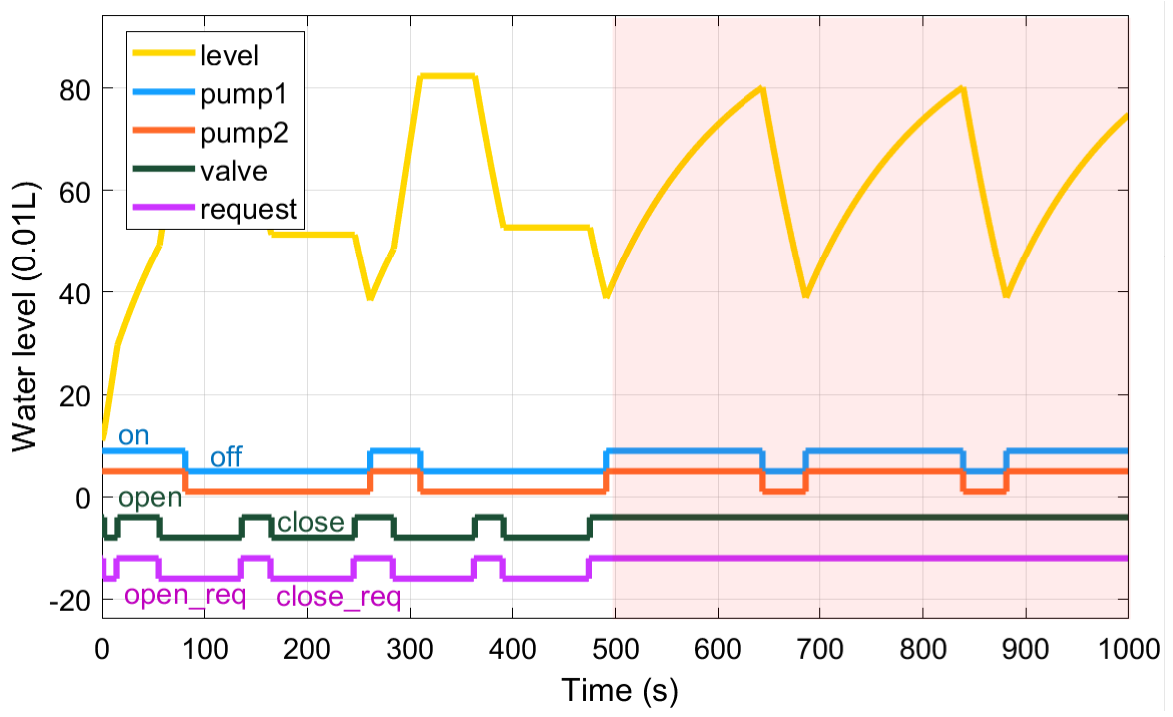}
	\end{minipage}\q
	\begin{minipage}{0.44\textwidth}
		\includegraphics[
		width=6cm,keepaspectratio=true,angle=0]{./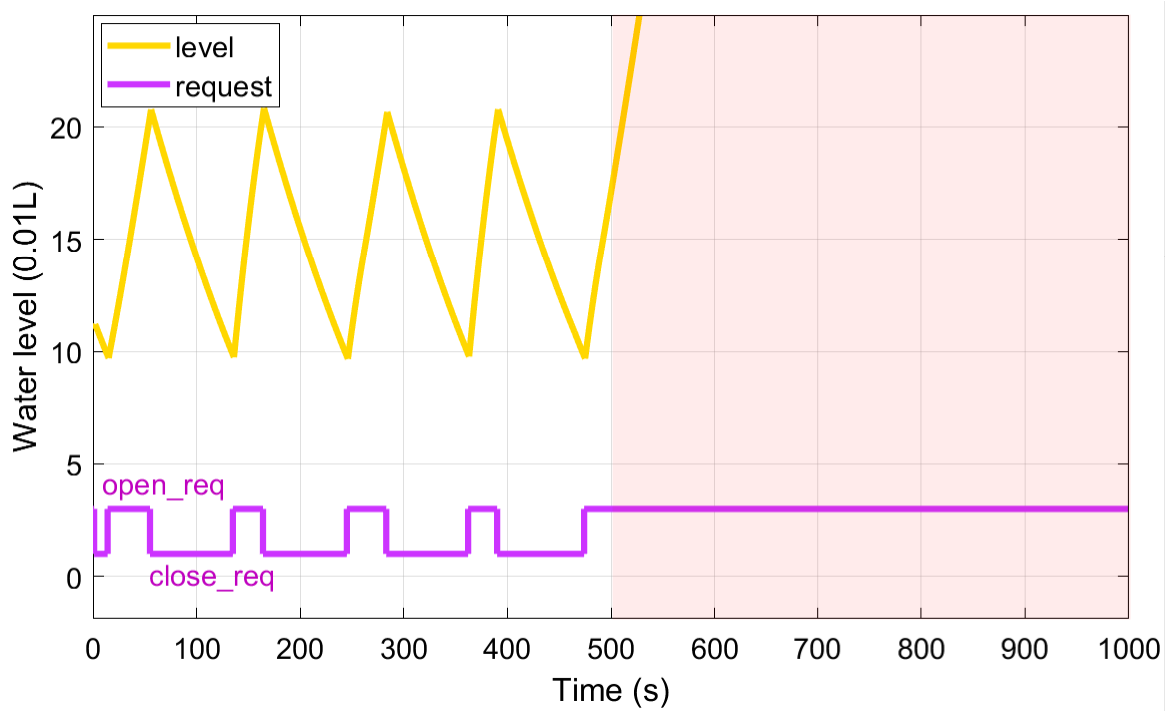}
	\end{minipage}
        \\[2pt]
	\begin{minipage}{0.44\textwidth}
		\includegraphics[
		width=6cm,keepaspectratio=true,angle=0]{./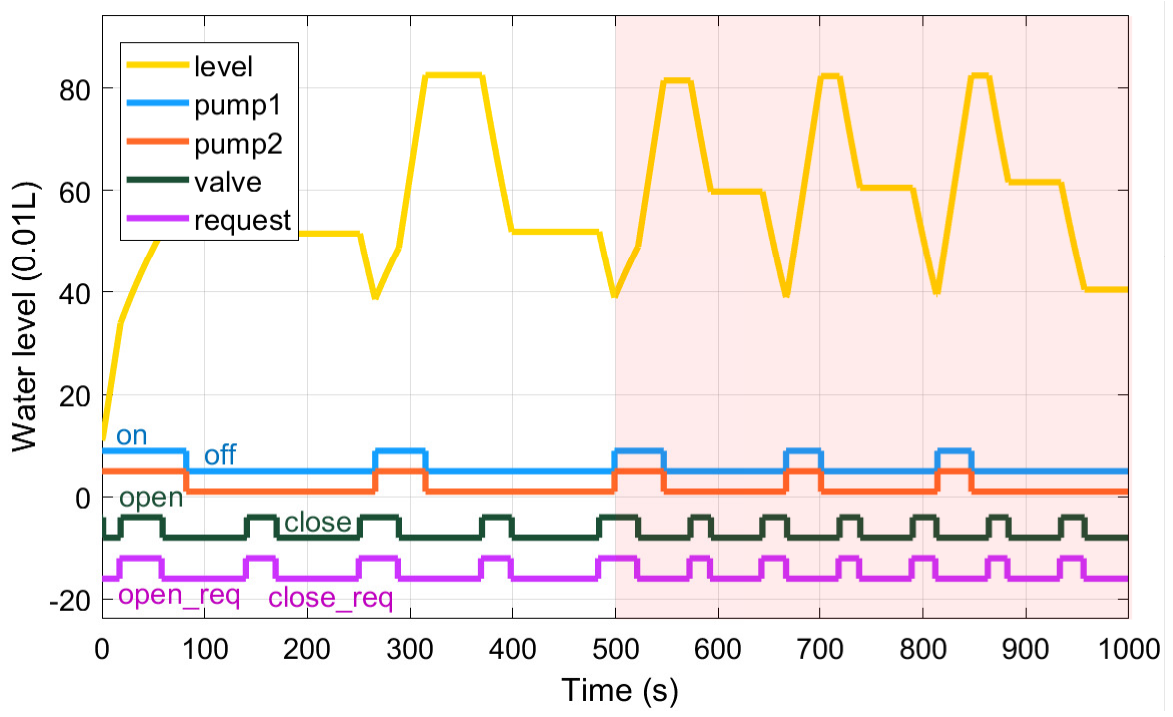}
	\end{minipage}\q
	\begin{minipage}{0.44\textwidth}
		\includegraphics[
		width=6cm,keepaspectratio=true,angle=0]{./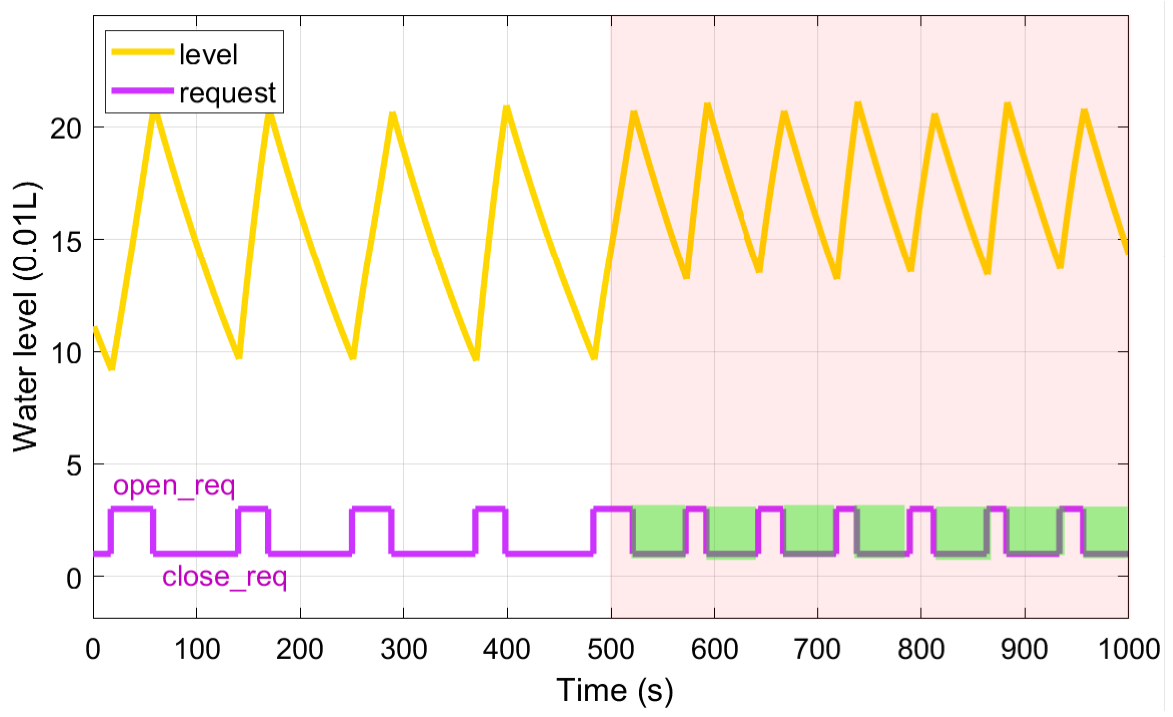}
	\end{minipage}
	\caption{Tank overflow: integrity attack on $\mathrm{PLC}_2$ when enforcing $e_1',e_3$ (up) and $e_1',e_2, e_3$ (down).}
	\label{figure:goal-1-attack-2}
\end{figure}
        Again, the red region shows when the attack becomes active. As the reader may notice, the compromised $\mathrm{PLC}_2$ never sends requests to close the valve causing  the overflow of the water tank $T_2$. On the other hand, when enforcing the three properties $e'_1, e_2, e_3$ in the three PLCs,  
the lower graphs of Figure~\ref{figure:goal-1-attack-2} shows that the overflow of  tank $T_2$ is prevented. Again,  the green highlighted regions denote when the monitor \emph{detects} the attack and \emph{mitigates} the commands of the compromised $\mathrm{PLC}_2$. Here, the monitor \emph{inserts} the request to close the valve on behalf of $\mathrm{PLC}_2$ when $T_2$ reaches a high level.


\paragraph*{Valve damage} We now consider attacks whose goal is to damage the valve via \emph{chattering}, \emph{i.e.},  rapid alternation of openings and closings of the valve that may cause  mechanical failures on the long run. In the left-hand side of Figure~\ref{fig:goal-2-ld-plc1-attack1} we show a possible ladder logic implementation of a \emph{third attack} that does \emph{injection} of the commands to open and close the valve. In particular, the attack repeatedly alternates  a \emph{stand-by phase}, lasting $70$ seconds, and a \emph{injection phase}, lasting $30$ seconds (yellow region); then, in the injection phase  the valve is opened and closed rapidly (red region).
	\begin{figure}[t]
		\centering
		\begin{minipage}{0.31\textwidth}
			\includegraphics[
		height=4.75cm,keepaspectratio=true,angle=0]{./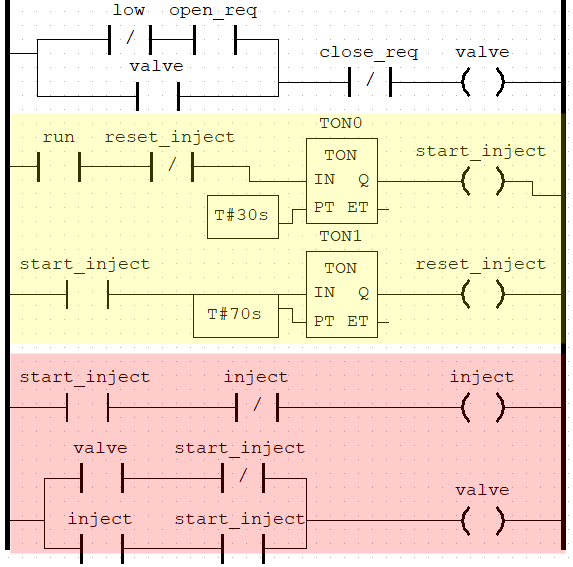}
		\end{minipage}\hspace{1cm}
		\begin{minipage}{0.4\textwidth}
		\includegraphics[
		height=4.25cm,keepaspectratio=true,angle=0]{./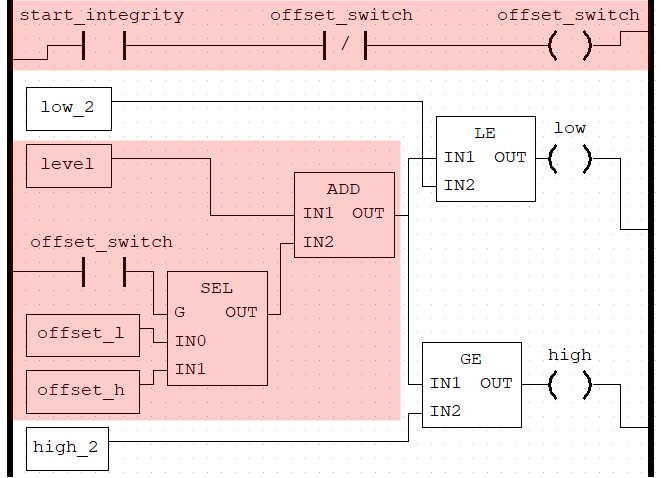}
	\end{minipage}
		\caption{Valve damage: Ladder logic of the first (left) and the second attack (right).}
		\label{fig:goal-2-ld-plc1-attack1}
	\end{figure}
        With no enforcement, the  impact of  the attack on the tanks $T_1$ and $T_2$   is represented on the upper graphs of Figure~\ref{figure:goal-2-attack-1}, where the red region denotes when the attack becomes active. From the graph associated to the execution of $T_1$ the reader can easily see that the valve is chattering. Note that  this is  a \emph{stealthy attack} as the water level of $T_2$ is maintained within the normal operation bounds. 

In order to prevent this kind of attacks, we might consider to enforce in $\mathrm{PLC}_1$ a bounded mutual exclusion property of the form  {\color{black}$e_1'' \defn (\BME{\{\overline{\open}, \overline{\close}\}}{10000})^\ast$} to ensure that within {\color{black}$10000$} consecutive scan cycles (10 seconds) openings and the closings of the valve may only occur in mutual exclusion. 
When the property $e''_1$ is enforced in  $\mathrm{PLC}_1$,  the lower graphs of Figure~\ref{figure:goal-2-attack-1} shows that the
chattering of the valve is prevented. 
In particular, the green highlighted regions denote when the monitor \emph{detects} the attack and \emph{mitigates} the commands on the valves of the compromised $\mathrm{PLC}_1$. 

\begin{figure}[t]
	\centering
	\begin{minipage}{0.44\textwidth}
	\includegraphics[
	width=6cm,keepaspectratio=true,angle=0]{./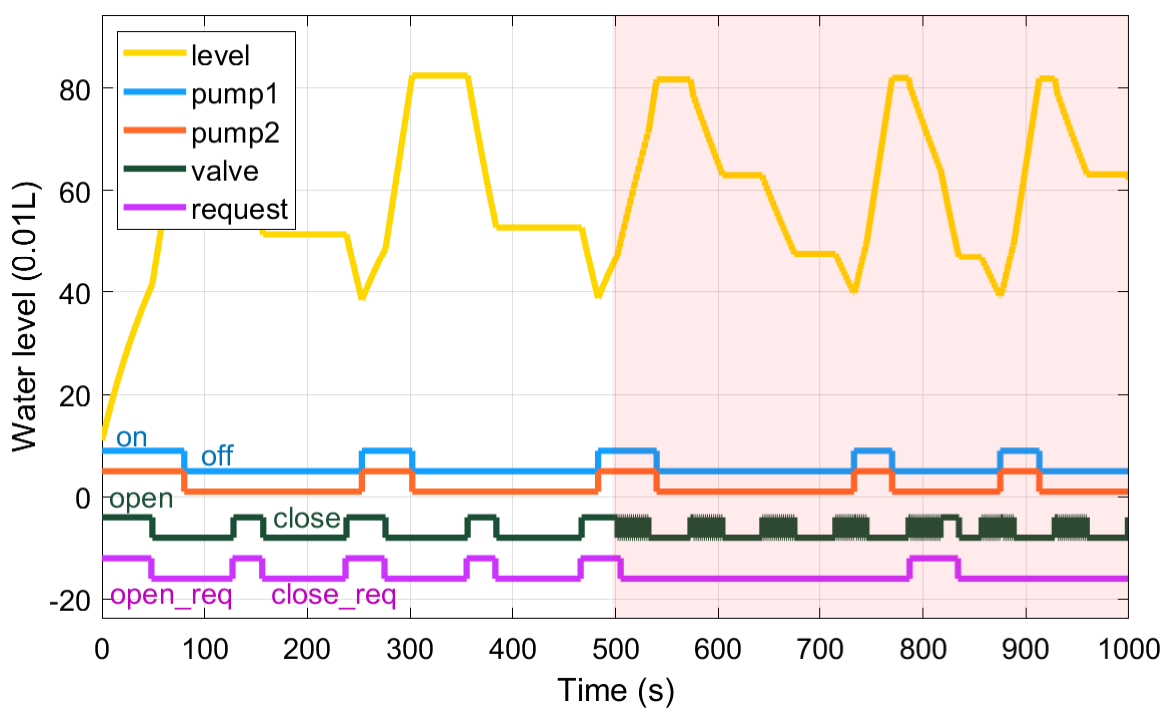}
\end{minipage}\q
\begin{minipage}{0.44\textwidth}
	\includegraphics[
	width=6cm,keepaspectratio=true,angle=0]{./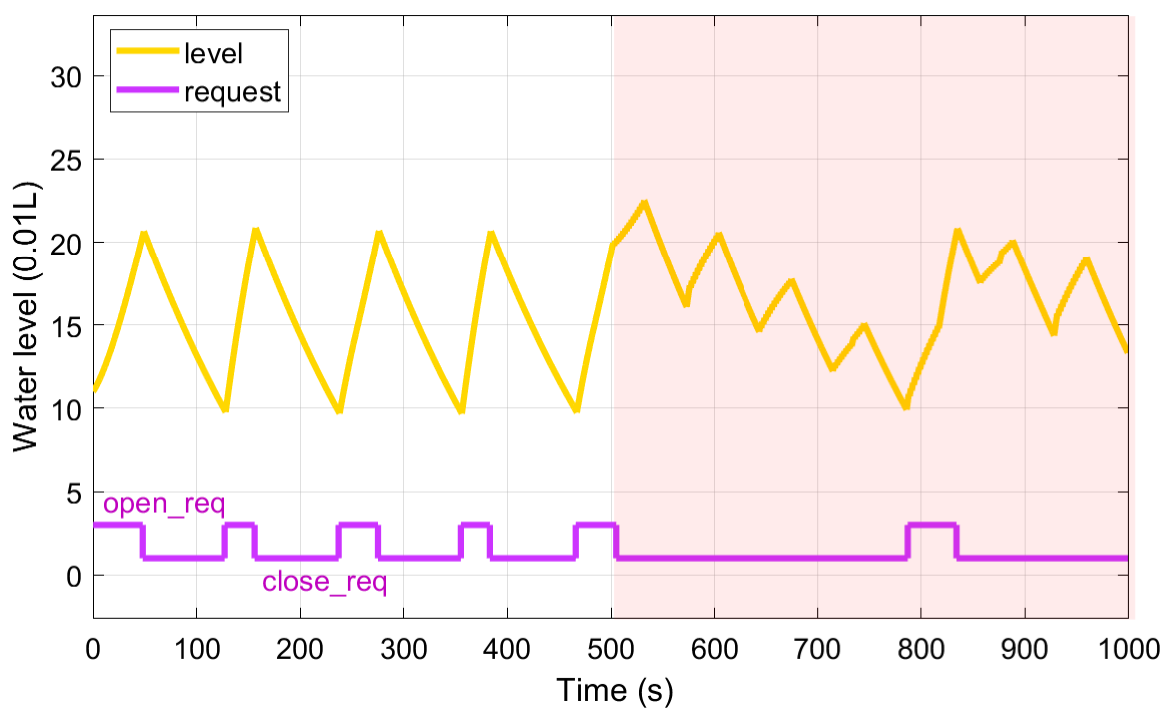}
\end{minipage}
\\[2pt]
\begin{minipage}{0.44\textwidth}
	\includegraphics[
	width=6cm,keepaspectratio=true,angle=0]{./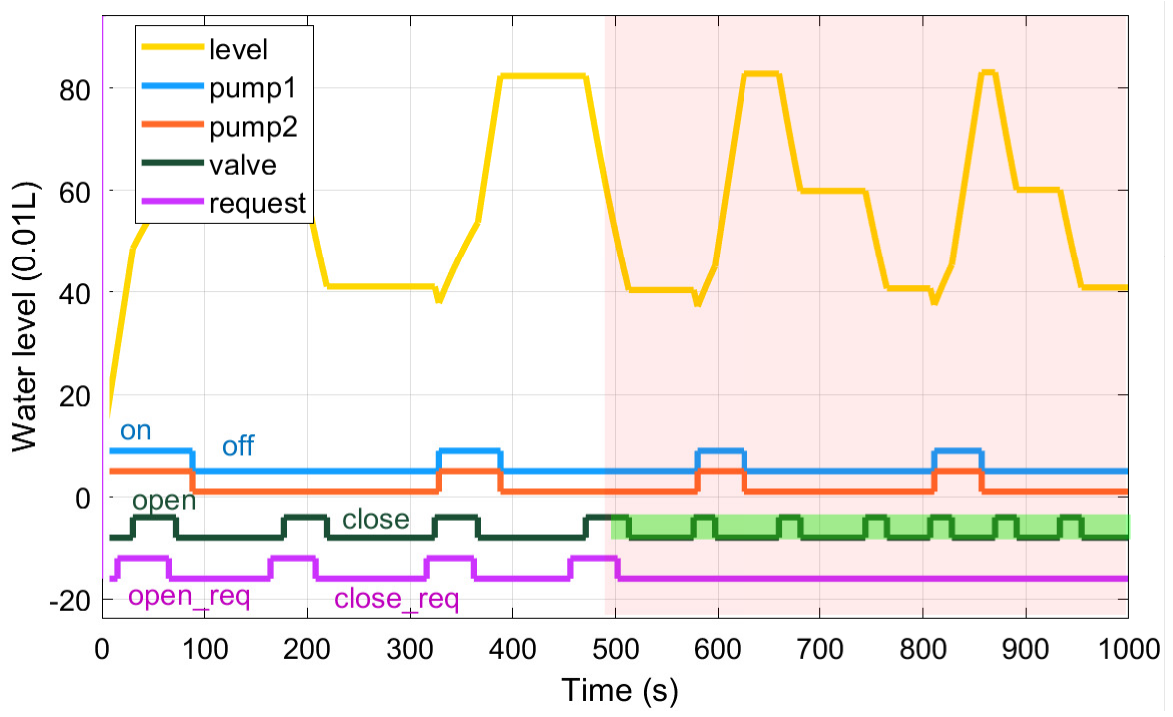}
\end{minipage}\q
\begin{minipage}{0.44\textwidth}
	\includegraphics[
	width=6cm,keepaspectratio=true,angle=0]{./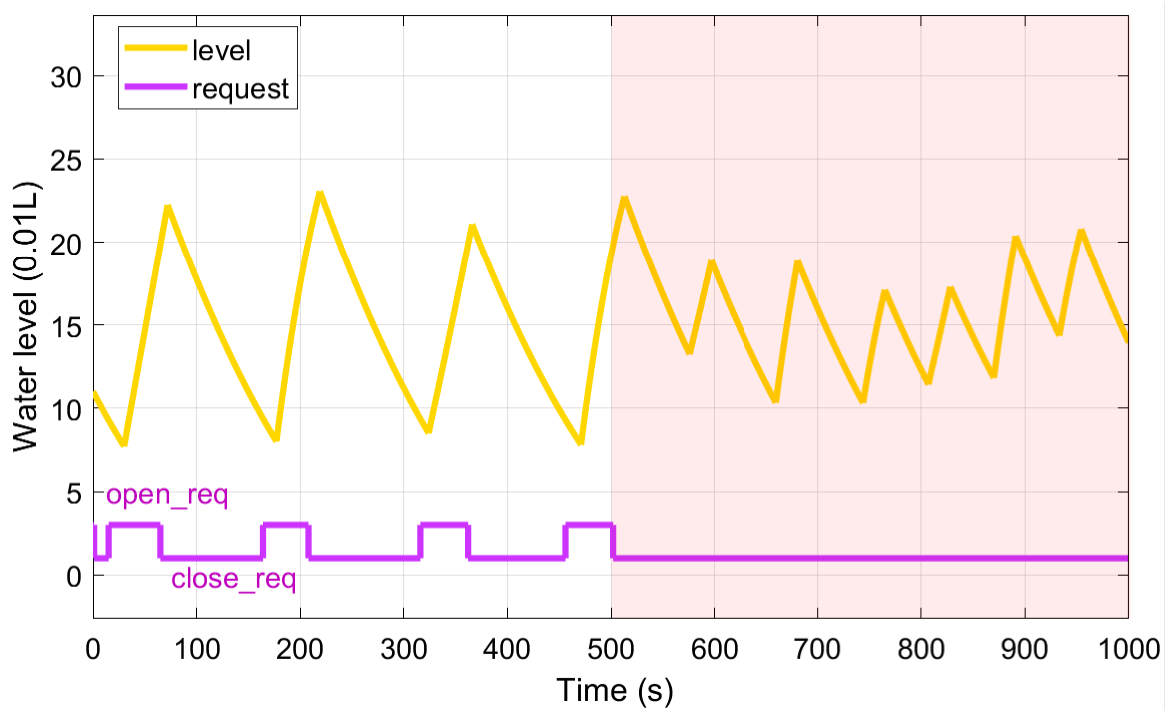}
\end{minipage}
	\caption{Valve damage: injection attack on $\mathrm{PLC}_1$ in the absence (up) and in the presence (down) of enforcement.}
	\label{figure:goal-2-attack-1}
\end{figure}
              
	A  \emph{fourth attack}  with the same goal of chattering the valve may be launched on  $\mathrm{PLC}_{2}$, by sending rapidly alternating requests to  open and close the valve. 
	This can be achieved by means of an \emph{integrity attack}  on the sensor of the tank $T_2$ by rapidly switching  the measurements between low and high. In the right-hand side of Figure~\ref{fig:goal-2-ld-plc1-attack1} we show parts of the ladder logic implementation of this attack on $\mathrm{PLC}_2$, where, for simplicity, we omit the machinery for dealing with the alternation of phases. Again, the attack repeatedly alternates between a \emph{stand-by phase}, lasting $70$ seconds, and a \emph{active phase}, lasting $30$ seconds. When the attack is in the active phase  (red region) the measured water level of $T_2$ rapidly switches between low and high, thus, sending requests to $\mathrm{PLC}_1$ to rapidly open and close the valve in alternation.

	The  impact of  this attack targeting on $\mathrm{PLC}_{2}$  in the absence of an enforcing monitor  is represented in the upper graphs of Figure~\ref{figure:goal-2-attack-2}, where the red region shows when the attack becomes active. Notice that  the rapid alternating 
  requests originating from $\mathrm{PLC}_2$ cause a chattering of the valve. 
	On the other hand, with the enforcement of the property $e''_1$ in $\mathrm{PLC}_{1}$ , the lower graph of Figure~\ref{figure:goal-2-attack-2} shows that the correct behaviour of tanks $T_1$ and $T_2$ is ensured.  In that figure, the green highlighted regions denote when the enforcer  of  $\mathrm{PLC}_{1}$  \emph{detects} the attack and \emph{mitigates} the commands (on the valve) of the compromised $\mathrm{PLC}_2$. Notice that in this case no enforcement is required in $\mathrm{PLC}_2$. 

			\begin{figure}[t]
	\centering
	\begin{minipage}{0.44\textwidth}
		\includegraphics[
		width=6cm,keepaspectratio=true,angle=0]{./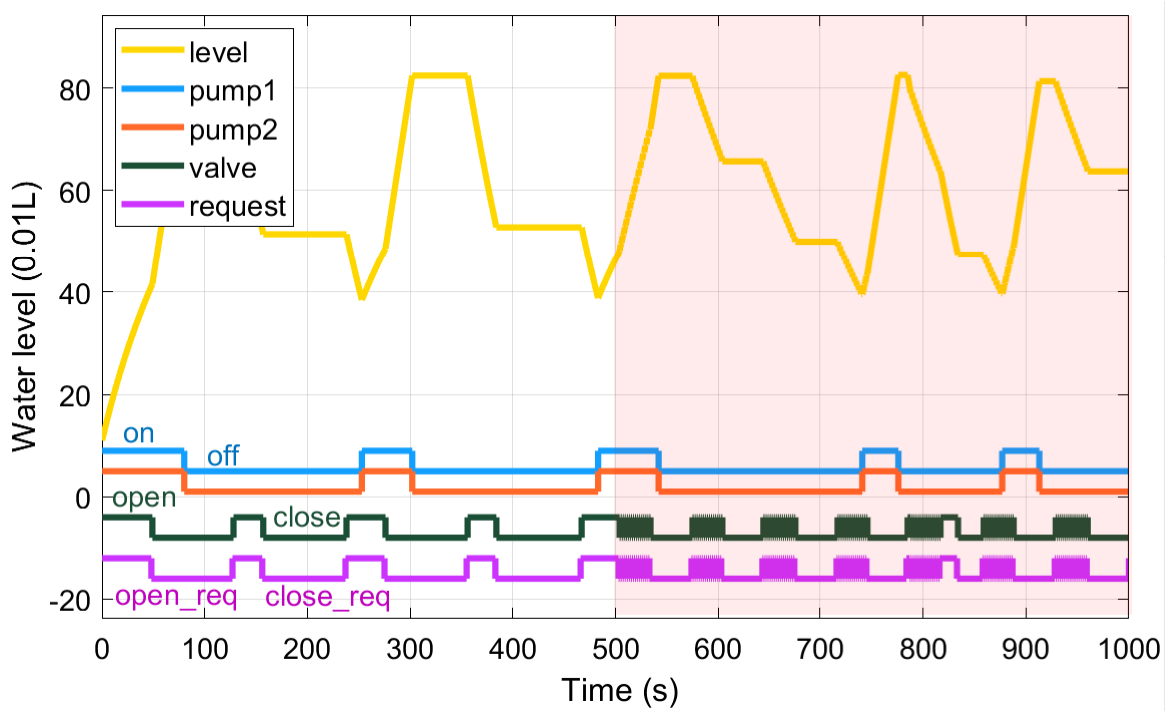}
	\end{minipage}\q
	\begin{minipage}{0.44\textwidth}
		\includegraphics[
		width=6cm,keepaspectratio=true,angle=0]{./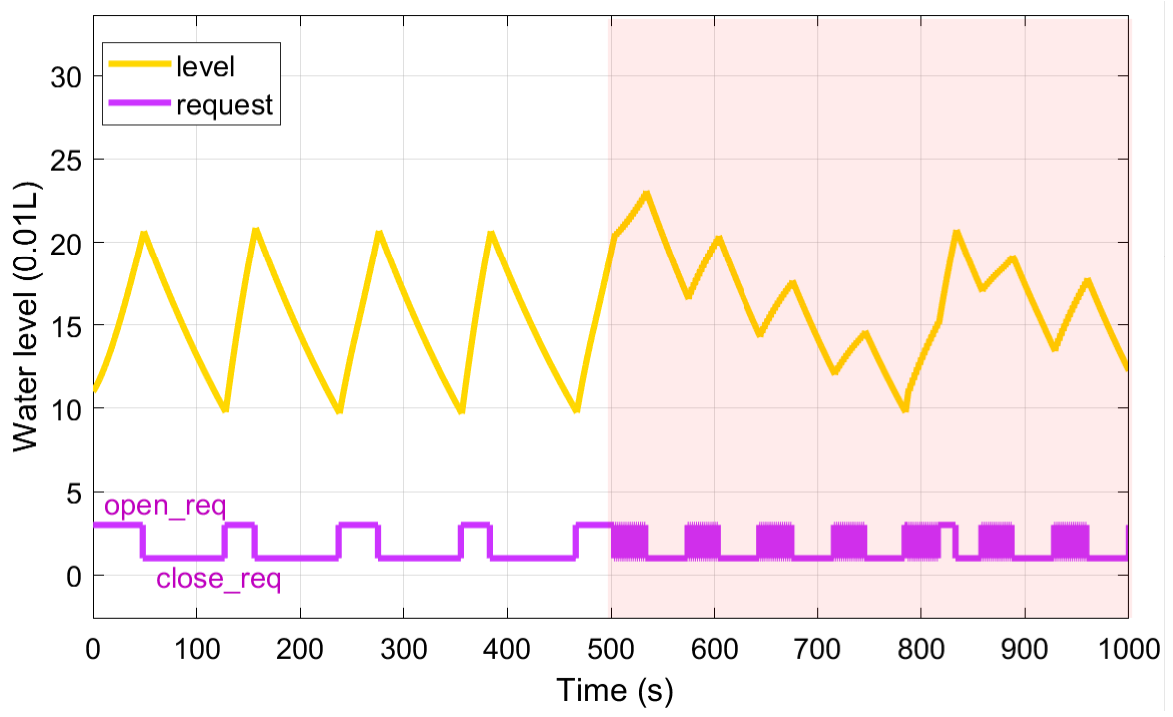}
	\end{minipage}
        \\[2pt]
	\begin{minipage}{0.44\textwidth}
		\includegraphics[
		width=6cm,keepaspectratio=true,angle=0]{./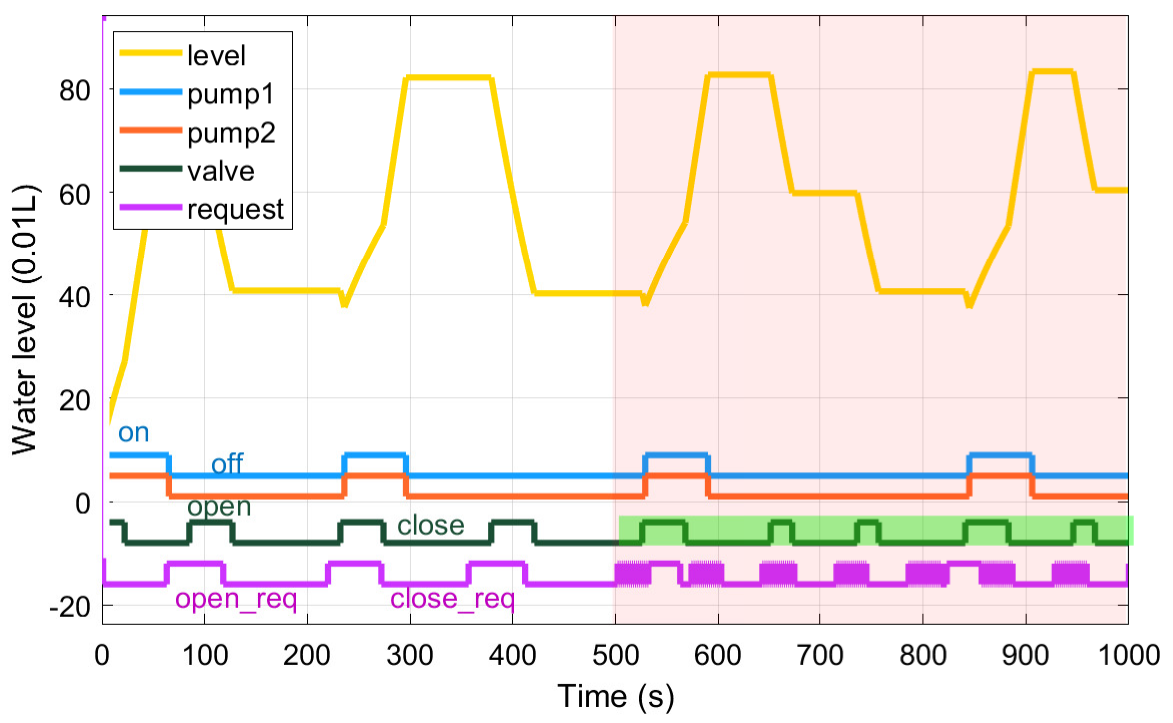}
	\end{minipage}
        \, 
	\begin{minipage}{0.44\textwidth}
		\includegraphics[
		width=6cm,keepaspectratio=true,angle=0]{./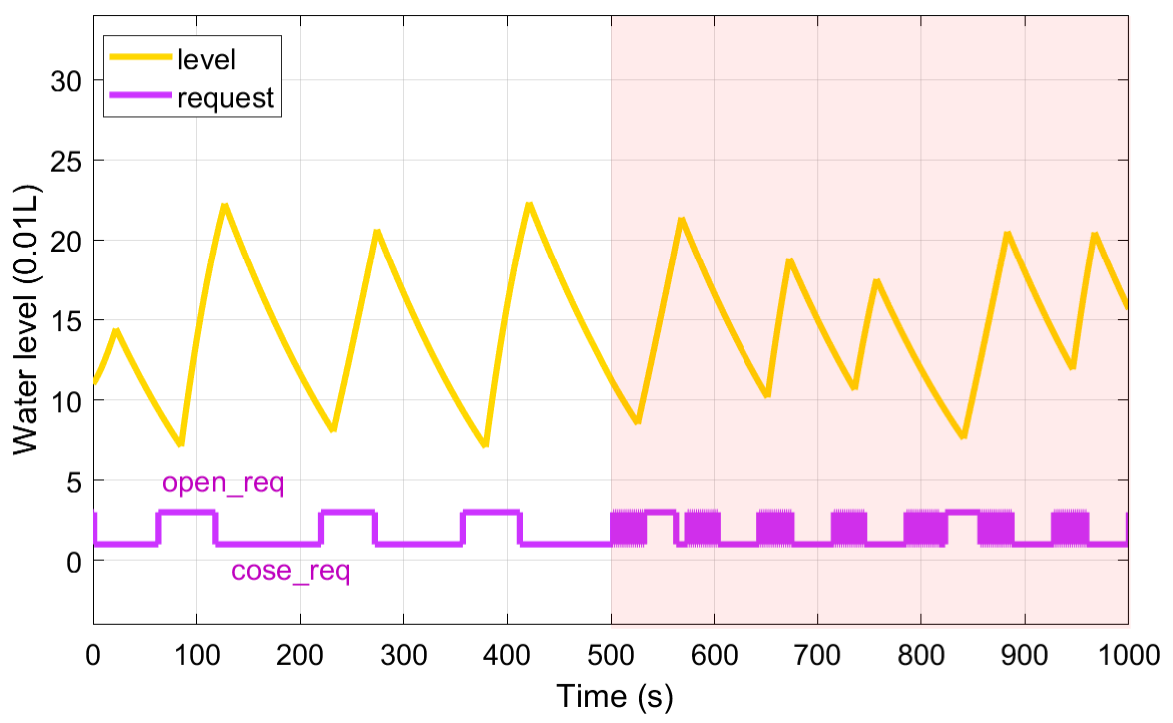}
	\end{minipage}
	\caption{Valve damage: integrity attack on $\mathrm{PLC}_2$ in the absence (up) and in the presence (down) of enforcement.}
	\label{figure:goal-2-attack-2}
\end{figure}

	                \paragraph*{Pump damage} Finally, we consider attacks whose goal is the damage of the pumps, and in particular  $\mathit{pump}_3$. In that case, an attacker may force  the pump to start when the water tank $T_3$ is empty. This can be done with a  \emph{fifth attack} that \emph{injects} commands to turn on the pump based on a ladder logic implementation similar to that seen in Figure~\ref{fig:ld-plc1-attack1}. The  impact of  this attack to  tank $T_3$  in the absence of  enforcement  is represented on the left-hand side graphs of Figure~\ref{figure:goal-3-attack-1}, where the red region shows when the attack becomes active. As the reader may notice, $\mathit{pump}_3$ is turned on when $T_3$ is empty.

                        Now, we can prevent damage on  $\mathit{pump}_3$ by enforcing on  $\mathrm{PLC}_3$ the following conditional bounded persistent property: $e_3'\defn(\CBP{l_3}{\overline{\off{3}}}{1}{w})^\ast$. The enforcement of this property  ensures that $\mathit{pump}_3$ is off for $w$ consecutive scan cycles when the level of water in tank $T_3$ is low, for $w<z$ and $z\in \mathbb{N}$ being the time (expressed in scan cycles) required fill up tank $T_3$  when the pump is off. Thus, when the enforcement of the $e'_3$  is active, the lower graphs of Figure~\ref{figure:goal-3-attack-1} shows that the correct behaviour of $T_3$ is ensured, thus  preventing pump damage. In that figure, the green highlighted regions denote when the monitor \emph{detects} the attack and \emph{mitigates} the commands (of the pumps) of the compromised $\mathrm{PLC}_3$. 
	More precisely, the enforcer \emph{suppresses} the commands to turn on the pump when the tank is empty, for $w$ consecutive scan cycles.

                        \subsection{Discussion}
                        \label{sec:discussion}

                        {\color{black}
                          In this section, we rely on the Vivado Design Suite 15.2 analysis tool to do a performance analysis  of our implementation.

As to the  \emph{hardware resources} used by our 
FPGAs, we measured them in terms of lookup tables and registers used during the enforcement.
The number of them 
depends on the number of states of the  enforcers implemented in the FPGAs. And this number 
is proportional to  the number of scan cycles involved in the enforced (local) property. In particular,
for each scan cycle, the number $k$ of states of the enforcer depends on the monitored input/output signals and their admissible values.  For instance, for scan cycles taking 10 ms  (0.1kHz), an enforced local property lasting 10  seconds 
will cover 1000 consecutive scan cycles, and the synthesised enforcer would have  $k \ast 1000$  states. In our experiments,
when
enforcing properties covering  $1000$ scan cycles the hardware resource use reaches $5$\%; for $10000$  scan cycles the resource use  rises to $13$\%. 

As for the \emph{execution speed} of the enforcement, in general all FPGAs are capable of running at a speed of 100 MHz (or higher). The actual execution speed  depends on the complexity of \nolinebreak the underlying code, in our case the  enforcer, plus some extra code  to implement the network communication protocol (UDP). 
In our experiments, FPGAs ran with a frequency of 1 MHz while PLCs ran with a frequency of 0.1-1kHz. Thus, the overhead introduced by the FPGAs is negligible, independently on the size  (the number of states) of the enforcer implemented in the FPGAs. 
We recall that in Remark~\ref{rem:maximum-time} we assumed that our enforced controllers successfully complete their scan cycle in
less than half of the maximum cycle limit (just in case the scan cycle should be entirely corrected by the enforcer). However, using FPGAs as enforcers this constraint can be actually relaxed.

			\begin{figure}[t]
	\centering
	
	\begin{minipage}{0.44\textwidth}
		\includegraphics[
		width=6cm,keepaspectratio=true,angle=0]{./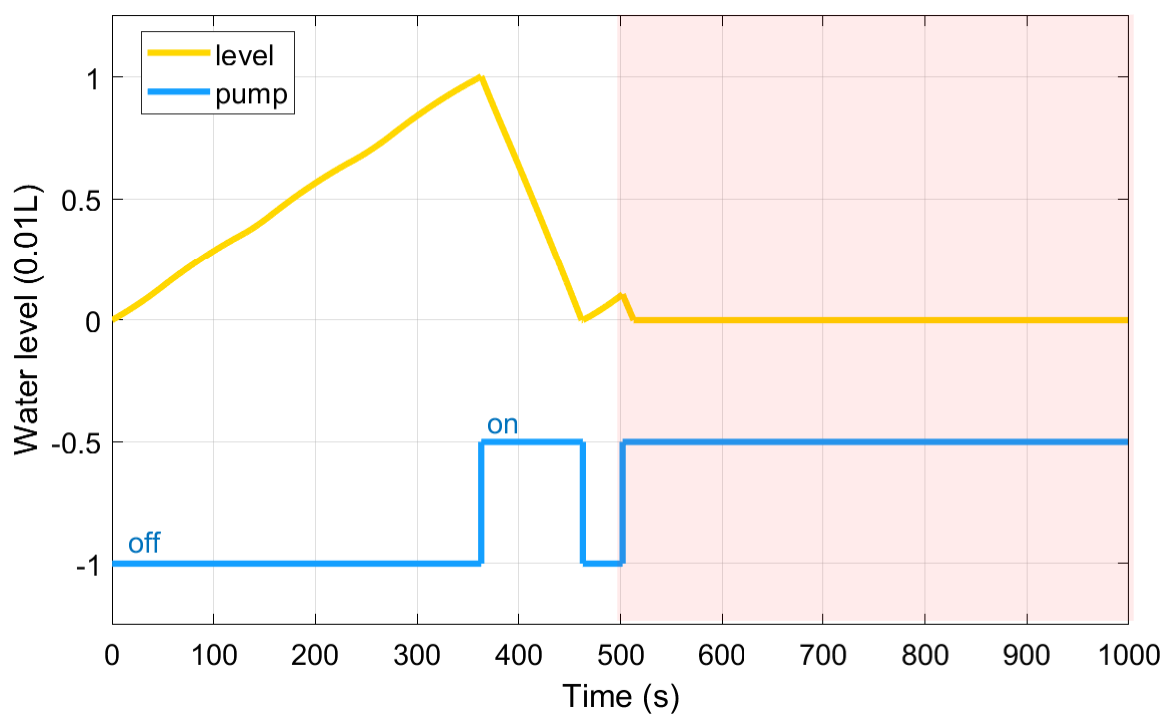}
	\end{minipage}\q
	\begin{minipage}{0.44\textwidth}
		\includegraphics[
		width=6cm,keepaspectratio=true,angle=0]{./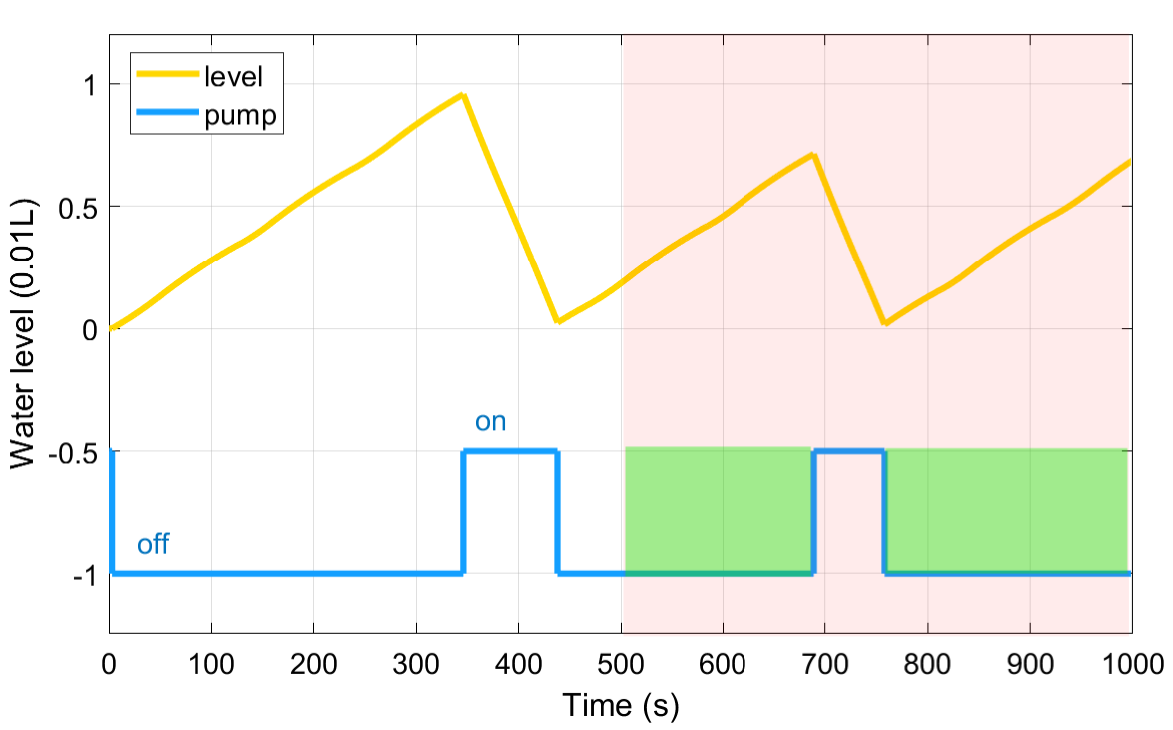}
	\end{minipage}
	\caption{Pump damage: injection attack on $\mathrm{PLC}_3$ in the absence (up) and in the presence (down) of enforcement. }
	\label{figure:goal-3-attack-1}
\end{figure}

Finally, concerning the \emph{communication latency} between enforcers, many FPGAs support high speed and low latency communications, which are the ones used in  industrial control contexts~\cite{intro_automation}. We used  FPGAs
with  Ethernet ports supporting 1 Gbps speed, \emph{i.e.}, with 
100 microseconds latency.   
Furthermore, thanks to our  result of scalability (Corollary~\ref{cor:enforcement-logic}), a network of enforcing FPGAs introduces a negligible overhead in terms of communication latency and hardware resources.


                        }

%
%
%
%
%
%
%

                        \section{Related  work}
                        \label{sec:related}

     	The notion of \emph{runtime enforcement} was introduced by Schneider~\cite{Schneider2000} to enforce security policies via \emph{truncation automata}, a kind of automata that terminates the monitored system in case of violation of the property. Thus, truncation automata can only enforce safety properties.  
{\color{black} Furthermore, the resulting enforcement may obviously lead to deadlock (actually termination) of the monitored system with no room for mitigation.}

	Ligatti et al.~\cite{Ligatti2005} extended Schneider's work by proposing the notion of  \emph{edit automata}, \emph{i.e.}, an enforcement mechanism able of \emph{replacing}, \emph{suppressing} and \emph{inserting} system actions. Edit automata are capable of enforcing instances of safety and liveness properties, along with other properties such as renewal properties~\cite{Bielova2011,Ligatti2005}.
	In general, Ligatti et al.'s edit automata {\color{black} are deterministic automata with}
      an enumerable number of states,  whereas in the current paper we restrict ourselves to finite-state edit automata equipped with Martinelli and Matteucci's operational semantics~\cite{MartinelliMatteucci2007}.
      {\color{black} 	Ligatti et al.~\cite{Ligatti2005}  studied a hierarchy of enforcement mechanisms, each with different transformational capabilities: Schneider's \emph{truncation automata}, \emph{suppression automata}, \emph{insertion automata}, and finally, \emph{edit automata} that combine the power of suppression and insertion automata. They defined different notions of enforcement, and in particular the so called \emph{precise enforcement} (Definition~2, pag.\ 5) which basically corresponds to the combination of our notion of transparency and soundness, proved in Theorems~\ref{prop:transparency-logic}  and \ref{thm:safety-logic}, respectively. }

	Bielova and Massacci~\cite{Bielova2011,Bielova2011Jrn} 
	provided a stronger notion of enforceability by introducing a \emph{predictability} criterion to prevent monitors from transforming invalid executions in an arbitrary manner. Intuitively, a monitor is said predictable if one can predict the number of transformations used to correct invalid executions.
        {\color{black} In our setting, in case of injection of a  malware which may act in an unpredictable manner, this approach appears unfeasible. 
          }  

    Falcone et al. \cite{Falcone2,Falcone3} proposed a synthesis algorithm, relying on \emph{Streett automata}, to translate most of the property classes defined within the \emph{safety-progress hierarchy}~\cite{Hierarchy_of_Temporal}
	into (a slight variation of) edit automata.     
        In the safety-progress hierarchy, our global properties can be seen as \emph{guarantee properties}  for which all execution traces that satisfy a property contain at least one prefix that still satisfies the property. {\color{black} However, it should be noticed that they consider untimed properties only; as already pointed out before, timed actions play a special role in our enforcement  and they  cannot be treated as untimed actions.}


 Beauquier et al.~\cite{BCL13} proved that finite-state edit automata (\emph{i.e.} those edit automata we are actually interested in) can only enforce a sub-class of regular properties. Actually they can enforce \emph{all and only} the  regular properties that can be recognised using finite automata whose cycles always contain at least one final state. This is the case of our enforced regular properties,  as well-formed local properties in $\PropL$  always terminate with the ``final'' atomic property $\fineC$.

 {\color{black}
 Pinisetty and Tripakis~\cite{Pinisetty-Tripakis2016} studied the \emph{compositionality} of the enforcement of different regular properties $p_1, \ldots, p_n$ at the same time, by composing the associated enforcing monitors. The idea is to replace a monolithic approach, in which a monitor is sinthesised from the property $p_1 \cap \ldots \cap p_n$, with a compositional one, where the $n$ monitors enforcing the properties $p_i$ are somehow put together to enforce  $p_1 \cap \ldots \cap p_n$. 
 The authors of~\cite{Pinisetty-Tripakis2016} proved that runtime enforcement is not compositional with respect to general regular properties, neither with respect to serial nor parallel composition. On the other hand compositionality holds for certain sub-classes of regular properties such as safety (or co-safety) properties. Here, we wish to point out that our notion of scalability is different from their notion of compositionality, as we aim at scaling our enforcement on a network of PLCs and not on multiple regular properties on \nolinebreak the \nolinebreak same \nolinebreak PLC. 
 }

  {\color{black}
  Bloem et al.~\cite{TACAS2015-Bloem} defined a synthesis algorithm that given a safety property returns a monitor, called \emph{shield}, to enforce untimed properties in  \emph{reactive systems} (which have many aspects in common with control systems).  Their algorithm rely on a notion called $k$-stabilization: when the design reaches a state where a property violation becomes unavoidable for some possibile future inputs, the shield is allowed to deviate for at most $k \in \mathbb N$ steps; if a second violation happens during the $k$-step recovery phase, the shield enters a fail-safe mode where it only enforces correctness, but no longer minimises the deviation.
  However, The $k$-stabilizing shield synthesis problem is unrealisable for many safety-critical systems, because a finite number of deviations cannot be guaranteed.  Humphrey et al.~\cite{HCV2016} addressed this problem by proposing the notion of \emph{admissible shields} which was  extended and generalised in  K\"onighofer et al.~\cite{Konighofer2017} by assuming that systems  have a cooperative behaviour with respect to the shield, \emph{i.e.}, the shield ensures a finite number of deviations if the system chooses certain outputs.  The authors presented a synthesis procedure that maximises the cooperation between system and environment for satisfying the required enforced properties.
  This approach has some similarities with our enforcement in which a violation of a property during a scan cycle induces the suppression of all subsequent controller actions until the PLC reaches the end of the scan, so the monitor can insert a safe trace before permitting the completion of the scan cycle. 
        }

  {\color{black}
  Pinisetty et al.~\cite{Pinisetty_2017} proposed a \emph{bi-directional runtime enforcement} mechanism  for reactive systems, and more generally for cyber-physical relying on Berry and Gonthier's {synchronous hypothesis}~\cite{Esterel}, to correct both inputs and outputs.
  Pinisetty et al.\ express safety properties in terms of \emph{Discrete Timed Automata}  (DTA)
  which are more expressive than our class of regular properties.
  Thus, an execution trace satisfies a required property only if it ends up on a final state of the corresponding DTA.   However, as not all regular properties can be enforced~\cite{BCL13}, they proposed a more permissive enforcement mechanism  that accepts execution traces as long as  there is still the possibility of reaching a final state. 
  Furthermore, due to the instantaneity of the synchronous approach, their enforcement actions are applied in the same reaction step to ensure reactivity. On the contrary, in our approach the enforcement takes places before the conclusion of scan cycles which are clearly delimited via $\fineC$-actions. Our notion of deterministic enforcers is taken from Pinisetty et al.~\cite{Pinisetty_2017}. 
    Moreover, when inserting safe actions,  our synthesised enforcers follows Pinisetty et al.'s \emph{random edit} approach, where the inserted  safe action is  randomly chosen from a list of admissible actions. 
  }

  {\color{black}
    Pearce et al.~\cite{Pearce_2019} proposed a bi-directional runtime enforcement  over valued signals for PLCs, by introducing \emph{smart I/O modules} (similar to our secure proxy) between the PLCs and the controlled physical processes, to act as an effective line of defence. The authors express security properties in terms of \emph{Values Discrete Timed Automata}  (VDTA), inspired by the DTA of   Pinisetty et al.~\cite{Pinisetty_2017}. Unlike DTA, VDTA support valued signals, internal variables, and  guard conditions.  As in   Pinisetty et al.~\cite{Pinisetty_2017}, the paper adopts the synchronous hypothesis~\cite{Esterel} to correct both inputs and outputs; thus,  their enforcement actions are applied in the same reaction step to ensure instantaneous reactivity. The authors  do not consider attacks that may tamper with inter-controller communications: their attackers may only manipulate sensor signals and/or actuator commands. Finally, their semantics  requires that every enforcer knows the state of all relevant signals and commands in a given system. Thus, as written by the same authors,  a networked system featuring multiple I/O modules may significantly complicate the enforcement, as pertinent I/O for a security policy may not be locally available. As a consequence, unlike us, their enforcement does not naturally scale to networks of controllers; we believe this is  basically due to the fact that they do bi-directional enforcement. 
      Last but not least, like them, we implement enforcers via FPGAs to ensure efficiency and security at the same time. In particular, when inserting safe actions our implementation fixes a  priority between admissible safe actions, 
       similarly to their \emph{selected edit}  approach. However, our  implementation differs from theirs in at least the  following aspects: 
    \begin{inparaenum}
    \item  our FPGAs do enforce PLC transmissions 
      (with a negligible latency);
    \item our enforcement is uni-directional and hence our FPGAs need to know
      only the state of  signals and commands of the corresponding enforced PLCs; 
      \item as a consequence, our FPGAs can be networked to monitor field communications networks paying only negligible overhead in terms of computational resources and communication latency. 
    \end{inparaenum}
           }

  Aceto et al.~\cite{AcetoCC2018}  developed an operational framework to  enforce properties in HML logic with recursion ($\mu$HML) relying on suppression. More precisely, they achieved the  enforcement of  a  safety  fragment of $\mu$HML by providing a linear automated synthesis algorithm that generates correct suppression monitors from formulas.  Enforceability of  modal $\mu$-calculus (a reformulation of $\mu$HML) was previously tackled  by Martinelli and Matteucci~\cite{MartinelliMatteucci2007} by means of a synthesis algorithm which is exponential in the length of the enforceable formula.    Cassar~\cite{Cassar_PhD} defined a general framework to compare different enforcement models and different correctness criteria, including optimality. His works focuses on the enforcement of a safety fragment of $\mu$HML, paying attention to both uni-directional and bi-directional notions of enforcement. {\color{black}
    More recently,  Aceto et al.~\cite{AcetoFORTE2021} developed an operational framework for bi-directional enforcement and used it to study the enforceability of the aforementioned safety fragment of HML with recursion, via a specific type of bi-directional enforcement monitors called \emph{action disabling monitors}. 
}

        As regards papers in the context of \emph{control system security}  closer to our objectives, 	McLaughlin~\cite{McLaughlin-ACSAC2013} proposed the introduction of an enforcement  mechanism, called $\mathrm{C}^2$, similar to our secure proxy, to mediate the control signals $u_k$ transmitted by the PLC to the plant. Thus, like our secured proxy, $\mathrm{C}^2$ is able to suppress commands, but unlike our proxy, it cannot autonomously send commands to the physical devices in the absence of a timely correct action  from the PLC. Furthermore, $\mathrm{C}^2$ does not seem to cope with inter-controller communications,  and hence with colluding malware operating on  PLCs of the same  field network.

        Mohan et al.~\cite{Mohan-HiCONS2013} proposed a different approach by defining an ad-hoc security architecture, called \emph{Secure System Simplex Architecture} (S3A), with the intention to generalise the notion of ``correct system state'' to include not just the physical state of the plant but also the \emph{cyber state} of the PLCs \nolinebreak of the system. 	In S3A,  every PLC runs under the scrutiny \nolinebreak  of a \emph{side-channel monitor} which looks for deviations with respect to \emph{safe executions}, taking  care of real-time constraints, memory usage, and communication patterns.  If the information obtained via the monitor differs from the expected model(s)  of  the  PLC, a \emph{decision module} is informed to decide whether to pass the control  from the ``potentially compromised'' PLC to a \emph{safety controller} to maintain the plant within the required safety margins. As reported by the same authors, S3A has a number \nolinebreak of limitations comprising: (i) the possible compromising of the side channels used for monitoring, (ii)  the tuning of the timing parameters of the state machine, which is still a manual  process.

    The present work is a revised extension of the  conference version appeared in~\cite{CSF2020}.  
Here, we provide a detailed comparison with that paper. In Section~\ref{sec:attacker-model} we specified the attacker model and the attacker objectives.  In Section~\ref{sec:calculus}, we adopted a simplified operational semantics for edit automata, in the style of  Martinelli and Matteucci~\cite{MartinelliMatteucci2007}. 
         In Section~\ref{sec:logics}, we have extended our language of regular properties with intersection of both local and global properties. With this extension we have expressed a wide family of correctness properties that can be combined in a modular fashion; these properties  include and extend the three classes of properties appearing in the conference paper.
          In Section~\ref{sec:synthesis}, we have extended our synthesis algorithm  to deal with our extended properties: both local and global intersection of properties are synthesised in terms of cross products of edit automata. Notice that, compared to the conference paper, our enforcement mechanism does not rely anymore on an ad-hoc semantic rule \rulename{Mitigation} to insert safe actions at the end of the scan cycle, but rather on the more standard rule \rulename{Insert} together with the syntactic structure of synthesised enforcers.
          As stated in Proposition~\ref{prop:poly2}, now our synthesis algorithm depends on the size  and the number of occurrences of intersection operators of the property in input.
 Last but not least, in this journal version we provide an implementation of our use case based on: (i) Simulink to simulate the physical plant, (ii) OpenPLC on Raspberry Pi to run open PLCs, and (iii) FPGAs to implement enforcers.  We have then exposed our implementation  to five different attacks targeting the PLCs and discussed the effectiveness of the proposed enforced mechanism. 

 In a preliminary work~\cite{TCS-2021}, we  proposed an extension of our process calculus with an explicit representation for malware  code. In that paper,  monitors are synthesised   from  PLC code rather than correctness properties. The focus of that paper was mainly  on: (i)  deadlock-free enforcement, and (ii)  intrusion detection via secure proxies. 
 {\color{black}Here, it is worth pointing out that the work in~\cite{TCS-2021} shares some similarities with \emph{supervisory control theory}~\cite{DES1987,DEST1994},  a general theory for automatic synthesis of controllers (supervisors) for \emph{discrete event systems}, given a plant model and a specification for the controlled behaviour.  Fabian and Hellgren~\cite{Fab-Hell1998} have pointed out a number of issues to be addressed when adopting supervisory control theory in industrial PLC-based facilities, such as causality, incorrect synchronisation, and choice between alternative paths. However,  as our syntheses regard only logical devices (no plant involved), we are not affected from similar problems.
   }

 {\color{black} Finally, Yoong et al.~\cite{Yoong2009} proposed a synchronous semantics for \emph{functions blocks},
   a component-oriented model at the core of the IEC 61499 international standard~\cite{61499}  used to design distributed industrial process measurement and control systems. In contrast to the scan cycle model followed in the current paper (IEC 61131~\cite{61131-3}) prescribing the execution of a sequential portion of code at each scan cycle, the \emph{event-driven} model for function blocks relies on the occurrence of asynchronous events to trigger program execution. Yoong et al.~\cite{Yoong2009} adopted a synchronous approach to define an execution semantics to function blocks by translating them into a subset of Esterel~\cite{Esterel}, a well-known synchronous language. Here, we wish to point out that our PLC specification is given at a more abstract level compared to that of \cite{Yoong2009}, and it complies with the sequential scan cycle standard IEC 61131, rather than the event-driven standard IEC 61499.
          }
 

	\section{Conclusions and future work}
	\label{sec:conclusion}

        We have defined a formal language to express  networks of monitored controllers, potentially compromised with colluding malware that may forge/drop actuator commands, modify sensor readings, and forge/drop inter-controller communications. The enforcing monitors have been expressed via a finite-state sub-class of Ligatti et al.'s edit automata. {\color{black} In this manner, we have provided a formal representation of field communications networks in which controllers are enforced via secure monitors, as depicted in Figure~\ref{fig:Sys-structure}. The room of manoeuvre of attackers is defined via a proper attacker model. }
        Then, we have defined a simple description language to express
        \emph{timed regular properties} that are recognised by finite automata whose cycles always contain at least one final state (denoted via an 
        $\fineC$-action). {\color{black}We have used that language to build up formal definitions for \emph{pattern templates} suitable for expressing a broad family of correctness properties that can be combined in a modular fashion to prescribe precise controller behaviours.   As an example,  our description language allows us to capture all (bounded variants of the)  controller properties studied in Frehse et at.~\cite{toolchain}.} 
        Once defined a formal language to describe controller properties, we  have provided  a \emph{synthesis function}  $\funEdit{\!-\!}$ that, given an alphabet $\PSet$ of observable controller actions 
        and  a deterministic regular property $e$ consistent with $\PSet$, returns a finite-state deterministic 
edit automaton $\funEditP{e}\PSet$.
The resulting enforcement mechanism will  ensure the required features advocated in the Introduction:  transparency, soundness, deadlock-freedom, divergence-freedom, mitigation and scalability.

         %

As a final contribution, we have provided a full implementation of a non-trivial case study in the context of industrial water treatment, where enforcers are implemented via FPGAs.  In this setting, we showed the effectiveness our enforcement mechanism when exposed to  five carefully-designed  attacks targeting the PLCs of our use case.


{\color{black}
  As \emph{future work}, we wish to test our enforcement mechanism in different  domains, such as industrial and cooperative robotic arms (\emph{e.g.}, Kuka, ABB, Universal Robots, etc) which are endowed with control architectures working at a fixed rate~\cite{siciliano2009modelling}. 
More generally, we would like to consider physical plants with significant uncertainties, in terms of measurements noise and physical process uncertainty.  This is because significant plant perturbations  might falsely indicate that the monitored  controller is under attack, inducing our enforcers to take erroneous correcting actions.  To address such challenges we would like to implement in  our enforcers  well-know control-theory algorithms, based on linear difference equations, to correctly estimate the state of the physical plant even when  affected by significant uncertainties.
  Finally, we would like to enhance our enforcers to deal with malicious alterations of sensor measurements due to compromised sensor devices. In order to  do so, we intend to  integrate our secured proxies with physics-based attack detection mechanisms~\cite{Cardenas-etal2011,ACM-survey2018}.
}

\begin{acks}
  We thank the anonymous reviewers for their insightful and careful reviews. 
 We  thank Adrian Francalanza, Yuan Gu, Marjan Sirjani and  Davide Sangiorgi for their comments on early drafts of the paper. The authors have been partially supported by the project ``Dipartimenti di Eccellenza 2018--2022'' funded by the Italian Ministry of  Universities and Research (MUR).
\end{acks}

	\bibliographystyle{ACM-Reference-Format}
	\bibliography{main}

\appendix

\section{Proofs}

    {\color{black}
	In order to prove the results of  Section~\ref{sec:synthesis}, in Table~\ref{tab:edit-automata-product}
        we provide the technical definition of cross product between two edit automata used in the synthesis of
        Table~\ref{tab:synthesis-logic}.  As the first three cases are straightforward, we explain only the fourth case,
  the cross product associated to $\TimesP{\sum_{i \in I}\lambda_i.\Edit_i}{\sum_{j \in J}\nu_j.\Edit_j}{\Zrec}$ 
  Here, we use the abbreviation  $\lambda.\Edit  \oplus \lambda'.\Edit$ to denote the automaton $\lambda.\Edit  + \lambda'.\Edit$, if $\lambda \neq \lambda'$, and, the automaton $\lambda.\Edit $, if $\lambda  = \lambda'$. 
  Thus, the product does the intersection of those addends $\lambda_i.\Edit_i$ and $\nu_j.\Edit_j$, with $(i,j) \in H$,
  for which: (a) the prefixes have the same output (\emph{e.g.}, $\lambda_i=\alpha$ and $\nu_j=\alpha < \alpha'$), (b)
  the prefixes are not suppressions, (c) the product of their continuations $\Edit_i$ and $\Edit_j$ ``is not empty'', \emph{i.e.}, it is not a suppression-only automaton. For the other addends $\lambda_i.\Edit_i$ and $\nu_j.\Edit_j$ which do not comply with the above conditions (\emph{i.e.}, $(i,j) \not \in H$),   the product results in a suppression-only automaton. 
}

        	\begin{table}[t]
	{\color{black}
			{\footnotesize		
		\begin{displaymath}
				\begin{array}{rcl}
					\TimesP{\Xrec_1}{\Xrec_2}{\Zrec} &\defn& 
					{ \TimesP{\Edit_1}{\Edit_2}{\Zrec}}, \text{ if } \Xrec_1=\Edit_1 \text{ and } \Xrec_2 = \Edit_2 \\[5pt]
					\TimesP{\Xrec}{\sum\limits_{i \in I}
						\lambda_i.\Edit_i}{\Zrec} &\defn& 
				{ \TimesP{\Edit}{\sum\limits_{i \in I}\lambda_i.\Edit_i}{\Zrec}}, \text{ if } \Xrec =\Edit 
					\\[5pt]
					\TimesP{\sum\limits_{i \in I}
						\lambda_i.\Edit_i}{\Xrec}{\Zrec} &\defn& 
				\TimesP{\sum\limits_{i \in I}\lambda_i.\Edit_i}{\Edit}{\Zrec}, \text{ if } \Xrec =\Edit \\[5pt]

					\TimesP{\sum\limits_{i \in I}
						\lambda_i.\Edit_i}{\sum\limits_{j \in J}
					  \nu_j.\Fdit_j}{\Zrec} &\defn&
	
  \sum\limits_{(i,j)\in H} (  {\lambda_i}. \Xrec_{i,j}  \oplus  {\nu_j}. \Xrec_{i,j})   
		                                + \sum\limits_{  \alpha \in      {\mathcal Q} } {^{-}}\alpha.\Zrec,      \textrm{ for }
                                                			                        \\[4pt]
                                           &&    
                                                                                                 \Xrec_{i,j}=\TimesP{\Edit_i}{\Fdit_j}{ \Xrec_{i,j}}    \\[1pt]
                                           &&    
                                                                                                     {\mathcal Q}= (\PSet \setminus \{\tick,\fineC\}) \setminus  \bigcup_{\tiny (i,j) \in H}\{\lambda_i,\nu_j\}
                                                    \\[1pt]                                                         
                                                    &&
                                                     H =\{ (i,j) {\in} I {\times} J: \sysAct{\lambda_i} {=} \sysAct{\nu_j}
                                                    {\neq} \tau { \text{ and } \TimesP{\Edit_i}{\Fdit_j}{  \Xrec_{i,j} }}\neq \sum\limits_{  \alpha \in  \PSet \setminus  \{\tick,\fineC\} }\!\!{^{-}}\alpha. \Xrec_{i,j}  \}
			\end{array}	
		\end{displaymath}
                }	
		\caption{Cross product between two edit automata with alphabet $\PSet$.}
		\label{tab:edit-automata-product}
	        }
	        \end{table}

        Let us prove the complexity of the synthesis algorithm formalised in Proposition~\ref{prop:poly2}. For that we need three technical lemmata. 
        {\color{black}
        	The first lemma shows that our  synthesis algorithm always returns an edit automaton  in  a specific canonical form.  }
        {\color{black}
        	\begin{lemma}[Canonical Form] 
        		\label{lem:canonical}
        		Let $e \in \PropG$   and $\PSet$ be a set of actions such that $\events {e} \subseteq \PSet$. 
        		Then,  either  $\synthESP{e}{}{\PSet}=\Edit $ or  $\synthESP{e}{}{\PSet} = \Zrec$, with $\Zrec = \Edit$,  for $\Edit$ of the following form: 
        		\[
        		\Edit=
        		\begin{cases}
        			\sum\limits_{\substack{{\scriptscriptstyle i\in I} }}
        			\alpha_i.\Edit_i 
        			+ \sum\limits_{\substack{{\scriptscriptstyle i\in I} }}
        			{\scriptstyle \eact{\, \fineC}{\alpha_i}}.\Edit_i 
        			
        			+ \sum\limits_{\subalign{{\scriptscriptstyle \alpha  \in  {\mathcal Q} }}} {^{-}}\alpha.\Fdit, \textrm{  if }  \fineC   \not \in \cup_{i \in I}\alpha_i  \\[2pt]
        			
        			\sum\limits_{ \substack{{\scriptscriptstyle i\in I} }}
        			\alpha_i.\Edit_i 
        			+ \sum\limits_{\subalign{{\scriptscriptstyle \alpha  \in  {\mathcal Q}   }}} {^{-}}\alpha.\Fdit  , \textrm{ otherwise.}                      
        		\end{cases}
        		\]	 where $\alpha_i \in \PSet$, ${\mathcal Q}= {\PSet} \setminus  (\cup_{i \in I}\alpha_i\cup\{\tick,\fineC\})$, and $\Edit_i$ and $\Fdit$ edit automata. A similar result holds when $e$ is replaced with some local property  $p \in \PropL$. 
        	\end{lemma}
        	\begin{proof}
        		The proof is  by induction on the structure of the property $e$.  The most interesting case is when $e= e_1\cap e_2$. Then,  \smash{$\synthESP{e_1\cap e_2}{}{\PSet}$} returns \smash{$\TimesP{\synthESP{e_1}{}{\PSet}}{\synthESP{e_2}{}{\PSet}}{\Xrec}$}.
        		By inductive hypothesis,  $\synthESP{e_1}{}{\PSet}$ and 
        		$\synthESP{e_2}{}{\PSet}$ have the required form. We prove the case when
        		\begin{itemize}
        			\item $\synthESP{e_1}{}{\PSet}= \sum\limits_{\substack{{\scriptscriptstyle i\in I} }}
        			\alpha_i.\Edit_i
        			+ \sum\limits_{\substack{{\scriptscriptstyle i\in I} }}
        			{\scriptstyle \eact{\, \fineC}{\alpha_i}}.\Edit_i 
        			+ \sum\limits_{\subalign{{\scriptscriptstyle \alpha \in  {\mathcal Q}_1 }}} {^{-}}\alpha.\Edit'$,   
        			with {\small ${\mathcal Q}_1={\PSet} \setminus  (\cup_{i \in I}\alpha_i\cup\{\tick,\fineC\})$} and $\fineC \not \in \cup_{i \in I}\alpha_i$
        			\item 
        			$\synthESP{e_2}{}{\PSet}=  \sum\limits_{ \substack{{\scriptscriptstyle j\in J} }}
        			\alpha_j.\Fdit_j
        			+ \sum\limits_{\subalign{{\scriptscriptstyle \alpha \in  {\mathcal Q}_2 }}} {^{-}}\alpha.\Fdit'$,
        			with {\small ${\mathcal Q}_2={\PSet} \setminus  (\cup_{i \in I}\alpha_i\cup\{\tick,\fineC\})$} and $\fineC   \in \cup_{j \in J}\alpha_j$. 
        		\end{itemize}
        		The other cases are similar or simpler. 
        		For any $i \in I$ and $j\in J $, we have: (i) 
        		$\sysAct{\alpha_i}= \sysAct{ \alpha_j}$  if and only if  $\alpha_i =\alpha_j$;  (ii)   $\sysAct{\eact{\! \fineC}{\alpha_i\!}}=\sysAct{\alpha_j}$ holds if and only if   $\alpha_i =\alpha_j$. We recall that
        		$\sysAct{^{-}\alpha}= \tau$. 
        		Thus, the set $H$ of the definition of cross product in Table~\ref{tab:edit-automata-product} for 
        		$\TimesP{\synthESP{e_1}{}{\PSet}}{\synthESP{e_2}{}{\PSet}}{\Xrec}$
        		is equal to $\{(i,j) \in I \times J:  \alpha_i  =  \alpha_j  { \text{ and } \TimesP{\Edit_i}{\Fdit_j}{  \Xrec_{i,j} }}\neq \sum_{  \alpha \in \PSet \setminus  \{\tick,\fineC\} }{^{-}}\alpha. \Xrec_{i,j}  \}$, with $\Xrec_{i,j}=\TimesP{\Edit_i}{\Fdit_j}{ \Xrec_{i,j}} $.
        		As a consequence, we  derive   
        		\[ \TimesP{\synthESP{e_1}{}{\PSet}}{\synthESP{e_2}{}{\PSet}}{\Xrec}=  
        		\sum\limits_{(i,j)\in H}     \alpha_i. \Xrec_{i,j} +  \sum\limits_{(i,j)\in H} {\scriptstyle \eact{\, \fineC}{\alpha_i}}. \Xrec_{i,j}+
        		\sum\limits_{\subalign{{\scriptscriptstyle \alpha \in  {\mathcal Q}  }}} {^{-}}\alpha.\Xrec \]			  with {\small ${\mathcal Q} ={\PSet} \setminus  ( \cup_{(i,j) \in H}\alpha_i\cup\{\tick,\fineC\})$}.
        		It remains to prove that $\fineC \not \in \cup_{(i,j) \in H}\alpha_i$. Since  $\fineC \not \in \cup_{i \in I}\alpha_i$  and  
        		$\fineC   \in \cup_{j \in J}\alpha_j$, then there is no $(i,j)\in H$ such that 
        		$\alpha_i=\fineC$. Thus,   $\fineC \not \in \cup_{(i,j) \in H}\alpha_i$,  as required.
        	\end{proof}
        }

	By an application of Lemma~\ref{lem:canonical}, we derive a second lemma which extends a   classical result on the complexity of the cross product of finite state automata to the cross product of (synthetised) edit automata. 

	\begin{lemma} 
		\label{lem:complexity_cross_product}
		Let $e_1, e_2\in \PropG$ and $\PSet$ be a set of observable actions. Let $v_1,v_2$ be the number of derivatives of $\synthES{e_1}{}^\PSet$ and $\synthES{e_2}{}^\PSet$, respectively.\footnote{These numbers are finite as we deal with finite-state edit automata.}    
		The complexity of the algorithm to compute $\TimesP{\synthES{e_1}{}^\PSet}{\synthES{e_2}{}^\PSet}{\Xrec}$ is $\mathcal{O}({|}\PSet{|} \cdot v_1 \cdot v_2)$. A similar result holds for edit automata derived from local properties $p_1, p_2\in \PropL$. 
	\end{lemma}

	The  third lemma provides an upper bound to the number of derivates of 
	the  automaton  $\synthES{e}{}^{\PSet}$.

	\begin{lemma}[Upper bound of number of derivatives]
		\label{lem:num-edit-derivatives}
		Let $e \in \PropG$  be a global property with  $m=\propDim{e}$, and $\PSet$ be a set of observable actions. Then, the number of derivatives of $\synthES{e}{}^{\PSet}$ is at most $m^{k+1}$, where $k$  is the number of occurrences of the symbol $\cap$ in $e$. 
	\end{lemma}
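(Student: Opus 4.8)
The plan is to prove, by structural induction on the property, a slightly strengthened statement that also covers \emph{local} properties. Concretely I will show simultaneously that: (i) for every local property $p$ with $j$ occurrences of the operator $\cap$ and every edit-automaton variable $\mathsf X$, the automaton $\synthES{p}{\mathsf X}^{\PSet}$ has at most $\propDim{p}^{\,j+1}$ derivatives other than $\mathsf X$; and (ii) for every global property $e$ with $k$ occurrences of $\cap$, $\synthES{e}{}^{\PSet}$ has at most $\propDim{e}^{\,k+1}$ derivatives. Clause (ii) is the lemma. The only arithmetic needed is that $\propDim{q}\ge 1$ for every local or global $q$ (immediate from the definition of $\propDim{\cdot}$), that $a^{m}\le a^{n}$ when $a\ge 1$ and $m\le n$, and the elementary binomial facts $a^{n}+b^{n}\le(a+b)^{n}$ and $1+a^{n}\le(1+a)^{n}$ for $a,b\ge 0$ and integer $n\ge 1$. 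The base case $p=\epsilon$ is immediate: $\synthES{\epsilon}{\mathsf X}^{\PSet}=\mathsf X$, so there are $0\le 1=\propDim{\epsilon}^{\,1}$ derivatives other than $\mathsf X$.

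For the inductive cases on local properties I read the set of derivatives directly off Table~\ref{tab:synthesis-logic}. If $p=\bigcup_{i\in I}\pi_i.p_i$, the automaton consists of the root state together with the derivatives of each $\synthES{p_i}{\mathsf X}^{\PSet}$ (the allow- and insert-transitions go to $\synthES{p_i}{\mathsf X}^{\PSet}$ and every suppress-transition loops back to the root), so the number of derivatives other than $\mathsf X$ is $1+\sum_{i\in I}N_i$, where $N_i$ bounds the derivatives of $\synthES{p_i}{\mathsf X}^{\PSet}$ other than $\mathsf X$; by induction $N_i\le\propDim{p_i}^{\,j_i+1}$ with $\sum_i j_i=j$, and then $1+\sum_i\propDim{p_i}^{\,j_i+1}\le 1+\sum_i\propDim{p_i}^{\,j+1}\le 1+(\sum_i\propDim{p_i})^{\,j+1}\le(1+\sum_i\propDim{p_i})^{\,j+1}\le(|I|+\sum_i\propDim{p_i})^{\,j+1}=\propDim{p}^{\,j+1}$. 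If $p=p_1;p_2$, the automaton is $\synthES{p_1}{\Zrec}^{\PSet}$ with $\Zrec$ identified with the root of $\synthES{p_2}{\mathsf X}^{\PSet}$, so its derivatives other than $\mathsf X$ number at most $\propDim{p_1}^{\,j_1+1}+\propDim{p_2}^{\,j_2+1}\le(\propDim{p_1}+\propDim{p_2})^{\,j_1+j_2+1}<\propDim{p_1;p_2}^{\,j+1}$, since $;$ adds no $\cap$. If $p=p_1\cap p_2$, the automaton is the cross product $\TimesP{\synthES{p_1}{\mathsf X}^{\PSet}}{\synthES{p_2}{\mathsf X}^{\PSet}}{\mathsf X}$; by the construction of Table~\ref{tab:edit-automata-product} every state of the product is indexed by a pair of derivatives of the two factors, so the number of derivatives is at most $(N_1+1)(N_2+1)$ with $N_i+1\le\propDim{p_i}^{\,j_i+1}+1\le(\propDim{p_i}+1)^{\,j_i+1}$, and since $\propDim{p_i}+1\le\propDim{p_1}+\propDim{p_2}+1$ this product is at most $(\propDim{p_1}+\propDim{p_2}+1)^{\,(j_1+1)+(j_2+1)}=\propDim{p_1\cap p_2}^{\,j+1}$.

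For global properties, $\synthES{p^{\ast}}{}^{\PSet}$ is the automaton $\mathsf X$ defined by $\mathsf X=\synthES{p}{\mathsf X}^{\PSet}$, i.e.\ it is $\synthES{p}{\mathsf X}^{\PSet}$ with the occurrences of $\mathsf X$ folded onto the root; hence it has no more derivatives than $\synthES{p}{\mathsf X}^{\PSet}$ has other than $\mathsf X$, namely at most $\propDim{p}^{\,k+1}=\propDim{p^{\ast}}^{\,k+1}$ by clause (i). For $e=e_1\cap e_2$, $\synthES{e}{}^{\PSet}=\TimesP{\synthES{e_1}{}^{\PSet}}{\synthES{e_2}{}^{\PSet}}{\Xrec}$, and again each state of the product is indexed by a pair of derivatives of the factors, so the number of derivatives is at most $\propDim{e_1}^{\,k_1+1}\cdot\propDim{e_2}^{\,k_2+1}$ by induction; since $\propDim{e_i}\le\propDim{e_1}+\propDim{e_2}+1$ and $(k_1+1)+(k_2+1)=k+1$ with $k=k_1+k_2+1$, this is at most $(\propDim{e_1}+\propDim{e_2}+1)^{\,k+1}=\propDim{e_1\cap e_2}^{\,k+1}$, which closes the induction and yields the lemma. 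I expect the only delicate point to be the cross-product cases: one has to check that the state space produced by Table~\ref{tab:edit-automata-product} really is (a quotient of a subset of) the set of pairs of derivatives of the two factors, so that the auxiliary ``suppress-only'' sink $\sum_{\alpha}{}^{-}\alpha.\Zrec$ is charged to such a pair rather than counted as an extra state, and the shared recursion variable is folded onto the root rather than double-counted. Lemma~\ref{lem:complexity_cross_product} already performs exactly this bookkeeping for the cross product, so once it is invoked the rest is the routine exponent arithmetic above.
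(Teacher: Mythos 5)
Your proof is correct and takes essentially the same route as the paper's: structural induction on the property, with the union case counted directly from the synthesis in Table~\ref{tab:synthesis-logic}, the $\cap$ cases bounded by the pairs-of-derivatives size of the cross product, and the same exponent arithmetic at the end. Your strengthened clause (i), counting derivatives other than $\mathsf X$, merely makes explicit the inner induction on local properties that the paper carries out inside the $p^{\ast}$ case, and your case analysis fills in the cases the paper dismisses as ``similar or simpler.''
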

	\begin{proof}		
		The proof is by structural induction on $e$. 
	Let $e\equiv  e_1 \cap e_2$ and $m = \propDim{ e_1 \cap e_2}$. By definition, the synthesis function recalls itself on $ e_1$ and $ e_2$. Obviously,   $m_1+ m_2=m- 1$ with $m_1 = \propDim{e_1} $ and $m_2 = \propDim{e_2}$. Let $k$, $k_1$ and $k_2$ be the number of occurrences of the symbol $\cap$ in 
	$e_1 \cap e_2$,  $e_1$ and $e_2$, respectively. We deduce that $k_1+k_2=k-1$.			
	By inductive hypothesis, $\synthESP{e_1}{ }{\PSet}$ has at most \smash{$m_1^{k_1+1}$}  derivatives, and,  
	$\synthESP{e_2}{}{\PSet}$  has at most 
	\smash{$m_2^{k_2+1}$} derivatives. {As the synthesis returns the cross product between $\synthESP{e_1}{ }{\PSet}$ and $\synthESP{e_2}{}{\PSet}$, we derive that the resulting edit automaton will have at most \smash{$ m_1^{k_1+1} \cdot m_2^{k_2+1}$} derivatives}. The result follows because \smash{$ m_1^{k_1+1} \cdot m_2^{k_2+1} \leq m^{k_1+1}\cdot m^{k_2+1} \leq m^{k_1 + k_2+2} \leq m^{k-1+2} \leq m^{k+1} $}. 
	
	Let $e\equiv  p^\ast$, for $p\in \PropL$. 
	In order to analyse this case, as  $m=\propDim{p^*} =\propDim{ p } +1$ and   $\transf{p^{\ast}}^{\PSet} \defn \mathsf X$, for $ \mathsf X =\synthES{p}{\mathsf X}^{\PSet}$, we proceed by structural induction of $p\in \PropL$. We focus on the most significant case  $p\equiv \bigcup_{i\in I}\pi_i.p_i$.	     We have that $m-1=\propDim{\bigcup_{i\in I}\pi_i.p_i} $. By definition the synthesis produces $|I|$ derivatives, one for each $\pi_i \in I$, and also the derivative $\Zrec$.
	Furthermore, the synthesis algorithm re-calls itself $|I|$ times on $p_i$, with $m_i=\propDim{p_i}$ such that  $m-1 = |I| + \sum_{i \in I} m_i$,  for $i\in I$. Let $k$ and $k_i$ be the number of occurrences of $\cap$ in $p$ and in $p_i$, respectively, for $i\in I$. We deduce that $\sum_{i \in I}k_i=k$. 
	By inductive hypothesis, the synthesis produces  \smash{$m_i^{k_i+1}$} derivatives on each property $p_i$, for $i\in I$. 
	Summarising, in this case the number of derivatives is \smash{$1+|I| + \sum_{i \in I}m_i^{k_i+1}$}. Finally, the thesis follows as \smash{$  1+|I| + \sum_{i \in I}m_i^{k_i+1} \leq \sum_{i \in I}m^{k_i+1} \leq m^{k+1} $}. 
	\end{proof}

	\begin{proof}[Proof of Proposition~\ref{prop:poly2} (Complexity)]
	  For any property $e\in\PropG$ and any set of observable actions $\PSet$, 
         we prove that the recursive structure of the  function returning $\funEditP{e}{\PSet}$ can be characterised in the following form: 
          \smash{$T(m) = T(m-1) +{|}\PSet{|}\cdot  m^{k}$}, with $m=\ctrlDim{e}$, 
          and $k$ the number of occurrences of $\cap$ in $e$. The result follows because \smash{$T(m) = T(m-1) +{|}\PSet{|}\cdot  m^{k} $} is \smash{$\mathcal{O}({|}\PSet{|}\cdot m^{k+1} )$}. 
	  The proof is by  case analysis on the structure of  $e$, by examining each synthesis step in which the synthesis process $m = \ctrlDim{e}$ symbols.

		Case $ e \equiv e_1\cap e_2 $. Let $m=\propDim{e_1\cap e_2}$. By definition, the synthesis \smash{$\synthESP{e_1\cap e_2}{}{\PSet}$} 
		call itself on $e_1 $ and $e_2$, with $m_1=\propDim{e_1}$ and $m_2=\propDim{e_2}$ symbols, respectively, where $m_1+m_2=m-1$.
		Let $k$ be the number of occurrences of $\cap$ in $e$ and $k_1,k_2$ be the number of occurrences of $\cap$ in $e_1$ and $e_2$, respectively. We deduce that  $k_1 + k_2 =k -1$. 
		By an application of Lemma~\ref{lem:complexity_cross_product}, 
		the complexity of the algorithm to compute $\TimesP{\synthES{e_1}{}^\PSet}{\synthES{e_2}{}^\PSet}{\Xrec}$ is $\mathcal{O}({|}\PSet{|} \cdot v_1 \cdot v_2)$, where    $v_1$ and $v_2$ are the number of derivatives of $\synthES{e_1}{}^\PSet$ and $\synthES{e_2}{}^\PSet$, respectively.		
		By an application of Lemma~\ref{lem:num-edit-derivatives}, we have that 
		 $v_1 \leq m_1^{k_1+1}$
		and  $v_2\leq m_2^{k_2+1}$.
		Thus, the number of operations required for the cross product between \smash{$\synthESP{e_1}{ }{\PSet}$} and \smash{$\synthESP{e_2}{ }{\PSet}$} is   
		$\mathcal{O}({|}\PSet{|}\cdot m_1^{k_1+1}\cdot m_2^{k_2+1})$. 		
		Thus, we can characterise the recursive structure as:
		$T(m) = T(m_1) + T(m_2) + {|}\PSet{|}\cdot  m_1^{k_1+1}\cdot m_2^{k_2+1}$. We notice that the complexity of this recursive form is smaller than the complexity of  $T(m-1) +{|}\PSet{|}\cdot m^{k} $.
	
		Case $	e \equiv p^\ast$. In order to prove this case, as   	$m=\propDim{p^*} =\propDim{ p } +1$ and  \smash{$\transf{p^{\ast}}^{\PSet} \defn \mathsf X$}, for \smash{$ \mathsf X =\synthES{p}{\mathsf X}^{\PSet}$}, we proceed by case analysis on $p\in \PropL$. Thus, we consider the local properties $p\in \PropL$. We focus on the most significant case $p\equiv \bigcup_{i\in I}\pi_i.p_i$.     We have that $m-1=\propDim{\bigcup_{i\in I}\pi_i.p_i} $. By definition, the synthesis \smash{$\transf{\bigcup_{i\in I}\pi_i.p_i}^{\PSet}$} consumes all events $\pi_i$, for $i\in I$. The synthesis algorithm re-calls itself ${|I|}$ times on $p_i$,  with $\propDim{p_i}$ symbols, for $i \in I$. Furthermore, let $l$ be the size of the set $\PSet$, the  algorithm performs at most $l$ operations  due to a summation over over $\alpha \in \PSet \setminus (\bigcup_{i \in I}\pi_i\cup\{\tick,\fineC\})$, with $|\! \PSet \setminus (\bigcup_{i \in I}\pi_i\cup\{\tick,\fineC\}) \!| < l$. Thus, we can characterise the recursive structure as $T(m) =  \sum_{i \in I} T(  \propDim{p_i})  + l $. Since  $\sum_{i \in I} \propDim{p_i}= m-1-{|I|}\le m-1$. The resulting complexity  is smaller than that  of $T(m-1)+{|}\PSet{|}\cdot m^{k}$.   
\end{proof}

{\color{black} 
\begin{proof}[Proof of Proposition~\ref{prop:sematic-determinisct-enforcement} (Deterministic preservation)]
We reason by contradiction. Suppose  there is a sum $\sum_{i \in I} \lambda_i.\Edit_i $ appearing in $\funEditP{e}{\PSet}$ such that
                $\trigger{\lambda_k} =  \trigger{\lambda_j}$ and 
        $\mathit{out}({\lambda_k}) =  \mathit{out}({\lambda_j})$, for some $k,j \in I$, $k\neq j$. 
We proceed by case analysis on the structure of the property $e$. Let us focus on the case $e=\bigcup_{i\in I}\pi_i.p_i $. The other cases are simpler. Then, $\funEditP{e}{\PSet}$
  is equal to $\Zrec$,   for 
 \[
 		\Zrec =
                        \begin{cases}
                        \sum\limits_{\substack{{\scriptscriptstyle i\in I} }}
			\pi_i.\synthES{p_i}{\mathsf X}^{\PSet}
			+ \sum\limits_{\substack{{\scriptscriptstyle i\in I} }}
			{\scriptstyle \eact{\,\fineC}{\pi_i}}.\synthES{p_i}{\mathsf X}^{\PSet} 
			+ \sum\limits_{\subalign{\scriptscriptstyle \alpha \in  {\mathcal Q}}}{^{-}}\alpha.\Zrec, \textrm{  if } \fineC \not \in \cup_{i \in I}\pi_i  \\[2pt]
			
                         \sum\limits_{ \substack{{\scriptscriptstyle i\in I} }}
			\pi_i.\synthES{p_i}{\mathsf X}^{\PSet}
		+ \sum\limits_{\subalign{{\scriptscriptstyle \alpha \in  {\mathcal Q}  }}} {^{-}}\alpha.\Zrec, \textrm{ otherwise}                      
\end{cases}
\]
and ${\mathcal Q} = {\PSet} \setminus  (\cup_{i \in I}\pi_i\cup\{\tick,\fineC\})$.
Since $e$ is deterministic (Definition~\ref{def:deterministic-prop}) it follows that  $\pi_h \neq \pi_l$, for any $h,l \in I$, $h\neq l$. As a consequence, it cannot be  $\lambda_k= \eact{\! \fineC}{\pi_h\!}$, for $h \in I$, and $\lambda_j= \eact{\! \fineC}{\pi_l\!}$, for $l \in I$, $h \neq l$, because    $\mathit{out}({\lambda_k})= \pi_h \neq \pi_l = \mathit{out}({\lambda_j})$. 
Thus, the only chance for $\Zrec$ to be nondeterministic is that  $\lambda_k = \pi_h$, for  $h \in I$, and $\lambda_j= \eact{\! \fineC}{\pi_l\!}$, for $l \in I$, in the case $\fineC \not \in \cup_{i \in I}\pi_i $. However, this is not admissible because
 $\fineC \not \in \cup_{i \in I}\pi_i $ implies $\trigger{\lambda_k} = \pi_h \neq  \fineC = \trigger{\lambda_j}$. 	
\end{proof}
	}

{

In order to prove Theorem ~\ref{prop:transparency-logic}, we need  prove that the cross product between edit automata satisfies a standard correctness result
saying that  any execution trace associated to the intersection of two regular properties is \nolinebreak also a trace of the 
	 the cross product of the edit  automata associated to the two properties, and vice \nolinebreak versa.

  \begin{lemma}[Correctness of Cross Product] 
  	\label{lem:sound_cross}
  	Let $e_1,e_2 \in \PropG$  (\emph{resp.}, $p_1,p_2 \in \PropL$)  and $\PSet$ be a set of actions such that $\events {e_1\cap  e_2} \subseteq \PSet$
  	(\emph{resp.},  $\events {p_1\cap  p_2} \subseteq \PSet$). 
  	Then, it holds that:
  	\begin{compactitem}
  	\item
  If	  $t$ is a trace of 	
   	{\small ${\TimesP{\synthESP{e_1}{ }{\PSet}}{\synthESP{e_2}{ }{\PSet}}{\Xrec}} $} (\emph{resp.}, {\small ${\TimesP{\synthESP{p_1}{\Xrec}{\PSet}}{\synthESP{p_2}{\Xrec}{\PSet}}{\Xrec}} $}), then  
   	{\small $\widehat{\sysAct{t}}$} is prefix of some trace  in the semantics 
  	$\regSemantics{e_1\cap e_2}$ (\emph{resp.},  $\regSemantics{p_1\cap p_2}$). 
  	\item  If $t$ is a trace  in 
  	$\regSemantics{e_1\cap e_2}$ (\emph{resp.},  $\regSemantics{p_1\cap p_2}$) then there exists a trace  $t'$ of 
   	{\small ${\TimesP{\synthESP{e_1}{ }{\PSet}}{\synthESP{e_2}{ }{\PSet}}{\Xrec}} $} (\emph{resp.}, {\small ${\TimesP{\synthESP{p_1}{\Xrec}{\PSet}}{\synthESP{p_2}{\Xrec}{\PSet}}{\Xrec}} $})  such that {\small $\widehat{\sysAct{t'}}=t$}.
  \end{compactitem}
  \end{lemma}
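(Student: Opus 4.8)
The plan is to prove both bullets by relating the derivatives of $\TimesP{\Edit_1}{\Edit_2}{\Xrec}$, where $\Edit_i$ is the automaton synthesised from $e_i$ (resp.\ $p_i$), to \emph{pairs} of derivatives of $\Edit_1$ and $\Edit_2$, and then invoking the definitional identity $\regSemantics{e_1\cap e_2}=\regSemantics{e_1}\cap\regSemantics{e_2}$ of Table~\ref{tab:semantics-logic} (and likewise for $p_1\cap p_2$). As a preliminary I would establish a \emph{single-automaton correspondence} for one property: the strings $\widehat{\sysAct{t}}$ obtained from traces $t$ of $\synthES{e}{}^{\PSet}$ are exactly the prefixes of strings in $\regSemantics{e}$, every string of $\regSemantics{e}$ is realised exactly by one such trace, and the derivative reached along $t$ is, up to unfolding of recursion, the automaton synthesised from a syntactic \emph{residual} of $e$ whose semantics captures the ``remaining behaviour'' (and analogously for $p\in\PropL$, relative to the continuation $\Xrec$). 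This is the automaton-level counterpart of Theorems~\ref{prop:transparency-logic} and~\ref{thm:safety-logic}, proved by structural induction following the clauses of Table~\ref{tab:synthesis-logic}; in the $\cap$-case it appeals to the present lemma on strictly smaller properties, so I would in fact run a single induction on property size, proving the single-automaton facts and Lemma~\ref{lem:sound_cross} simultaneously, which also handles intersections nested inside $e_1$ and $e_2$.

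The core step is an invariant on the cross product, stated in terms of \emph{visible} behaviour: whenever $\TimesP{\Edit_1}{\Edit_2}{\Zrec}\trans{t}E$, the string $u=\widehat{\sysAct{t}}$ is producible by $\Edit_1$ (there is $t_1$ with $\Edit_1\trans{t_1}\Edit_1'$ and $\widehat{\sysAct{t_1}}=u$) and by $\Edit_2$ (reaching some $\Edit_2'$), and $E$ unfolds to $\TimesP{\Edit_1'}{\Edit_2'}{\Zrec}$ with the pair $(\Edit_1',\Edit_2')$ \emph{non-degenerate}, i.e.\ not an $H=\emptyset$ state of Table~\ref{tab:edit-automata-product}, equivalently the residuals of $\Edit_1'$ and $\Edit_2'$ share an element of their semantics. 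I would prove this by induction on $|t|$, inspecting the last cross-product transition against the four clauses of Table~\ref{tab:edit-automata-product}: a suppress step ${}^{-}\alpha$ leaves $u$ and the matched component pair unchanged; an allow step $\lambda_i$ or an insert-before-$\fineC$ step $\eact{\fineC}{\lambda_i}$ appends $\lambda_i$ to $u$ and, by definition of $H$, moves to $\TimesP{\Edit_i}{\Edit_j}{\Xrec_{i,j}}$, which is matched by $\Edit_1$ allowing $\lambda_i$ (to $\Edit_i$) and $\Edit_2$ allowing $\lambda_i$ (to $\Edit_j$), since $(i,j)\in H$ forces $\lambda_i=\nu_j\in\PSet$; non-degeneracy is preserved because $H$ excludes precisely the pairs whose product is pure suppression, which the single-automaton correspondence identifies with the pairs of empty common residual semantics. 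The unfoldings in the first three clauses of Table~\ref{tab:edit-automata-product} absorb the recursive-automaton bookkeeping, and everything is well-defined because there are finitely many derivatives (Lemma~\ref{lem:num-edit-derivatives}). Recall also that synthesised automata never suppress $\tick$ or $\fineC$ (Remark~\ref{rem:syn-mit}), so those actions occur only as common allow moves and the two sides stay synchronised on them.

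From the invariant the first bullet is immediate: for a trace $t$ of $\TimesP{\synthES{e_1}{}^{\PSet}}{\synthES{e_2}{}^{\PSet}}{\Xrec}$, $u=\widehat{\sysAct{t}}$ is producible by each $\synthES{e_i}{}^{\PSet}$, hence a prefix of some string in $\regSemantics{e_i}$; and the reached pair being non-degenerate, single-automaton completeness on the residuals yields a common completion $w$ with $u\cdot w\in\regSemantics{e_1}\cap\regSemantics{e_2}=\regSemantics{e_1\cap e_2}$, so $u$ is a prefix of a trace in $\regSemantics{e_1\cap e_2}$. For the second bullet, given $t\in\regSemantics{e_1\cap e_2}=\regSemantics{e_1}\cap\regSemantics{e_2}$, single-automaton completeness provides traces of $\synthES{e_1}{}^{\PSet}$ and $\synthES{e_2}{}^{\PSet}$ whose visible images are both $t$; I would then build one cross-product trace $t'$ with $\widehat{\sysAct{t'}}=t$ by induction along $t$, maintaining that the remaining suffix of $t$ lies in the common semantics of the two current residuals, so the reached cross-product state is non-degenerate and, by the clauses of Table~\ref{tab:edit-automata-product}, offers a joint allow (or insert-before-$\fineC$) move producing the next visible symbol of $t$. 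The local versions for $p_1,p_2\in\PropL$ are identical, with $\synthESP{p_i}{\Xrec}{\PSet}$ in place of $\synthES{e_i}{}^{\PSet}$ and every statement read relative to the continuation $\Xrec$.

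The main obstacle I expect is the construction of $t'$ in the second bullet: the two component traces may place their $\tau$-suppressions at different positions, so one must argue that a single legal cross-product trace still reproduces $t$ symbol by symbol, and --- the real subtlety --- that the $H$-pruning of Table~\ref{tab:edit-automata-product} never discards a move corresponding to a genuine common continuation. This is exactly where the non-degeneracy invariant, tied to non-emptiness of the residual intersection, does the work, so making the residual correspondence precise (``cross-product derivative $\leftrightarrow$ pair of residuals with a common trace'') is the linchpin: it is what upgrades ``prefix of $\regSemantics{e_1}$ and prefix of $\regSemantics{e_2}$'' to ``prefix of $\regSemantics{e_1\cap e_2}$'' in the soundness direction, and what keeps the merge from getting stuck in the completeness direction. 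Well-formedness of all properties (Definition~\ref{def:well-formedness}), which guarantees every residual can be completed by a string ending in $\fineC$, is what makes such (common) completions always available. Here we of course read the statement under the standing assumption that $\regSemantics{e_1\cap e_2}$ is non-empty; when it is empty the cross product reduces to pure suppression and emits only $\tau$'s, which is the expected degenerate behaviour.
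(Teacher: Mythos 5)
The paper offers no proof for you to match here: Lemma~\ref{lem:sound_cross} is introduced as ``a standard correctness result'' and stated without proof, and is then used as a black box in the proofs of Theorem~\ref{prop:transparency-logic}, Lemma~\ref{lem:sound_edit} and Lemma~\ref{lem:execution_decomposition}. Your proposal therefore fills a gap rather than replicating an argument, and it does so in a way that is consistent with the surrounding machinery: your correspondence ``derivative of the cross product $\leftrightarrow$ pair of synthesised residuals'' is exactly the bookkeeping the paper performs by hand (via $\preccurlyeq$) in the $\cap$-cases of its transparency and soundness proofs, and your decision to run one simultaneous induction on property size is the right way to break the mutual dependence, since the paper's Lemma~\ref{lem:sound_edit} itself cites Lemma~\ref{lem:sound_cross} in its $\cap$-case.

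The one point you assert rather than establish is the equivalence between ``the pair $(\Edit_i,\Edit_j)$ is excluded from $H$'' and ``the corresponding residuals have empty common semantics''; this is genuinely the crux, because the side condition in $H$ only tests whether the product of the continuations unfolds to the pure-suppression automaton, and on recursive (cyclic) automata that notion is itself defined by the equations of Table~\ref{tab:edit-automata-product}, so you must first fix the intended (greatest-fixpoint) reading of that recursion before the equivalence can be proved. Once that is fixed, the direction your bullet-2 construction needs (a common remaining suffix keeps the chosen move inside $H$) goes by induction on the length of that suffix, while the direction needed for the common completion in bullet 1 (a non-degenerate pair admits a common trace ending in $\fineC$) additionally uses finite-stateness and well-formedness, essentially as you indicate. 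Two smaller remarks: your non-degeneracy invariant should be stated disjunctively (the reached state is either the pure-suppression automaton, reached only by suppress steps, or an unfolding of a non-degenerate pair), since the initial product can already be degenerate; and your empty-intersection caveat only bites for the local version of the statement, because for global properties $\regSemantics{e_1\cap e_2}$ always contains the empty trace. With the $H$-versus-common-residual equivalence spelled out, your sketch is a correct proof of the lemma.
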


\begin{proof}[Proof of Theorem~\ref{prop:transparency-logic} (Transparency)]  
{\color{black} 
We  prove  a stronger result.  
	Let $e\in \PropG$ be a global deterministic property and $P \in \mathbbm{Ctrl}$  be a controller such that 	\smash{${P}\trans{t} J$}, for some trace $t=\alpha_1 \cdots \alpha_n $.  If $t$ is the prefix of some trace in the semantics $\regSemantics{e}$ then the following sub-results hold: 
	\begin{compactenum}
		\item There exists a unique $\Edit$ such that  
		{\small $\synthESP{e}{ }{\PSet} \trans{t}\Edit $} 
		where either $\Edit = \synthESP{p'}{\Xrec}{\PSet}$ or $\Edit = \Zrec$, with $\Zrec = \synthESP{p'}{\Xrec}{\PSet}$, for some $p'\in \PropL$  and some automaton variable $\Xrec$. 
		\item There is a trace $t'\in \regSemantics{p'}$ such that $t\cdot t'$ is a prefix of some trace in $\regSemantics{e}$. 
		\item  There is no trace $t''=\alpha_1 \cdots \alpha_k\cdot \lambda $  for  $\funEditP{e}{\PSet}\!$ such that  $0 \leq k  <n$ and   $\lambda \in \{  \suppressE{\alpha_{k+1}}, \insertE{\alpha}{\alpha_{k+1}}\}$, for  some $\alpha$.
	\end{compactenum}
        These three sub-results imply the required result. 
We proceed by induction on the length $n$ of  \nolinebreak trace \nolinebreak $t$.
		
\noindent - \emph{Base case: }$n=1$.  That is $t=\alpha$, 	with	\smash{$\alpha\in\SensSet\cup\ChanSet^\ast\cup \overline\ActSet\cup \{\tick,\fineC\}$}.
                 We proceed by induction on the structure of $e$.

	\emph{Case} $e \equiv p^\ast$, for some \smash{$p \in \PropL$}.
        We prove the following three results:
        \begin{itemize}
          \item
		i)	there exists a unique $\Edit$ such that \smash{ $\synthESP{p}{\Xrec'}{\PSet} \trans{\alpha} \Edit$} and either \smash{$\Edit = \synthESP{p'}{\Xrec'}{\PSet}$} or $\Edit = \Zrec$, with \smash{$\Zrec = \synthESP{p'}{\Xrec'}{\PSet}$}, for some $p'\in \PropL$ and some automaton variable $\Xrec'$;
 \item 	ii) there is a trace $t'  \in \regSemantics{p'}$ such that $\alpha\cdot t' $ is a prefix of some trace in $\regSemantics{p}$;
 \item 	iii) there is no $\lambda \in \{  \suppressE{\alpha }, \insertE{\alpha'}{\alpha }\}$ such that
   \smash{$\synthESP{p}{\Xrec'}{\PSet} \trans{\lambda} \Edit'$}, for some $\Edit'$.
   \end{itemize}
	As	\smash{$\funEditP{p^\ast}{\PSet} \defn \mathsf X$}, with  \smash{${\mathsf X} = \synthESP{p}{\mathsf X}{\PSet}$}, results i) and ii) and iii) imply the required facts (1) and (2) and (3) for $e=p^\ast$. 
        We proceed as follows: first, we prove items i) and ii)  by induction on the structure of $p$,  and then we prove item iii) by contradiction.

        We prove  items i) and ii).
We focus on  the most significant cases:  $p= \bigcup_{i\in I}\pi_i.p_i$   and $ p \equiv p_1 \cap p_2$.  The other cases are similar or simpler.

\noindent
Let $p \equiv  \bigcup_{i\in I}\pi_i.p_i$. 
In this case, $\alpha$ is a prefix of some trace in  $\regSemantics{p}$ and $\synthESP{p}{\mathsf X}{\PSet}$ returns 
$\Zrec'  $, for    
 \[
 		\Zrec' =
                        \begin{cases}
                        \sum\limits_{\substack{{\scriptscriptstyle i\in I} }}
			\pi_i.\synthES{p_i}{\mathsf X}^{\PSet}
			+ \sum\limits_{\substack{{\scriptscriptstyle i\in I} }}
			\eact{\! \fineC}{\pi_i\!}.\synthES{p_i}{\mathsf X}^{\PSet} 
			+ \sum\limits_{\subalign{{\scriptscriptstyle \alpha'' \in {\mathcal Q} }}} {^{-}}\alpha''.\Zrec', \text{  if } \fineC \not \in \cup_{i \in I}\pi_i  \\[2pt]
			
                         \sum\limits_{ \substack{{\scriptscriptstyle i\in I} }}
			\pi_i.\synthES{p_i}{\mathsf X}^{\PSet}
		+ \sum\limits_{\subalign{{\scriptscriptstyle \alpha'' \in  {\mathcal Q}  }}} {^{-}}\alpha''.\Zrec', \text{  otherwise.}                      
\end{cases}
\]	
where ${\mathcal Q} = {\PSet} \setminus  (\cup_{i \in I}\pi_i\cup\{\tick,\fineC\})$.
Since $\alpha$ is a prefix of some trace in  $\regSemantics{p}$ and $\pi_i\neq \epsilon$, 
for any $i\in I$, and $e$ is deterministic, then we derive that 
$\alpha=\pi_k$, for a unique index $k \in I$. 
                \begin{compactitem}
                 \item 
                Let us prove i).
Since $k$   is the unique index such that $\alpha=\pi_k$, 
we derive that
  {$\synthESP{p}{\Xrec}{\PSet} \trans{\alpha} \Edit $} is the
   unique transition labeled $\alpha$ such that  either \smash{$\Edit = \synthESP{p_k}{\Zrec'}{\PSet}$} or $\Edit = \Zrec_1$, with \smash{$\Zrec_1 = \synthESP{p_k}{\Zrec'}{\PSet}$}.  
		                \item 
		Let us prove ii). Since  \smash{$P\trans{\alpha}J$} and $\alpha=\pi_k$, by inductive hypothesis there exists 
		$t'\in \regSemantics{p_k}$ such that $\alpha\cdot t'$ is a prefix of some trace   in $\regSemantics{\pi_k.p_k}$, and hence also in $\regSemantics{p}$, as required. 
  \end{compactitem}

		\noindent
		Let $p \equiv p_1\cap p_2$.
                In this case, we have that $\alpha$ is prefix of some trace in $\regSemantics{p}$ and  the synthesis  \smash{$\synthESP{p}{\mathsf X}{\PSet}$} returns the  edit automaton 
                \smash{$\Edit_p=\TimesP{\synthESP{p_1}{\Xrec}{\PSet}}{\synthESP{p_2}{\Xrec}{\PSet}}{\Xrec}$}.
\begin{compactitem}
  \item 
Let us prove i). By definition of cross product in Table~\ref{tab:edit-automata-product}, the most interesting case is when \smash{$\synthESP{p_1}{\Xrec}{\PSet} ={\sum_{i \in I}\lambda_i.\Edit_i} $} and \smash{$\synthESP{p_2}{\Xrec}{\PSet} = {\sum_{j \in J}\nu_j.\Fdit_j}$}. In this case,   
\[ \Edit_p=  \TimesP{\synthESP{p_1}{\Xrec}{\PSet}}{\synthESP{p_2}{\Xrec}{\PSet}}{\Xrec} = 
  \sum\limits_{(i,j)\in H} (  {\lambda_i}. \Xrec_{i,j}  \oplus  {\nu_j}. \Xrec_{i,j})   
		                                + \sum\limits_{  \alpha \in      {\mathcal Q} } {^{-}}\alpha.\Zrec, \]
for
$\Xrec_{i,j}=\TimesP{\Edit_i}{\Fdit_j}{ \Xrec_{i,j}} $
and
${\mathcal Q}= (\PSet \setminus \{\tick,\fineC\}) \setminus  \bigcup_{\tiny (i,j) \in H}\{\lambda_i,\nu_j\}$ 
 and 
$ H =\{ (i,j) \in I {\times} J: \sysAct{\lambda_i} = \sysAct{\nu_j}
                                                    \neq \tau { \text{ and } \TimesP{\Edit_i}{\Fdit_j}{  \Xrec_{i,j} }}\neq \sum\limits_{  \alpha \in  \PSet \setminus  \{\tick,\fineC\} }\!\!{^{-}}\alpha. \Xrec_{i,j}  \}$.	
Now, since $\alpha$  is a prefix of some trace in $\regSemantics{p }$,
then $\alpha$  is a prefix of some trace in both $\regSemantics{p_1}$
and  $\regSemantics{ p_2}$.
Thus, since \smash{$P \trans{\alpha} J $}, by inductive hypothesis there exists a unique $\Edit $ such that \smash{$ \synthESP{p_1}{\Xrec}{\PSet} \trans{\alpha}  \Edit $},  and either $\Edit = \synthESP{p_{1}'}{\Xrec}{\PSet}$ or $\Edit = \Zrec_{1}$, with $\Zrec_{1} = \synthESP{p_{1}'}{\Xrec}{\PSet}$, for some $p_1'\in \PropL$. 
Similarly, there exists  unique $\Fdit$ such that \smash{$ \synthESP{p_2}{\Xrec}{\PSet} \trans{\alpha}  \Fdit $},  and either $\Fdit = \synthESP{p_{2}'}{\Xrec}{\PSet}$ or $\Fdit = \Zrec_{2}$, with $\Zrec_{2} = \synthESP{p_{2}'}{\Xrec}{\PSet}$, for some $p_2'\in \PropL$. 
 Since  \smash{$\synthESP{p_1}{\Xrec}{\PSet} ={\sum_{i \in I}\lambda_i.\Edit_i} $} and \smash{$\synthESP{p_2}{\Xrec}{\PSet} = {\sum_{j \in J}\nu_j.\Fdit_j}$}, then we have that there exist $i\in I$ and $j\in J$ such that 
 $\Edit=\Edit_j$ and  $\Fdit=\Fdit_j$
By Lemma \ref{lem:sound_cross} and by definition of cross product, we have that
$(i,j)\in H$,  
$\alpha=\lambda_i$  and 
\smash{$ \Edit_p  \trans{\alpha} \Xrec_{i,j}$}, with $\Xrec_{i,j}=\TimesP{\Edit_i}{\Fdit_j}{ \Xrec_{i,j}}=\TimesP{\Edit}{\Fdit}{ \Xrec_{i,j}}$.
Thus, since $\Edit$ and $\Fdit$ are unique,
it follows that 
 \smash{$ \Edit_p \trans{\alpha} \Xrec_{i,j} $} is the only possible transition  for $\Edit_p$ with label $\alpha  $. 
Finally,	we have that $\TimesP{\Edit}{\Fdit}{ \Xrec_{i,j}}= \Times{\synthESP{p_{1}'}{\Xrec}{\PSet}}{\synthESP{p_{2}'}{\Xrec}{\PSet}}{\Xrec}=\synthESP{p'_1\cap p_2'}{\Xrec}{\PSet}$, as required.

\item
  Let us prove  ii).
{ As  \smash{$\Edit_p \trans{\alpha} \TimesP{\Edit}{\Fdit}{ \Xrec_{i,j}}= \TimesP{\synthESP{p_1'}{\Xrec}{\PSet}}{\synthESP{p_2'}{\Xrec}{\PSet}}{\Xrec} = \synthESP{p_1'\cap p_2'}{\Xrec}{\PSet}$}, by Lemma~\ref{lem:sound_cross} we derive that 
  $\regSemantics{p_1' \cap p_2'}\neq \emptyset$. Thus,    
there exists  $t'\in \regSemantics{p_1' \cap p_2'}$.
Again, by Lemma~\ref{lem:sound_cross} it follows that 
\smash{$\Edit_p  \trans{\alpha}\TimesP{\synthESP{p_1'}{\Xrec}{\PSet}}{\synthESP{p_2'}{\Xrec}{\PSet}}{\Xrec} \trans{t '}\Edit' $}, for some $\Edit' $, with
 $\alpha\cdot t'$ prefix of some trace in $\regSemantics{p_1 \cap p_2}$, as required. }
 \end{compactitem}
We have proved items i) and ii), for  any $p\in \PropL$. It remains to prove item iii) namely, if \smash{ $\synthESP{p}{\Xrec'}{\PSet} \trans{\alpha} \Edit$} then   there is no  $\lambda \in \{  \suppressE{\alpha }, \insertE{\alpha'}{\alpha }\}$ such  
$\synthESP{p}{\Xrec'}{\PSet}  \trans{\lambda}\Fdit$, for some    $\Fdit$.
By Lemma~\ref{lem:canonical} we have that 
 $\synthESP{p}{\Xrec'}{\PSet}=\Edit' $ or  $\synthESP{p}{\Xrec'}{\PSet} = \Zrec$, with $\Zrec = \Edit'$ for
 \[
\Edit'=
  \begin{cases}
                        \sum\limits_{\substack{{\scriptscriptstyle i\in I} }}
			\alpha_i.\Edit_i 
			+ \sum\limits_{\substack{{\scriptscriptstyle i\in I} }}
			\eact{\! \fineC}{\alpha_i\!}.\Edit_i 
 
			+ \sum\limits_{\subalign{{\scriptscriptstyle \alpha''  \in  {\mathcal Q} }}} {^{-}}\alpha''.\Edit'' , \textrm{  if }  \fineC   \not \in \cup_{i \in I}\alpha_i  \\[2pt]
			
                         \sum\limits_{ \substack{{\scriptscriptstyle i\in I} }}
			\alpha_i.\Edit_i 
		+ \sum\limits_{\subalign{{\scriptscriptstyle \alpha''  \in  {\mathcal Q}   }}} {^{-}}\alpha''.\Edit''  , \textrm{ otherwise.}                      
\end{cases}
\]	 for $\alpha_i \in \PSet$, ${\mathcal Q}= {\PSet} \setminus  (\cup_{i \in I}\alpha_i\cup\{\tick,\fineC\})$, and $\Edit_i$ and $\Edit'' $ edit automata. Since \smash{ $\synthESP{p}{\Xrec'}{\PSet} \trans{\alpha} \Edit$} it follows that
 $\alpha=\alpha_k$, for some  $k\in I$.  Let us assume by contradiction that
			{$\synthESP{p}{\Xrec'}{\PSet} \trans{\lambda} \Fdit $}, for some 	 $\lambda \in \{  \suppressE{\alpha }, \insertE{\alpha'}{\alpha }\}$ and automata $\Fdit$. Since $\alpha=\alpha_k$, with $k \in I$, we  derive that $\alpha \not\in {\mathcal Q} =  {\PSet} \setminus  (\cup_{i \in I}\alpha_i\cup\{\tick,\fineC\})  $, that is $\lambda$ is an insertion, 			
			$\lambda=\insertE{\alpha'}{\alpha }$, for some $\alpha'$.
                        As in $\Edit'$ the only insertions are of the form $	\eact{\! \fineC}{\alpha_i\!}$, it follows that
			$\alpha=\fineC$  and $\fineC \not \in \cup_{i \in I}\alpha_i$.
However, since $\fineC \not \in \cup_{i \in I}\alpha_i$ it follows that  $\alpha=\alpha_k \neq \fineC$. Contradiction.

\emph{Case} $e \equiv e_1 \cap e_2$, for some $e_1,e_2\in \PropG$. This case can be proved  with a reasoning similar to that  of the case $p_1 \cap p_2$.

\noindent - \emph{Inductive case:} $n> 1$, for $n\in \mathbb{N}$.
 Suppose 
\smash{$P \trans{t} J $} such that $t$ is a prefix of some trace in $\regSemantics{e}$. Since $n>1$,
\smash{$ P \trans{t'} J' \trans{\alpha} J $},  for some trace $t'$ such that $t=t' \cdot\alpha$.
As $t$ is a prefix of some trace in $\regSemantics{e}$ then 
 $t'$ is a prefix of some trace in $\regSemantics{e}$ as well. Thus, 
 by inductive hypothesis we have that: 
	\begin{compactenum}
	\item There exists a unique $\Edit'$ such that {\small $ \synthESP{e}{ }{\PSet} \trans{t'} \Edit' $}, and either \smash{$\Edit' = \synthESP{p'}{\Xrec}{\PSet}$} or $\Edit' = \Zrec$, with $\Zrec = \synthESP{p'}{\Xrec}{\PSet}$, for some $p'\in \PropL$   and some automaton variable $\Xrec$. 
	\item There is a trace $t''\in \regSemantics{p'}$ such that $t'\cdot t''$ is a prefix of some trace in $\regSemantics{e}$. 
			\item  There is no trace $t'''=\alpha_1 \cdots \alpha_k\cdot \lambda $  for  $\funEditP{e}{\PSet}\!$ such that  $0 \leq k  <n-1$ and   $\lambda \in \{  \suppressE{\alpha_{k+1}}, \insertE{\alpha'}{\alpha_{k+1}}\}$, for  some $\alpha'$.
\end{compactenum} 
Since from (1)
$\Edit'$ is unique and either \smash{$\Edit' = \synthESP{p'}{\Xrec}{\PSet}$} or $\Edit' = \Zrec$, with $\Zrec = \synthESP{p'}{\Xrec}{\PSet}$, 
 we have to prove:
	i)	there exists a unique $\Edit''$ such that \smash{ $\synthESP{p'}{\Xrec}{\PSet} \trans{\alpha} \Edit''$}, and either \smash{$\Edit'' = \synthESP{p''}{\Xrec'}{\PSet}$} or $\Edit'' = \Zrec$, with \smash{$\Zrec = \synthESP{p''}{\Xrec'}{\PSet}$}, for some $p''\in \PropL$ and some automaton variable 
	$\Xrec'$;
	ii) there is a trace $t'  \in \regSemantics{p''}$ such that $\alpha\cdot t' $ is a prefix of some trace in $\regSemantics{p'}$;
	iii)  there is no  $\lambda \in \{  \suppressE{\alpha }, \insertE{\alpha'}{\alpha }\}$, such $\synthESP{p'}{\Xrec}{\PSet} \trans{\lambda} \Fdit$, for some $\Fdit$.   These three facts can be proved as previously done for the base case,  $n=1$.
}	
\end{proof}

In order to prove  Theorem~\ref{thm:safety-logic} we  need a couple of  technical lemmata.
	\begin{lemma}[Soundness of the synthesis] 
		\label{lem:sound_edit}
		Let $e \in \PropG$ be a global property and $\PSet$ be a set of observable actions such that $\events e \subseteq \PSet$. Let \smash{$\funEditP{e}{\PSet}\trans{\lambda_1}\ldots \trans{\lambda_n}\Edit$} be an arbitrary execution trace of the synthesised automaton $\funEditP{e}{\PSet}$. Then, 
		\begin{compactenum}
			\item for $t=\sysAct{\lambda_1}\cdot\ldots\cdot\sysAct{\lambda_n}$ the trace $\hat{t}$ is a prefix of some trace in $\regSemantics{e}$; 
			\item either $\Edit = \synthESP{p'}{\Xrec}{\PSet}$ or $\Edit = \Zrec$, with $\Zrec = \synthESP{p'}{\Xrec}{\PSet}$, for some $p'\in \PropL$   and some automaton variable {$\Xrec$}.		
		\end{compactenum}
	\end{lemma}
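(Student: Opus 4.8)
The plan is to prove, by induction on the length $n$ of the run $\funEditP{e}{\PSet}\trans{\lambda_1}\cdots\trans{\lambda_n}\Edit$, a statement slightly stronger than the two displayed items, in the same spirit as the strengthening used for Theorem~\ref{prop:transparency-logic}. Writing $t=\sysAct{\lambda_1}\cdot\ldots\cdot\sysAct{\lambda_n}$, the invariant I would carry along is: (i) $\Edit=\synthESP{p'}{\Xrec}{\PSet}$, or $\Edit=\Zrec$ with $\Zrec=\synthESP{p'}{\Xrec}{\PSet}$, for some $p'\in\PropL$ with $p'\preccurlyeq e$ and some automaton variable $\Xrec$; and (ii) \emph{for every} $s\in\regSemantics{p'}$, the trace $\widehat{t}\cdot s$ is a prefix of some trace in $\regSemantics{e}$. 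Instantiating (ii) with any $s\in\regSemantics{p'}$ yields that $\widehat{t}$ itself is a prefix of some trace in $\regSemantics{e}$, which is the first conclusion; (i) is the second. The universal quantifier in (ii) is the point on which everything hinges: since here the edit automaton is considered \emph{in isolation}, it may fire any summand of a synthesised choice $\sum_{i\in I}\pi_i.(\cdots)$ (or any insertion summand $\eact{\fineC}{\pi_i}.(\cdots)$), unlike the \emph{synchronised} setting of Theorem~\ref{prop:transparency-logic} where the monitored controller resolves the choice; the universal form of (ii) is exactly what makes the induction absorb this branching. Concretely, keeping (i)--(ii) true amounts to checking that the residual $p'$ carried at a reached state behaves as the left quotient (the Brzozowski derivative) of the relevant factor of $e$ by $\widehat{t}$.

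\emph{Base case} $n=0$: $\widehat{t}=\epsilon$ and $\Edit=\funEditP{e}{\PSet}$. If $e=p^{\ast}$, Table~\ref{tab:synthesis-logic} gives $\funEditP{e}{\PSet}=\mathsf X$ with $\mathsf X=\synthESP{p}{\mathsf X}{\PSet}$, so (i) holds with $p'=p\preccurlyeq p^{\ast}$, and (ii) holds since $\regSemantics{p}\subseteq\regSemantics{p^{\ast}}$. If $e$ is a conjunction, flatten it (by associativity of the cross product) to $p_1^{\ast}\cap\ldots\cap p_k^{\ast}$; using the first three clauses of Table~\ref{tab:edit-automata-product} (which merely unfold recursion variables) together with the clause $\synthESP{q_1\cap q_2}{\Xrec}{\PSet}=\TimesP{\synthESP{q_1}{\Xrec}{\PSet}}{\synthESP{q_2}{\Xrec}{\PSet}}{\Xrec}$ of Table~\ref{tab:synthesis-logic}, identify $\funEditP{e}{\PSet}$ with $\synthESP{p_1\cap\ldots\cap p_k}{\Xrec}{\PSet}$ for the top-level loop variable $\Xrec$; then $p_1\cap\ldots\cap p_k\preccurlyeq e$, and by Lemma~\ref{lem:sound_cross} $\regSemantics{p_1\cap\ldots\cap p_k}\subseteq\regSemantics{e}$, so both items follow. (If this product collapses to a pure-suppression automaton, then $\regSemantics{e}=\emptyset$, the automaton emits only $\tau$-actions, $\widehat{t}=\epsilon$ forever, and one reads the statement for that vacuous $e$ accordingly, or simply restricts to non-vacuous $e$.)

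\emph{Inductive step} $n>0$: write the run as $\funEditP{e}{\PSet}\trans{\lambda_1}\cdots\trans{\lambda_{n-1}}\Edit'\trans{\lambda_n}\Edit$ and apply the inductive hypothesis to the length-$(n{-}1)$ prefix, obtaining $p'\preccurlyeq e$, $\Edit'$ of the form in (i), and (ii) for $\widehat{t'}$ (with $t'=\sysAct{\lambda_1}\cdot\ldots\cdot\sysAct{\lambda_{n-1}}$). Then analyse the single transition $\Edit'\trans{\lambda_n}\Edit$ by structural induction on $p'$. If $p'=\epsilon$, then $\synthESP{\epsilon}{\Xrec}{\PSet}=\Xrec$ and, by inspection of Table~\ref{tab:synthesis-logic}, every continuation variable is defined as $\synthESP{q}{\Yrec}{\PSet}$ for some $q\preccurlyeq e$ (the top-level $\mathsf X$ of a $p^{\ast}$, with $q=p$; a fresh $\Zrec$ from some $p_1;p_2$, with $q=p_2$; for conjunctions the reduced $\synthESP{p_1\cap\ldots\cap p_k}{\cdot}{\PSet}$); so the step $\Xrec\trans{\lambda_n}\Edit$ is a first step of $\synthESP{q}{\Yrec}{\PSet}$ and we recurse on $q$, using that reaching a bare recursion variable means $\widehat{t'}$ has completed all copies so far and can re-enter a fresh copy of $q$. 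If $p'=p_1\cap p_2$, then $\Edit'=\TimesP{\synthESP{p_1}{\Xrec}{\PSet}}{\synthESP{p_2}{\Xrec}{\PSet}}{\Xrec}$, and by the fourth clause of Table~\ref{tab:edit-automata-product} the step either suppresses (emitting $\tau$, returning to the same product, invariant preserved verbatim) or allows/inserts a common action, landing in a product of the two one-step derivatives, which equals $\synthESP{p_1'\cap p_2'}{\Xrec}{\PSet}$ with $p_h'\preccurlyeq p_h$; (ii) is then recovered from the analogous property of each factor via Lemma~\ref{lem:sound_cross}. If $p'=p_1;p_2$, then $\synthESP{p_1;p_2}{\Xrec}{\PSet}=\synthESP{p_1}{\Zrec}{\PSet}$ with $\Zrec=\synthESP{p_2}{\Xrec}{\PSet}$, so either $p_1=\epsilon$ and we recurse on $p_2$, or the step happens inside $p_1$ and we recurse on $p_1$ with continuation $\Zrec$, propagating (ii) through $\regSemantics{p_1;p_2}=\{t_1\cdot t_2\mid t_1\in\regSemantics{p_1},t_2\in\regSemantics{p_2}\}$. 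Finally, if $p'=\bigcup_{i\in I}\pi_i.p_i$, then $\Edit'=\Zrec$ with $\Zrec$ as in the last line of Table~\ref{tab:synthesis-logic}: a suppression summand emits $\tau$ and returns to $\Zrec$ (invariant unchanged), while an \rulename{Allow} step on a $\pi_k$, or an \rulename{Edit} step on $\eact{\fineC}{\pi_k}$, both emit the action $\pi_k$ and move to $\synthESP{p_k}{\Xrec}{\PSet}$ with $p_k\preccurlyeq p'$; for (ii) here, given any $s\in\regSemantics{p_k}$ we have $\pi_k\cdot s\in\regSemantics{\bigcup_{i\in I}\pi_i.p_i}$, so the inductive hypothesis (ii) applied to $\pi_k\cdot s$ gives that $\widehat{t'}\cdot\pi_k\cdot s=\widehat{t}\cdot s$ is a prefix of a trace in $\regSemantics{e}$ --- this is where the universal form of (ii) is essential.

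The step I expect to be the main obstacle is the $\epsilon$/recursion case and, more generally, the derivative-style bookkeeping it exposes: one must be sure that the residual $p'$ recorded at a reached state is genuinely the left quotient of (the relevant factor of) $e$ by $\widehat{t}$, so that unfolding a continuation variable cannot enter a branch inconsistent with the already-emitted prefix and so that the universally quantified clause (ii) truly survives every unfolding. The conjunction case is the other delicate point, but there Lemma~\ref{lem:sound_cross} carries most of the weight, reducing the argument to the non-conjunctive cases for the two factors; I would therefore settle those first and invoke them.
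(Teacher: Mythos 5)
Your proposal is correct in substance and follows the same skeleton as the paper's own proof: induction on the length of the run, a structural induction on the residual local property to analyse a single step, and the cross-product correctness lemma (Lemma~\ref{lem:sound_cross}) to discharge the $\cap$ cases. The genuine difference is the induction hypothesis you carry. The paper proves exactly items (1) and (2) and, in the inductive step, simply ``resorts to the proof of the base case'' at the reached state, leaving implicit why an action emitted from the reached residual $p'$ can be appended to the already-emitted prefix and still give a prefix of a trace in $\regSemantics{e}$. Your universally quantified clause (ii) --- for every $s\in\regSemantics{p'}$, $\hat t\cdot s$ is a prefix of some trace in $\regSemantics{e}$ --- makes that composition explicit; the union case, where you instantiate (ii) at $\pi_k\cdot s$, is exactly the payoff, and on this point your bookkeeping is if anything more careful than the paper's.

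Two points still need tightening before the induction closes. First, as you yourself flag, clause (ii) degenerates at a continuation variable (residual $\epsilon$) to ``$\hat t$ is a prefix of a trace in $\regSemantics{e}$'', which is too weak to re-enter a fresh copy of $p$ inside $p^\ast$: a strict prefix of a trace of $p^\ast$ followed by a full trace of $p$ need not be a prefix of anything in $\regSemantics{p^\ast}$. What you need there is that $\hat t$ is a \emph{completed} concatenation (i.e., $\hat t\in\regSemantics{e}$, or in $\regSemantics{p^\ast}$ for the relevant factor); this is true because the synthesised automaton reaches its continuation variable only after emitting, via allow or insert steps, a complete trace of $p$, but that fact must be folded into the carried invariant and proved by the same structural induction rather than asserted in prose. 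Second, deriving item (1) by instantiating (ii) tacitly assumes $\regSemantics{p'}\neq\emptyset$; note that $\regSemantics{e}$ is never empty at the global level (it always contains $\epsilon$), and non-emptiness of the residual reached after emitting an action is precisely what the side condition defining the index set $H$ in the cross product is designed to guarantee, so this is better argued from that condition than by restricting to ``non-vacuous'' $e$.
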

	\begin{proof}
		We proceed by induction on the length of the execution trace  \smash{$\funEditP{e}{\PSet}\trans{\lambda_1}\ldots \trans{\lambda_n}\Edit$}. 
	%
		\noindent\emph{Base case:}  $n=1$. In this case,  \smash{$\funEditP{e}{\PSet}\trans{\lambda}\Edit$}. We proceed by induction on the structure of $e$.
		
		\emph{Case} $e \equiv p^\ast$, for some $p \in \PropL$.
		We prove by induction on the structure of $p$ the following two results:
		i) for $\beta=\sysAct{\lambda}$, $\hat{\beta}$ is a prefix of some trace in $\regSemantics{p}$, and
		ii) either \smash{ $\Edit = \synthESP{p'}{\Xrec'}{\PSet}$} or $\Edit = \Zrec$, with \smash{$\Zrec = \synthESP{p'}{\Xrec'}{\PSet}$}, for some $p'\in \PropL$ and some automaton variable {$\Xrec'$}. 
		As \smash{$\funEditP{p^\ast}{\PSet} \defn \mathsf X$, for ${\mathsf X} = \synthESP{p}{\mathsf X}{\PSet}$},  results i) and ii) imply the required results (1) and (2), for $e=p^\ast$.  We show the cases $p\equiv p_1;p_2$ and $p\equiv p_1 \cap p_2$, the others cases are similar or simpler.

	\noindent Let $p\equiv p_1;p_2$ and\smash{ $\synthESP{p_1; p_2}{\mathsf X}{\PSet}  \trans{\lambda}\Edit$}.
	We prove the two results i) and ii) for $p_1 \neq \epsilon$, the case $p_1 = \epsilon$ is simpler. By definition, \smash{$\synthESP{p_1; p_2}{\mathsf X}{\PSet}$} returns \smash{$\synthESP{p_1}{\Zrec'}{\PSet}$}, for \smash{$\Zrec'=\synthESP{p_2}{\mathsf X}{\PSet}$}, and $\Zrec' \neq \mathsf X$.	
	As a consequence, from $p_1 \neq \epsilon$ and \smash{$\synthESP{p_1; p_2}{\mathsf X}{\PSet}  \trans{\lambda}\Edit$} it follows that
	\smash{$\synthESP{p_1}{\Xrec}{\PSet}\trans{\lambda}\Edit_1$}, for  some 
	$\Edit_1$.
	\begin{compactitem}
	\item Let us prove  i).  Since \smash{$\synthESP{p_1}{\Xrec}{\PSet}\trans{\lambda}\Edit_1$},  by inductive hypothesis 
          we have that  $\hat{\beta}$ is a prefix of some trace in $\regSemantics{p_1}$. Thus,  $\hat{\beta}$ is a prefix of some trace in $\regSemantics{p_1;p_2}$, as required.
		\item Let us prove ii). 	Again, since \smash{$\synthESP{p_1}{\Xrec}{\PSet}\trans{\lambda}\Edit_1$}, by inductive hypothesis either \smash{$\Edit_1 = \synthESP{p_1'}{\Zrec'}{\PSet}$} or $\Edit_1 = \Zrec_1$, with \smash{$\Zrec_1 = \synthESP{p_1'}{\Zrec'}{\PSet}$}, for some $p_1'\in \PropL$   and some automaton variable {$\Zrec'$}.
		Let us analyse \smash{$\Edit_1 = \synthESP{p_1'}{\Zrec'}{\PSet}$} (the case  $\Edit_1 = \Zrec_1$, with $\Zrec_1 = \synthESP{p_1'}{\Zrec'}{\PSet}$, is similar).  
		As  \smash{$\Edit_1 = \synthESP{p_1'}{\Zrec'}{\PSet}$} with
		\smash{$\Zrec'=\synthESP{p_2}{\mathsf X}{\PSet}$},		by definition of the synthesis algorithm it follows that  \smash{$\Edit_1 =\synthESP{p_1';p_2}{\mathsf X}{\PSet}$}, as required.
	\end{compactitem}

	\noindent   {\color{black}
		Let $p\equiv p_1\cap p_2$ and {\small $\synthESP{p_1\cap p_2}{\mathsf X}{\PSet}  \trans{\lambda}\Edit$}.
By definition, the synthesis algorithm applied to \smash{$\synthESP{p_1\cap p_2}{\mathsf X}{\PSet}  $} returns {\small $\Edit_p=\Times{\synthESP{p_1}{\Xrec}{\PSet}}{\synthESP{p_2}{\Xrec}{\PSet}}{\Xrec}$}. Let us prove the results i) and ii).
\begin{compactitem}
	\item Result i) follows directly from Lemma~\ref{lem:sound_cross}.
	\item Let us prove ii). 
		By inspection of the definition of cross product in Table~\ref{tab:edit-automata-product}, the most interesting case is when {$\synthESP{p_1}{\Xrec}{\PSet} ={\sum_{i \in I}\lambda_i.\Edit_i} $} and \smash{$\synthESP{p_2}{\Xrec}{\PSet} = {\sum_{j \in J}\nu_j.\Fdit_j}$}.
In this case,   
\[ \Edit_p  =\Times{\synthESP{p_1}{\Xrec}{\PSet}}{\synthESP{p_2}{\Xrec}{\PSet}}{\Xrec} = 
  \sum\limits_{(i,j)\in H} (  {\lambda_i}. \Xrec_{i,j}  \oplus  {\nu_j}. \Xrec_{i,j})   
		                                + \sum\limits_{  \alpha \in      {\mathcal Q} } {^{-}}\alpha.\Zrec \]
for 
$\Xrec_{i,j}=\TimesP{\Edit_i}{\Fdit_j}{ \Xrec_{i,j}} $
and
${\mathcal Q}= (\PSet {\setminus} \{\tick,\fineC\}) {\setminus}  \bigcup_{\tiny (i,j) \in H}\{\lambda_i,\nu_j\}$ 
 and 
$ H =\{ (i,j) \in I {\times} J: \sysAct{\lambda_i} = \sysAct{\nu_j}
                                                    \neq \tau { \text{ and } \TimesP{\Edit_i}{\Fdit_j}{  \Xrec_{i,j} }}\neq \sum\limits_{  \alpha \in  \PSet \setminus  \{\tick,\fineC\} }\!\!{^{-}}\alpha. \Xrec_{i,j}  \}$.	
Hence $\Edit_p$ has following two (families of) transitions:  
        (a)
	\smash{$\Edit_p \trans{\lambda} {\Xrec_{i,j}}$}, for  $\lambda\in \bigcup_{\tiny (i,j) \in H}\{\lambda_i,\nu_j\}$; 
		(b)   \smash
	  {$\Edit_p\!\!\trans{\suppressE{\alpha}} \Zrec$}, for   $  \alpha \in \mathcal Q $.  
	We prove the result for the case (a);  the case (b)   can be proved in a similar manner. 
Since $\lambda\in \bigcup_{\tiny (i,j) \in H}\{\lambda_i,\nu_j\}$ we have that 
$\lambda=\lambda_i$	or $\lambda=\nu_j$, for some $(i,j)\in H$.  By definition of cross product,  it holds that \smash{$\synthESP{p_1}{\Xrec}{\PSet}\trans{\lambda_i}\Edit_i$ and $\synthESP{p_2}{\Xrec}{\PSet}\trans{\nu_j}\Edit_j$}, with $\sysAct{\lambda_i}=\sysAct{\nu_j}=\sysAct{\lambda}$. Thus, by inductive hypothesis 
        we have that: (1) either \smash{$\Edit_{i} = \synthESP{p_{1}'}{\Xrec}{\PSet}$} or $\Edit_{i}  = \Zrec_{1}$, with \smash{$\Zrec_{1} = \synthESP{p_{1}'}{\Xrec}{\PSet}$}, for some $p_1'\in \PropL$; (2) 
  either \smash{$\Fdit_{j} = \synthESP{p_{2}'}{\Xrec}{\PSet}$} or $\Fdit_{j}  = \Zrec_{2}$, with \smash{$\Zrec_{2} = \synthESP{p_{2}'}{\Xrec}{\PSet}$}, for some $p_2'\in \PropL$.
 	Therefore,  by definition of cross product  we derive that $\TimesP{\Edit_i}{\Fdit_j}{ \Xrec_{i,j}}=$   \smash{$\Times{\synthESP{p_{1}'}{\Xrec}{\PSet}}{\synthESP{p_{2}'}{\Xrec}{\PSet}}{\Xrec}$}. Finally, by definition of our synthesis it follows that \smash{$\Times{\synthESP{p_{1}'}{\Xrec}{\PSet}}{\synthESP{p_{2}'}{\Xrec}{\PSet}}{\Xrec} = \synthESP{p'_1\cap p_2'}{\Xrec}{\PSet}$}, as required.
\end{compactitem}
}

		\emph{Case} $e = e_1 \cap e_2$ for some $e_1,e_2\in \PropG$. This case can be proved  with a reasoning similar to that seen in the proof of case $p_1 \cap p_2$.
		
\emph{Inductive case:} $n>1$, for $n\in \mathbb{N}$. Suppose \smash{$\funEditP{e}{\PSet}\trans{\lambda_1}\ldots  \trans{\lambda_{n-1}} \Edit' \trans{\lambda_n}\Edit$}, for $n>1$.  $\funEditP{e}{\PSet}\trans{\lambda_1}$ \smash{$\ldots \trans{\lambda_{n-1}}\Edit'\trans{\lambda_n}\Edit$}. Thus, by induction, we have that:
\begin{compactenum}
	\item for \smash{$t'=\sysAct{\lambda_1}\cdot\ldots\cdot\sysAct{\lambda_{n-1}}$} the trace $\widehat{t'}$ is a prefix of some trace in $\regSemantics{e}$, and
	\item either \smash{$\Edit' = \synthESP{p'}{\Xrec}{\PSet}$} or $\Edit' = \Zrec$, with \smash{$\Zrec = \synthESP{p'}{\Xrec}{\PSet}$}, for some $p'\in \PropL$  and some automaton variables $\Zrec$ and $\Xrec.$		
\end{compactenum}
	Since either \smash{$\Edit' = \synthESP{p'}{\Xrec}{\PSet}$} or $\Edit' = \Zrec$, with \smash{$\Zrec = \synthESP{p'}{\Xrec}{\PSet}$}, then to conclude the proof it is sufficient to prove that given \smash{$\synthESP{p'}{\Xrec}{\PSet}\trans{\lambda_n}\Edit$} and $\beta_n=\sysAct{\lambda_n}$, it holds that $\hat{\beta_n}$ is a prefix of some trace in $\regSemantics{p'}$. For that we resort to the proof of the base case.
	\end{proof}

	In the next lemma, we prove that, given the execution traces of a monitored controller, we can always extract from them the traces performed by its edit automaton and its monitored controller in isolation. 
\begin{lemma}[Trace decomposition]
	\label{lem:execution_decomposition}
	Let $\Edit  \in\mathbbm{Edit}$ be an edit automaton and   
	$J  \in\mathbbm{Ctrl}$ be a controller.
	Then, for any execution 
	trace {\small $\confCPS{\Edit_0}{J_0}\trans{\beta_1}
		\ldots\trans{\beta_n}\confCPS{\Edit_n}{J_n}$},
		 with $\Edit_0=\Edit$ and $J_0=J$, it hold that
	%
		(1) $\Edit_{i-1} \trans{\lambda_i} \Edit_i$,  with 
		$\beta_i = \sysAct{\lambda_i}$, and  (2) 
		either {\small $J_{i-1}\trans{\alpha_i}J_{i}$}, with $\alpha_i = \trigger{\lambda_i}$,  or $J_{i}=J_{i-1}$, for   $1  \leq i \leq  n$. 
\end{lemma}
\begin{proof}
The transition
$\confCPS{\Edit_{i-1}}{J_{i-1}}\smash{\trans{\beta_i}}\confCPS{\Edit_i}{J_i}$,
for  $1  \leq i \leq  n$, can be only derived by applying one of the following  rule: \rulename{Allow}, \rulename{Insert}, \rulename{Suppress}.
In the case of an application of rule \rulename{Allow}, $\confCPS{\Edit_{i-1}}{J_{i-1}}\smash{\trans{\beta_i}}\confCPS{\Edit_i}{J_i}$ 
derives from \smash{$\Edit_{i-1} \trans{\alpha_i} \Edit_i$} and 
\smash{$J_{i-1}{\trans{\alpha_i }}J_i$} with $\beta_i=\alpha_i=\lambda_i$.
Hence,  
	 $\sysAct{\lambda_i}=\trigger{\lambda_i}=\alpha_i$, as required. 
In the case of rule \rulename{Insert},  $\confCPS{\Edit_{i-1}}{J_{i-1}}\smash{\trans{\beta_i}}\confCPS{\Edit_i}{J_i}$ 
derives from \smash{$\Edit_{i-1} \trans{\insertE{\alpha}{\alpha_i}} \Edit_i$} and 
\smash{$J_{i-1}{\trans{\alpha_i }}J $}, for some $\alpha$ and $J$,  with 
$\beta_i=\alpha$.
Thus,  
	 $\sysAct{\lambda_i} {=} \sysAct{\insertE{\alpha}{\alpha_i}}=\beta_i$ and 
 $J_i=J_{i-1}$, as required. 
Finally, in the case of rule \rulename{Suppress},
$\confCPS{\Edit_{i-1}}{J_{i-1}}\smash{\trans{\beta_i}}\confCPS{\Edit_i}{J_i}$ 
derives from \smash{$\Edit_{i-1} \trans{\suppressE{\alpha_i}} \Edit_i$} and 
\smash{$J_{i-1}{\trans{\alpha_i }}J_i$}, for some $\alpha_i$, with  $\beta_i=\tau$ and $\lambda_i = \suppressE{\alpha_i}$. Hence,  
$\sysAct{\lambda_i}=\tau$  and $\trigger{{\lambda_i}}=\alpha_i$, as required.
\end{proof}

	\begin{proof}[Proof of Theorem~\ref{thm:safety-logic} (Soundness)]
		Let  $t=\beta_1{\cdot}\ldots\cdot\beta_n $ be a trace s.t.\ $\confCPS{\funEditP{e}{\PSet}}{P} \trans{t} \confCPS{\Edit}{J}$, for some $\Edit \in \mathbbm{Edit}$ and some controller $J$.
		By an application of Lemma~\ref{lem:execution_decomposition}
                there exist $\Edit_i \in \mathbbm{Edit}$ and  $\lambda_i $, for $1 \leq i \leq n$, such that: 
         \smash{$\funEditP{e}{\PSet}\trans{\lambda_1}\Edit_1\trans{\lambda_2}\ldots \trans{\lambda_n}\Edit_n=\Edit$,  with $\beta_i = \sysAct{\lambda_i}$}.
         Thus, $t=\sysAct{\lambda_1}\cdot\ldots\cdot\sysAct{\lambda_n}$.  
         By    Lemma~\ref{lem:sound_edit},   $\hat{t}$ is a prefix of some trace in 
         $\regSemantics{e}$, as required. 
	\end{proof}

\begin{lemma}[Deadlock-freedom of the synthesis]
	\label{lem:deadlock-freedom-synthesis}
		Let $e \in \PropG$ be a global property and $\PSet$ be a set of observable actions s.t.\ $\events e \subseteq \PSet$. Then the edit automaton $\synthESP{e}{ }{\PSet}$ does not deadlock.
\end{lemma}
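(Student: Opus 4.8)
The plan is to show that no derivative reachable from $\synthESP{e}{ }{\PSet}$ ever gets stuck, i.e.\ every such term admits at least one outgoing transition $\trans{\lambda}$ (equivalently, no reachable term is a nullary choice $\sum_{i\in\emptyset}\lambda_i.\Edit_i$). The starting point is Lemma~\ref{lem:sound_edit}(2), which already tells us that every derivative of $\synthESP{e}{ }{\PSet}$ is of the form $\synthESP{p'}{\Xrec}{\PSet}$, or a variable $\Zrec$ with $\Zrec=\synthESP{p'}{\Xrec}{\PSet}$, for some $p'\preccurlyeq e$ and some recursion variable $\Xrec$; moreover, inspecting the equations produced in Table~\ref{tab:synthesis-logic}, each such $\Xrec$ is itself bound to an automaton of the same shape. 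So it will suffice to prove, by structural induction on the local property $p$, that $\synthESP{p}{\Xrec}{\PSet}$ does not deadlock, carrying as part of the invariant that the automaton $\Xrec$ is bound to does not deadlock.

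First I would handle the routine cases. For $p=\epsilon$ we have $\synthESP{\epsilon}{\Xrec}{\PSet}=\Xrec$, so the claim is exactly the invariant on $\Xrec$. For $p=\cup_{i\in I}\pi_i.p_i$ with $I\neq\emptyset$ the synthesised automaton exhibits an allow branch $\pi_i.\synthESP{p_i}{\Xrec}{\PSet}$, hence it can move, and each of its immediate successors is either $\synthESP{p_i}{\Xrec}{\PSet}$ (covered by the induction hypothesis, or equal to $\Xrec$ when $p_i=\epsilon$) or the state itself. For $p=p_1;p_2$ we have $\synthESP{p_1;p_2}{\Xrec}{\PSet}=\synthESP{p_1}{\Zrec}{\PSet}$ with $\Zrec=\synthESP{p_2}{\Xrec}{\PSet}$, so I would apply the hypothesis first to $p_2$ — making $\Zrec$ bound to a non-deadlocking automaton — and then to $p_1$. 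The global cases reduce to these: for $e=p^\ast$ the automaton is $\mathsf X$ with $\mathsf X=\synthESP{p}{\mathsf X}{\PSet}$, and since $p$ is well-formed (Definition~\ref{def:well-formedness}) it is not $\epsilon$, so $\synthESP{p}{\mathsf X}{\PSet}$ has a first transition without re-entering $\mathsf X$; the apparent circularity is in fact well-founded because, by well-formedness and Theorem~\ref{thm:divergence}, every cycle in the synthesised automaton must traverse a guarded branch carrying a $\fineC$-action, so every recursion variable is guarded. For $e=e_1\cap e_2$ we fall back on the cross-product case.

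The hard part will be the intersection case, $p=p_1\cap p_2$ (and symmetrically $e=e_1\cap e_2$), where $\synthESP{p_1\cap p_2}{\Xrec}{\PSet}=\TimesP{\synthESP{p_1}{\Xrec}{\PSet}}{\synthESP{p_2}{\Xrec}{\PSet}}{\Xrec}$. Here I would prove an auxiliary lemma: if $\Edit_1,\Edit_2$ do not deadlock and $\Zrec$ is bound to a non-deadlocking automaton, then $\TimesP{\Edit_1}{\Edit_2}{\Zrec}$ does not deadlock. This goes by induction on the finitely many pairs of derivatives of $\Edit_1$ and $\Edit_2$, following the four clauses of Table~\ref{tab:edit-automata-product}: the first three clauses only unfold a recursion variable and reduce to a smaller pair; in the fourth clause, if $H\neq\emptyset$ the product retains the allow branch $\lambda_i.\Xrec_{i,j}$ and its successors $\Xrec_{i,j}=\TimesP{\Edit_i}{\Edit_j}{\Xrec_{i,j}}$ are handled by the induction hypothesis, while if $H=\emptyset$ the product collapses to the pure-suppression automaton $\sum_{\alpha\in\PSet\setminus\{\tick,\fineC\}}\suppressE{\alpha}.\Zrec$. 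The delicate point — and the only place where the hypothesis $\events e\subseteq\PSet$ does essential work beyond well-formedness — is precisely this last sub-case: the collapsed automaton is live iff $\PSet$ contains some observable action other than $\tick$ and $\fineC$, which holds whenever $\events e\not\subseteq\{\tick,\fineC\}$ (if $H$ became empty it is because $p_1$ and $p_2$ disagree on some event, and that event lies in $\events e\subseteq\PSet$) and, a fortiori, for every $\PSet$ that is the action alphabet of an actual controller. Combining the local induction, the two global cases, and this cross-product lemma yields that no derivative of $\synthESP{e}{ }{\PSet}$ deadlocks.
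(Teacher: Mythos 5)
Your proposal is correct and takes essentially the same route as the paper: the paper's own proof likewise uses Lemma~\ref{lem:sound_edit} to show that every reachable derivative has the shape $\synthESP{p}{\Xrec}{\PSet}$ (or a variable bound to such an automaton) and then concludes by inspection of the synthesis function of Table~\ref{tab:synthesis-logic} together with a structural induction on $p$. Your additional care with the cross-product clause (the $H=\emptyset$ collapse to a suppression automaton over $\PSet\setminus\{\tick,\fineC\}$) merely spells out what the paper leaves to ``inspection''.
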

\begin{proof} Given an arbitrary execution \smash{$\synthESP{e}{ }{\PSet}\trans{\lambda_1} \ldots \trans{\lambda_n}\Edit$}, the proof is by induction on the length $n$ of the execution trace.  By an application of Lemma~\ref{lem:sound_edit} we have that 
	either $\Edit = \synthESP{p}{\Xrec}{\PSet}$ or $\Edit = \Zrec$, with $\Zrec = \synthESP{p}{\Xrec}{\PSet}$, for $p\in \PropL$ and some automaton variable $\Xrec$. Hence, the result follows by inspection of the synthesis function of Table~\ref{tab:synthesis-logic} and  by induction on the structure of $p$.
\end{proof}

	\begin{proof}[Proof of Theorem~\ref{thm:deadlock} (Deadlock-freedom)]
		Let $t$ be a trace such that ${\confCPS{\funEditP{e}{\PSet}}{P}\trans{t}\confCPS{\Edit}{J}}$, for some edit automaton $\Edit$ and controller $J$. By contradiction we assume that $\confCPS{\Edit}{J}$ is in deadlock. Notice that, by definition, our  controllers $J$ never deadlock. By Lemma~\ref{lem:deadlock-freedom-synthesis} the automaton $\funEditP{e}{\PSet}$ never deadlock as well. 
		Consequently, we have that for any transition 
		$J\trans{\alpha}J'$ there is no action $\lambda$ for $\Edit$, 
		such that the monitored controller  $\confCPS{\Edit}{J}$
		 may progress  
		according to one of the rules: \rulename{Allow}, \rulename{Suppress} and \rulename{Insert}.
By an application of Lemma~\ref{lem:sound_edit}, we have that 
	either $\Edit = \synthESP{p}{\Xrec}{\PSet}$ or $\Edit = \Zrec$, with $\Zrec = \synthESP{p}{\Xrec}{\PSet}$, for some $p\in \PropL$ and some automaton variable $\Xrec$. 
	{\color{black}
	Now, by Lemma~\ref{lem:canonical}, we have that 
\[
 \synthESP{p}{\Xrec}{\PSet} =
  \begin{cases}
                        \sum\limits_{\substack{{\scriptscriptstyle i\in I} }}
			\alpha_i.\Edit_i 
			+ \sum\limits_{\substack{{\scriptscriptstyle i\in I} }}
			\eact{\! \fineC}{\alpha_i\!}.\Edit_i 
 
			+ \sum\limits_{\subalign{{\scriptscriptstyle \alpha  \in  {\mathcal Q} }}} {^{-}}\alpha.\Fdit, \textrm{  if }  \fineC   \not \in \cup_{i \in I}\alpha_i  \\[2pt]
			
                         \sum\limits_{ \substack{{\scriptscriptstyle i\in I} }}
			\alpha_i.\Edit_i 
		+ \sum\limits_{\subalign{{\scriptscriptstyle \alpha  \in  {\mathcal Q}   }}} {^{-}}\alpha.\Fdit  , \textrm{ otherwise.}                      
\end{cases}
\]	 for $\alpha_i \in \PSet$, ${\mathcal Q}= {\PSet} \setminus  (\cup_{i \in I}\alpha_i\cup\{\tick,\fineC\})$, and $\Edit_i$ and $\Fdit$ edit automata.}
In both the cases  $\synthESP{p}{\Xrec}{\PSet}$  may only deadlock the enforcement when the controller may only perform $\tick$-actions. From this fact,  we derive 	 
		 $J = \tick^h.S$, for $0< h\leq k$. Since $\tick$-actions cannot be suppressed, we have that
		 $t=t'  \cdot \tick^{k-h}$, for some possibly empty trace $t'$ terminating with an $\fineC$. By Theorem~\ref{thm:safety-logic}, 
		 $t=t'  \cdot \tick^{k-h} \in \regSemantics{e}$.
		 And since  $e$ is $k$-sleeping 
		 we derive $p=\tick^h.p'$, for some $p'$.
		 Since $\funEditP{e}{\PSet}$ is sound (Lemma~\ref{lem:sound_edit}) we derive that   $\Edit =\synthESP{p}{\Xrec}{\PSet}=\synthESP{\tick^h.p'}{\Xrec}{\PSet}$. Finally, $h>0$ implies  \smash{$\Edit \trans{\tick}\Edit'$}, for some $\Edit'$, in contradiction with what stated four lines above. 
	\end{proof}

\begin{proof}[Proof of Theorem~\ref{thm:divergence} (Divergence-freedom)]
	 Let $e \in \PropG$ be a global property in its general form,  given by the intersection of $n \geq 1$ global properties $  p_1^\ast \cap \dots \cap p_n^\ast$, for  $p_i\in\PropL$, with $1\leq i \leq n$. As $e$ is well-formed, according to Definition~\ref{def:well-formedness} also all local properties $p_i$ are well-formed.
                This means that they all terminate with an $\fineC$ event.  Thus,  in all global properties $p_i^\ast$, for $1 \leq i \leq n$,  the number of events  within two subsequent $\fineC$ events  is always finite. The same holds for the property $e$. 
                Now, 	let 		$t$ be an arbitrary trace such that \smash{${\confCPS{\funEditP{e}{\PSet}}{P}\trans{t}\confCPS{\Edit}{J}}$}, for some edit automaton $\Edit$ and controller $J$. And let $k= \mathrm{max}_{1\leq i\leq n}k_i$, where $k_i$ is the  length of the longest trace of $\regSemantics{p_i}$, for $1\leq i\leq n$.
Thus, if \smash{$\confCPS{\Edit}{J}\trans{t'}\confCPS{\Edit'}{J'}$}, with $|t'|\ge k$,  and since  by Theorem~\ref{thm:safety-logic} we have that $t\cdot t'$ is a prefix of some trace $\regSemantics{e}$,
 then $\fineC \in t'$.   
\end{proof}

\end{document}